\definecolor{mydarkblue}{rgb}{0,0.08,0.45}
\newcommand{\eps}{\epsilon}
\newcommand{\X}{\mathcal{X}}
\newcommand{\K}{\mathcal{K}}
\newcommand{\T}{\mathcal{T}}
\newcommand{\D}{\mathcal{D}}
\newcommand{\A}{\mathcal{A}}
\newcommand{\F}{\mathcal{F}}
\newcommand{\calH}{\mathcal{H}}
\newcommand{\MMD}{\mathrm{MMD}}
\newcommand{\bX}{\mathbf{X}}
\newcommand{\bY}{\mathbf{Y}}
\newcommand{\Exp}{\mathbb{E}}
\newcommand{\up}{\text{UP}}
\renewcommand{\epsilon}{\varepsilon}
\newcommand{\eeps}{e^{\epsilon}}
\newcommand{\expected}[2]{\mathbb{E}_{#1}\left[ #2 \right]}
\newcommand{\PP}{\mathbb{P}}
\newtheorem{definition}{Definition}[section]
\newtheorem{theorem}{Theorem}[section]
\newtheorem{remark}{Remark}[section]
\newtheorem*{theorem*}{Theorem}
\newtheorem*{proposition*}{Proposition}
\newtheorem*{lemma*}{Lemma}
\DeclareMathOperator{\TV}{TV}
\title{Sequentially Auditing Differential Privacy}
\author{Tom\'as Gonz\'alez \\
  Carnegie Mellon University \\
  \texttt{tcgonzal@andrew.cmu.edu} \\
  \And
  Mateo Dulce Rubio\\
  New York Universiy\\
  \texttt{mateo.d@nyu.edu} \\ \\%
  \AND
  Aaditya Ramdas \\
  Carnegie Mellon University \\
  \texttt{aramdas@cs.cmu.edu} \\%
  \And
  M\'onica Ribero \\
  Google Research\\
  \texttt{mribero@google.com} \\%
}
\begin{document}
\maketitle
\begin{abstract}%
We propose a practical sequential test for auditing differential privacy guarantees of black-box mechanisms. The test processes streams of mechanisms' outputs providing anytime-valid inference while controlling Type I error, overcoming the fixed sample size limitation of previous batch auditing methods. Experiments show this test detects violations with sample sizes that are orders of magnitude smaller than existing methods, reducing this number from 50K to a few hundred examples, across diverse realistic mechanisms. Notably, it identifies DP-SGD privacy violations in \textit{under} one training run, unlike prior methods needing full model training.
\end{abstract}

\section{Introduction}

Auditing privacy guarantees of algorithms that process  sensitive data, prevalent in domains like finance, healthcare, and users' behavior on the web, is crucial for building and deploying reliable and trustworthy software. While rigorous privacy protections like Differential Privacy (DP) \cite{dwork2006} have been adopted by industry \cite{Apple, GoogleDP, LinkedInEng} and government agencies \cite{USCensusBureau}, the promised theoretical guarantees rely on both correct algorithmic design (e.g., avoiding flaws or errors in mathematical proofs, like those demonstrated with the Sparse Vector Technique \cite{LSL17}) and correct implementation, where subtle bugs can compromise privacy (e.g., missing constants, incorrect sampling, or fixed seeds). Furthermore, theoretical privacy parameters $\epsilon$ and  $\delta$ often represent worst-case upper bounds, motivating the need for empirical privacy auditing to assess practical privacy leakage and gain intuition about algorithms' vulnerabilities \cite{2023tightauditingDPML}.

Auditing privacy can be framed as a statistical hypothesis test: distinguishing the null hypothesis $H_0$: ``the algorithm satisfies a claimed privacy guarantee'',  from the alternative $H_1$: ``the algorithm violates the claimed privacy guarantee''. Current approaches to this testing problem generally fall into two categories, each with significant drawbacks. First, parametric tests can achieve high statistical power but rely on strong, often unverifiable, assumptions about the specific workings and output distribution of the mechanism or the specific flaw \cite{2023tightauditingDPML, DMNS06}. Second, black-box tests, where the auditor observes only the mechanism's outputs without knowledge of its internal functioning, make fewer assumptions but requires 
an unknown and typically large number of samples 
to draw statistically meaningful conclusions \cite{dpsniper, 2024DPAuditorium}. This 
makes them impractical for auditing complex, resource-intensive algorithms such as differentially private stochastic gradient descent (DP-SGD), where obtaining even one sample often requires significant computation (e.g., backpropagation through millions of model parameters).

In this paper, we propose a new framework for approximate DP auditing that leverages recent advances in sequential hypothesis testing to overcome these limitations. Our approach uses e-values \cite{2024e-vals-book}, non-negative random variables that (when multiplied)  \textit{sequentially} accumulate evidence against the null hypothesis. Under the null $H_0$, the expectation of an e-value is bounded by 1; under the alternative $H_1$, well-designed (products of) e-values can grow exponentially fast, allowing for early stopping and efficient detection of privacy violations. This sequential methodology allows testing to proceed adaptively: samples are collected and evaluated iteratively, and the test stops as soon as a significance level $\alpha$ is reached. Notably, sequential tests automatically adapt to the unknown sample complexity of the problem, eliminating the need to fix the test's sample size a priori and avoiding unnecessary computation.

Our specific test statistics are built upon the Maximum Mean Discrepancy (MMD) \cite{gretton2012kernel}, a powerful kernel-based metric for comparing probability distributions.
MMD  is particularly well-suited for the two-sample tests emerging in privacy auditing, as bounds on MMD can be easily translated to approximate DP parameters \cite{2024DPAuditorium}. 
Further, MMD is highly flexible. It can incorporate prior knowledge in white-box settings---with access to intermediate gradients or the noise distribution family---or, as our experiments show, operate effectively in a black-box manner.

The main contributions of this work are summarized as follows:
\begin{itemize}
    \item We introduce a new general MMD-based one-sided sequential testing framework. As a key application, we instantiate this framework for auditing approximate DP mechanisms (see \Cref{sec:seq_aud}). Our method enables anytime valid inference while controlling Type I error, eliminating the need for pre-defined test sample sizes, and ensuring a bounded expected stopping time under the alternative. We establish these theoretical guarantees in Theorems~\ref{thm:abstract-test} and \ref{thm:statistical_properties}. 
    \item We present a new connection between MMD and Hockey-Stick divergence in \Cref{thm:MMDimproved} that allows for the translation of MMD-based hypothesis tests to approximate DP auditing tests. The connection presented is tighter than previous bounds, increasing test power. Our experiments show that previous tests requiring hundreds of thousands of samples while still failing to reject, now reject with under a thousand samples. 
    \item We validate our methods on common DP mechanisms with Gaussian and Laplace noise. We further demonstrate efficacy on auditing benchmark algorithms \citep{2024DPAuditorium, dpsniper} and provide  results for the challenging case of DP-SGD \cite{abadi2016deep, song2013stochastic}, showcasing the practical benefits of early failure detection enabled by our sequential approach.
\end{itemize}

\paragraph{Related work. }

 At the core of approximate DP auditing is the estimation of the effective privacy loss from samples of the mechanism. There is substantial prior work on this, primarily situated within the batch setting, where a fixed number of samples are drawn \cite{GM18, DJT13, DGKK22, StatDP, dpsniper, niu2022dp, lokna2023group, lu2022eureka}. Also in the batch setting,  a significant body of work specializes on auditing machine learning models, particularly those trained with DP-SGD. This includes membership inference attacks (MIA) \cite{JE19, ROF21, CYZF20, jagielski2020auditing, 2023tightauditingDPML} and data reconstruction attacks \cite{Guo22, Balle22, mahloujifar24auditingFDP}.

Our approach relies on sequential testing with kernel methods, specifically the Maximum Mean Discrepancy (MMD) \cite{gretton2012kernel}, to construct test statistics.  Prior work on privacy auditing has also used kernel methods, such as estimating regularized kernel Rényi divergence \cite{DM22}, but often requires strong assumptions (e.g., knowledge of covariance matrices) impractical in black-box settings or for mechanisms beyond Gaussian or Laplace.

 To the best of our knowledge, applying modern sequential hypothesis testing techniques, particularly those based on e-values or test supermartingales \cite{testingBetting, ramdas2020admissible}, to the general problem of black-box DP auditing is novel. However, sequential testing by betting ideas have been successfully applied for auditing in other settings like elections~\cite{waudby2021rilacs}, finance~\cite{shekhar2023risk}, fairness~\cite{chugg2023auditing}, and language models~\cite{richter2024auditing}. The most closely related work is~\cite{richter2024auditing} as they also propose a one-sided test, although for detecting distribution shifts in language models. 
 While their setting is similar, their testing procedure differs from ours in key ways, allowing us to establish results they do not---such as bounds on the expected number of samples until rejection under the alternative. We elaborate on these differences in \Cref{sec:abstract_test}. 
 
 In \Cref{sec:more-related-work} we discuss in detail more references and how they compare to our work.

\section{Preliminaries}
\label{sec:preliminaries}

\paragraph{Notation. }
Throughout the paper we let $\D \subseteq \cup_{d \in \mathbb{N}}\mathbb{R}^{p\times d}$ be a set of datasets; datasets $D\in \D$  have a finite but arbitrary number of $p$-dimensional records. We denote by $\A: \D \mapsto \mathcal{X}$ randomized mechanisms that map datasets to a range $\X$. We say that  $S, S' \in \D$ are neighboring datasets (denoted by $S \sim S'$) if they differ by at most one data point. While our framework is agnostic to the definition of ``neighboring'', for simplicity (and in our experiments) we  assume $S\sim S'$ in the add/remove framework, where $S'$ can be obtained by adding or removing exactly one record from $S$. Given a reproducing kernel Hilbert space (RKHS) $\calH$, let $0_\calH \in \calH$ be the constant zero function.

\begin{definition}
    \label{def:hockey-stick}
    The Hockey-Stick divergence between $P$ and $Q$ (of order $\eeps$) is the $f$-divergence with $f(x) = \max\{0, x-\eeps \}$. Specifically, 
\begin{equation}
    \label{eq:hockey-stick}
    D_{\eeps}(P||Q):= \expected{Q}{f\left(\frac{dP}{dQ} \right)}.
\end{equation}
\end{definition}

\begin{definition}[{\bf Approximate Differential Privacy} \cite{dwork2006}]\label{def:DP}  A randomized algorithm $\mathcal{A}:\D \mapsto \mathcal{X}$ is $(\varepsilon,\delta)$-differentially private
if for any pair of neighboring datasets $S$ and $S'$ and any event $\mathcal{E}\subseteq \mathcal{X}$, $\mathbb{P}[\mathcal{A}(S)\in\mathcal{E}] \leq e^{\varepsilon}\mathbb{P}[\mathcal{A}(S')\in \mathcal{E}] + \delta$. 
Equivalently, $D_{e^{\epsilon}}(\A(S) ~||~ \A(S')) \leq \delta$.  
\end{definition}

The sequential tests we propose require estimating the \textit{witness function} for a given divergence of interest. Intuitively, this witness function can be thought as the function that highlights the maximum difference between two distributions as measured by the underlying divergence. The definition of approximate DP uses the Hockey-Stick divergence, whose witness function is not very easy to work with as it can be highly non-smooth. Instead, we will use the MMD as a notion of the distance between the distributions $\A(S)$ and $\A(S')$, and leverage the connection of MMD to approximate DP introduced in \Cref{thm:MMDimproved}.

\begin{definition}
    \label{def:MMD}
    Let $\calH$ be a reproducing kernel Hilbert space with domain $\mathcal{X}$ and kernel $K(\cdot,\cdot)$ such that $K(x,x) \leq 1$ for all $x\in \mathcal{X}$. Given two distributions $\PP, \mathbb{Q}$ supported on $\X$,  define 
    \[\MMD(\PP, \mathbb{Q}) := \sup_{\|f\|_\calH \leq 1} \Exp_{X,Y \sim (\PP, \mathbb{Q})}[f(X) - f(Y)].\]
    The solution $f^*$ achieving the supremum is called the witness function.

\end{definition}

Below, we introduce a definition and a theorem from the literature of supermartingales which are key to the design and analysis of our sequential tests. 

\begin{definition}[Nonnegative supermartingale]
    An integrable stochastic process $\{\K_t\}_{t\ge 0}$ is a nonnegative supermartingale if $\K_t \ge 0$ and $\Exp[\K_t \mid \K_1,...,\K_{t-1}] \leq \K_{t-1}$, where the inequalities are meant in an almost sure sense. 
\end{definition}

\begin{theorem}[Ville's inequality \cite{ville1939etude}]\label{def:villesineq}
    For any nonnegative supermartingale $\{\K_t\}_{t\ge 0}$ and  $\alpha > 0$, $\PP[\exists t \ge 0: \K_t \ge 1/\alpha]\leq \alpha \Exp[\K_0]$. 
\end{theorem}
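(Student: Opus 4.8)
The plan is to prove this maximal inequality via an optional-stopping argument, the standard route for supermartingale tail bounds. First I would introduce the hitting time
\[
\tau := \inf\{t \ge 0 : \K_t \ge 1/\alpha\},
\]
with the convention $\inf \emptyset = +\infty$, and observe that the target event $\{\exists t \ge 0 : \K_t \ge 1/\alpha\}$ is precisely $\{\tau < \infty\}$. Since probability is continuous along increasing sequences of events and $\{\tau \le t\}$ increases to $\{\tau < \infty\}$ as $t \to \infty$, it suffices to bound $\PP[\tau \le t]$ uniformly in $t$ and then pass to the limit.

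The key step is to argue that the stopped process $\{\K_{t \wedge \tau}\}_{t \ge 0}$ is again a nonnegative supermartingale. This holds because at each step the increment $\K_{(t+1)\wedge\tau} - \K_{t\wedge\tau}$ is either zero (on $\{\tau \le t\}$) or equals the original supermartingale increment (on $\{\tau > t\}$, an event determined by the past values $\K_1,\dots,\K_t$), so the one-step conditional expectation is still dominated by the previous value, and nonnegativity is inherited directly. Consequently $\Exp[\K_{t\wedge\tau}] \le \Exp[\K_0]$ for every $t$.

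It then remains to extract the tail bound through a Markov-type argument restricted to the event that the threshold has been crossed. Because the process is nonnegative, I would lower-bound
\[
\Exp[\K_{t\wedge\tau}] \;\ge\; \Exp\!\left[\K_{t\wedge\tau}\,\mathbbm{1}\{\tau \le t\}\right] \;=\; \Exp\!\left[\K_\tau\,\mathbbm{1}\{\tau \le t\}\right] \;\ge\; \frac{1}{\alpha}\,\PP[\tau \le t],
\]
where the equality uses $t\wedge\tau = \tau$ on $\{\tau \le t\}$, and the final inequality uses $\K_\tau \ge 1/\alpha$ on $\{\tau \le t\}$, valid by the definition of $\tau$ in discrete time. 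Combining with the supermartingale bound gives $\PP[\tau \le t] \le \alpha\,\Exp[\K_0]$, and letting $t \to \infty$ yields the claim.

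I expect the main subtlety to lie in the discrete-time justification that the stopped process remains a supermartingale and that $\K_\tau$ genuinely attains the threshold at the crossing time; both are clean here precisely because the process is indexed by integers and is nonnegative, which sidesteps the integrability and path-continuity technicalities that would otherwise arise in a continuous-time formulation.
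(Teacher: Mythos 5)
Your proof is correct. Note that the paper does not prove this statement at all---it imports Ville's inequality as a known result from the literature (citing Ville, 1939), so there is no in-paper argument to compare against. Your route is the standard one: stop the process at the hitting time $\tau = \inf\{t \ge 0 : \K_t \ge 1/\alpha\}$, verify that $\{\K_{t\wedge\tau}\}_{t\ge 0}$ remains a nonnegative supermartingale (which works here because $\{\tau > t\}$ is measurable with respect to $\sigma(\K_0,\dots,\K_t)$, matching the paper's filtration convention), apply the Markov-type bound $\Exp[\K_{t\wedge\tau}] \ge \Exp[\K_\tau \mathbbm{1}\{\tau \le t\}] \ge \frac{1}{\alpha}\PP[\tau \le t]$, and pass to the limit by continuity from below. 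All steps are sound, and your observation that discrete time makes the threshold-attainment at $\tau$ and the stopped-supermartingale property clean (no path-regularity or integrability subtleties) is exactly right.
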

\section{Sequential Auditing}\label{sec:seq_aud}

This section formalizes the proposed auditing by betting framework. We start by formally introducing the problem statement and continue by introducing an MMD-based test statistic. Subsequently, we develop Algorithm \ref{alg:seq-dp-testing}, a sequential test for privacy auditing. We then provide formal guarantees on controlling Type I error uniformly over the null,  and demonstrating exponential growth under the alternative, which implies a finite stopping time when detecting faulty algorithms. We conclude this section with a comparative discussion of two subroutines within the main sequential algorithm.

Auditing DP can be seen as a two-step procedure: (1) identifying worst-case 
neighboring datasets $S\sim S'$ that maximize the discrepancy between the distributions $\A(S)$ and $\A(S')$, and (2) testing whether the privacy guarantee holds for $S\sim S'$ while controlling Type I error at level $\alpha$. In this work we focus on the testing problem; that is, throughout the remainder of this work we assume 
the pair $S\sim S'$ is 
fixed. While we do not require access to the worst-case pair $S\sim S'$, the statistical power of our test increases with larger $\MMD(\A(S), \A(S'))$ (see \Cref{thm:statistical_properties}).

\subsection{Problem statement}

\begin{definition}{\textbf{(Hockey-Stick approximate DP test)}}
    \label{def:two-sample-test}
    We consider an auditor that employs a binary hypothesis test $\phi$ designed to evaluate whether an algorithm
$\A$ satisfies the $(\epsilon, \delta)$-approximate DP guarantee on neighboring datasets $S\sim S'$. The test has access to streams of i.i.d. observations $\bX=(X_1, X_2,...) \overset{iid}{\sim}\A(S),  \bY = (Y_1, Y_2, ...)\overset{iid}{\sim}\A(S')$ from $\A$ evaluated on  $S$ and $S'$, respectively. The goal of the test is to distinguish between the following two hypotheses: 
     \begin{equation*}
         H_0: \quad  D_{\eeps}(\A(S)~||~\A(S')) \leq \delta, \qquad  H_1: \quad  D_{\eeps}(\A(S)~||~\A(S')) > \delta.
     \end{equation*}
     The auditor rejects $H_0$ when $\phi(\bX, \bY)=1$ and fails to reject it otherwise. 
\end{definition}

The Hockey-Stick divergence relies on the likelihood ratio of the potentially unknown distributions $\A(S)$ and $\A(S')$; consequently, it can be very challenging in practice to develop high-power tests with a reasonable number of samples \cite{2024DPAuditorium} for the test described in \Cref{def:two-sample-test}. \Cref{thm:MMDimproved} below introduces a tighter connection between the Hockey-Stick divergence and the MMD compared to previous bounds  \cite{2024DPAuditorium}. This tighter connection enables the development of an MMD-based hypothesis test for approximate differential privacy. The improved bound inherently increases the power of kernel-MMD-based Hockey-Stick tests (see \Cref{appendix:mmd_bound}).

\begin{theorem}\label{thm:MMDimproved} (Improvement of \cite{2024DPAuditorium} Theorem 4.13)
    Let $\mathcal{H}$ be a reproducing kernel Hilbert space with domain $\mathcal{X}$ and kernel $K(\cdot,\cdot)$ such that $0\leq K(x,y) \leq 1$ for all $x,y\in \mathcal{X}$. If mechanism $\A$ is $(\epsilon, \delta)$-DP, then for any $S \sim S'$,
    \begin{equation}
        \MMD(\A(S), \A(S')) \leq \sqrt{2}\left(1-\frac{2(1-\delta)}{1+e^\eps}\right).
    \end{equation}
\end{theorem}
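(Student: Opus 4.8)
The plan is to split the claim into two independent reductions: first, an upper bound of the MMD by the total variation distance with the sharp constant $\sqrt{2}$, valid for any kernel with $0\le K\le 1$; and second, a bound on the total variation distance between two $(\eps,\delta)$-indistinguishable distributions. Writing $P=\A(S)$ and $Q=\A(S')$, I would begin with the reproducing property to identify $\MMD(P,Q)$ with the RKHS norm of the mean embeddings, $\MMD(P,Q)=\|\mu_P-\mu_Q\|_{\calH}$, where $\mu_P=\Exp_{X\sim P}[K(X,\cdot)]$ and likewise for $Q$; the supremum in \Cref{def:MMD} is attained at $f^*=(\mu_P-\mu_Q)/\|\mu_P-\mu_Q\|_{\calH}$.

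For the first reduction I would take the Jordan decomposition of the signed measure $P-Q=\nu_+-\nu_-$, so that $\nu_+,\nu_-$ are nonnegative, mutually singular, and each of total mass $\TV(P,Q)=:\tau$. By linearity $\mu_P-\mu_Q=u-v$ with $u=\int K(x,\cdot)\,d\nu_+(x)$ and $v=\int K(x,\cdot)\,d\nu_-(x)$, whence $\MMD(P,Q)^2=\|u\|_\calH^2+\|v\|_\calH^2-2\inner{u,v}_\calH$. Here the two kernel hypotheses enter in complementary ways: $K\le 1$ gives $\|u\|_\calH^2=\iint K\,d\nu_+\,d\nu_+\le\tau^2$ and similarly $\|v\|_\calH^2\le\tau^2$, while $K\ge 0$ gives $\inner{u,v}_\calH=\iint K\,d\nu_+\,d\nu_-\ge 0$. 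Combining yields $\MMD(P,Q)^2\le 2\tau^2$, i.e. $\MMD(P,Q)\le\sqrt{2}\,\TV(P,Q)$. I expect this to be the crux and the source of the improvement over \cite{2024DPAuditorium}: dropping nonnegativity of $K$ would only give $\inner{u,v}_\calH\ge-\tau^2$ and hence the weaker constant $2$, while bounding the witness merely by $|f^*|\le 1$ (discarding the quadratic $\ell_2$ structure) is too lossy, producing a bound that scales like $\sqrt{\tau}$ and is far from the claim for small $\eps$.

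For the second reduction I would bound $\TV(P,Q)=\sup_{\E}\big(P(\E)-Q(\E)\big)$ using that $\A$ is $(\eps,\delta)$-DP on the pair in \emph{both} directions (since $S\sim S'$ iff $S'\sim S$). Fixing a candidate event $\E$ and writing $p=P(\E),\,q=Q(\E)$, the forward guarantee gives $p-q\le(\eeps-1)q+\delta$, while applying the guarantee to $\E^c$ in the reverse direction gives $p-q\le\big((\eeps-1)(1-q)+\delta\big)/\eeps$. The first bound increases in $q$ and the second decreases in $q$, so the worst case is their crossing point $q=(1-\delta)/(\eeps+1)$, at which both equal $\tfrac{\eeps-1+2\delta}{\eeps+1}$; this is precisely where two-sidedness is essential, since the single forward constraint alone yields the strictly larger value $1-\eeps^{-1}(1-\delta)$ and would not suffice.

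Finally I would combine the two reductions and simplify via the identity $\tfrac{\eeps-1+2\delta}{\eeps+1}=1-\tfrac{2(1-\delta)}{1+\eeps}$ to obtain $\MMD(\A(S),\A(S'))\le\sqrt{2}\big(1-\tfrac{2(1-\delta)}{1+\eeps}\big)$, as claimed. The only subtle points to verify carefully are the attainment and well-definedness of the mean embeddings (immediate from boundedness of $K$) and that the Jordan masses are exactly $\TV(P,Q)$ in the convention $\TV=\sup_\E(P(\E)-Q(\E))$; everything else is elementary.
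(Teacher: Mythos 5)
Your proof is correct, and it shares the paper's high-level decomposition---bound $\MMD$ by $\sqrt{2}\,\TV$ and then bound $\TV$ using the DP guarantee---but both constituent steps are established by genuinely different arguments. For the first step, the paper's proof couples $P=\A(S)$ and $Q=\A(S')$: by Jensen's inequality, $\MMD(P,Q)\le \Exp_{(X,Y)\sim\pi}\left[\norm{K(\cdot,X)-K(\cdot,Y)}_\calH\right]\le \sqrt{2}\,\PP_\pi[X\neq Y]$, where $\sqrt{2}$ comes from the pointwise bound $K(x,x)+K(y,y)-2K(x,y)\le 2$, and it then invokes the optimal-coupling characterization $\inf_\pi \PP_\pi[X\neq Y]=\TV(P,Q)$. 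Your Jordan-decomposition argument instead expands $\norm{\mu_P-\mu_Q}_\calH^2$ bilinearly against $\nu_+-\nu_-$, using $K\le 1$ for the two squared terms and $K\ge 0$ for the cross term; this is equally valid, avoids importing the coupling characterization of $\TV$, and makes transparent exactly where each kernel hypothesis enters (the paper uses the same two hypotheses, but buried inside the pointwise RKHS-distance bound). For the second step, the paper simply cites Remark A.1.b of Kairouz et al.\ for $\TV(P,Q)\le 1-\tfrac{2(1-\delta)}{1+\eeps}$, whereas you re-derive it from the two-sided DP constraints; your crossing-point computation is correct (the min of an increasing and a decreasing bound in $q$ is maximized at $q=(1-\delta)/(\eeps+1)$, where both equal $\tfrac{\eeps-1+2\delta}{\eeps+1}$), so your write-up is self-contained where the paper's is not. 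One small point to tidy: at $\eps=0$ the crossing-point formula divides by $\eeps-1=0$, but there both bounds are identically $\delta$ and the claim holds trivially, so only a one-line remark is needed.
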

\Cref{thm:MMDimproved} offers a strict improvement over \cite[Theorem 4.13]{2024DPAuditorium}. Further, the new bound has the nice property of not becoming vacuous as $\eps$ increases. In fact, it approaches $\sqrt{2}$ as $\varepsilon \to \infty$, while the previous bound grows to infinity with $e^{\eps}$.

\Cref{thm:MMDimproved} implies that if  
${\MMD(\A(S), \A(S')) > \sqrt{2}\left(1-\frac{2(1-\delta)}{1+e^\eps}\right)}$, 
then $\A$ cannot be private. In light of this, we will focus on the following test, which is based on the MMD instead of the Hockey-Stick divergence. 

\begin{definition}{\textbf{(MMD approximate DP test)}}
\label{def:mmd-two-sample-test}
Under the same conditions as \Cref{def:two-sample-test}, let 
$\tau(\epsilon, \delta) :=\sqrt{2}\left(1-\frac{2(1-\delta)}{1+e^\eps}\right)$. We
define the MMD-approximate DP test as the task of distinguishing between the following two hypotheses: 
     \begin{equation*}
         H_0: \quad  \MMD(\A(S)~||~\A(S')) \leq \tau(\epsilon, \delta), \qquad  H_1: \quad  \MMD(\A(S)~||~\A(S')) > \tau(\epsilon, \delta).
     \end{equation*}
\end{definition}

Previous work on DP auditing often focuses on the fixed batch sample size case that proceeds as follows: Take $n$ samples from $\A$ on each dataset to obtain  samples $X_1,..,X_n \overset{iid}{\sim} \A(S)$ and $Y_1,...,Y_n \overset{iid}{\sim} \A(S')$. Then use these samples to construct an $(1{-}\alpha)$-confidence interval for $D_{\eeps}(\A(S), \A(S'))$. If the interval doesn't intersect $[0,\delta]$, then we conclude  with high probability that $\A$ is not $(\epsilon, \delta)$-DP. This approach can be problematic: running $\A$ can be computationally expensive in practice, so choosing an $n$ that is too large can be infeasible or simply lead to wasted computational resources. Conversely, choosing an $n$ that is too small might result in an inconclusive test, and samples often cannot be reused without decrease in statistical power.

We begin by presenting an abstract, potentially impractical test to build intuition, and then introduce the proposed practical sequential auditing framework in \Cref{alg:seq-dp-testing}. 

\subsection{An abstract template for \texorpdfstring{$(\varepsilon, \delta)$}{}-DP sequential testing}
\label{sec:abstract_test}

\begin{theorem}\label{thm:abstract-test}

Let $f^*$ be the witness function for $\MMD(\A(S)~||~\A(S'))$ (as defined in \Cref{def:MMD}). Given samples $\{X_t, Y_t\}_{t\ge 1} \overset{iid}{\sim} \A(S) \times \A(S')$ and a fixed hyperparameter $\lambda>0$, define the stochastic process $\{\K_t^*(\lambda)\}_{t\ge 0}$ as follows:
\begin{equation}\label{eq:abstract_supermartingale}
    \K_0^*(\lambda) = 1, \quad \K_{t}^*(\lambda) = \K^*_{t-1}(\lambda) \times (1 + \lambda[f^*(X_t) - f^*(Y_t) - \tau(\epsilon, \delta)]),
\end{equation}

where $\tau(\epsilon, \delta)$, defined in \Cref{def:mmd-two-sample-test}, represents the expected upper bound on MMD under the null hypothesis. Then it holds that:
\begin{enumerate}
    \item Under the null hypothesis ($H_0: \MMD(\A(S), \A(S')) \leq \tau(\eps, \delta)$), for any $\lambda \in  \left[0,\frac{1}{2 + \tau(\eps,\delta)}\right]$ and any $\alpha \in (0,1)$, we have $\PP[ \sup_{t\ge 1} \K_t^*(\lambda) \ge 1/\alpha] \leq \alpha$. 
    \item Under the alternative ($H_1:\MMD(\A(S), \A(S')) > \tau(\eps, \delta)$), there exists a $\lambda^* \in \left[0, \frac{1}{8 + 4\tau(\eps,\delta)}\right]$ such that $\lim_{t\to \infty} \frac{\log \K_t^*(\lambda^*)}{t} = \Omega(\Delta^2)$ almost surely, where ${\Delta =\MMD(\A(S), \A(S')) - \tau(\eps, \delta)}$. 
\end{enumerate}
\end{theorem}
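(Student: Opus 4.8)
The plan is to prove the two parts separately, treating the supermartingale structure (Part 1) and the exponential-growth/sample-complexity bound (Part 2) as distinct arguments.

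\textbf{Part 1 (Type I error control under $H_0$).} The key observation is that the process $\{\K_t^*(\lambda)\}$ is a product of nonnegative factors, so to invoke Ville's inequality (\Cref{def:villesineq}) it suffices to verify that (i) each factor $1 + \lambda[f^*(X_t) - f^*(Y_t) - \tau(\eps,\delta)]$ is nonnegative almost surely, and (ii) the process is a supermartingale with respect to the natural filtration. For (i), I would bound the witness function: since $\|f^*\|_\calH \le 1$ and the reproducing property gives $|f^*(x)| = |\inner{f^*, K(x,\cdot)}_\calH| \le \|f^*\|_\calH \sqrt{K(x,x)} \le 1$, the increment $f^*(X_t) - f^*(Y_t) - \tau$ lies in $[-(2+\tau), \text{something}]$, so the range of $\lambda$ in the statement, $\lambda \le \frac{1}{2+\tau}$, guarantees each factor is $\ge 0$. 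For (ii), conditioning on the past and using that the samples are i.i.d., the increment's conditional expectation is $\Exp[f^*(X_t) - f^*(Y_t)] - \tau = \MMD(\A(S),\A(S')) - \tau \le 0$ under $H_0$ (the first equality is exactly the definition of the witness function achieving the supremum). Hence $\Exp[\K_t^* \mid \mathcal{F}_{t-1}] \le \K_{t-1}^*$, so it is a nonnegative supermartingale with $\K_0^*=1$, and Ville's inequality yields $\PP[\sup_t \K_t^* \ge 1/\alpha] \le \alpha$.

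\textbf{Part 2 (exponential growth under $H_1$).} Here I would study the log-growth rate. By the strong law of large numbers applied to the i.i.d. terms $\log(1 + \lambda[f^*(X_t) - f^*(Y_t) - \tau])$, the limit $\lim_{t\to\infty} \frac{1}{t}\log \K_t^*(\lambda)$ equals $g(\lambda) := \Exp[\log(1 + \lambda Z)]$ almost surely, where $Z := f^*(X) - f^*(Y) - \tau$ is a bounded random variable with mean $\Exp[Z] = \Delta > 0$. The task reduces to choosing $\lambda^*$ making $g(\lambda^*) = \Omega(\Delta^2)$. Using $\log(1+u) \ge u - u^2$ for $u$ bounded below (valid since the factors are nonnegative for the restricted $\lambda$ range), I would lower bound $g(\lambda) \ge \lambda \Delta - \lambda^2 \Exp[Z^2]$; since $|Z| \le 2+\tau$, we have $\Exp[Z^2] \le (2+\tau)^2$, and optimizing the quadratic over $\lambda$ gives an optimal $\lambda^* \approx \frac{\Delta}{2(2+\tau)^2}$ with value $\approx \frac{\Delta^2}{4(2+\tau)^2} = \Omega(\Delta^2)$. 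I would then check $\lambda^*$ falls in the stated interval $[0, \frac{1}{8+4\tau}]$, which follows because $\Delta \le \MMD \le \sqrt 2$ keeps $\lambda^*$ suitably small.

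\textbf{Main obstacle.} The delicate part is the second statement: getting a clean $\Omega(\Delta^2)$ rate requires the right second-order Taylor/Bernstein-type inequality for $\log(1+u)$ valid on the exact range that $\lambda Z$ occupies, and simultaneously verifying the admissibility constraint $\lambda^* \le \frac{1}{8+4\tau}$. The factor-of-4 discrepancy between the $\lambda$-range in Part 1 and Part 2 suggests the authors deliberately shrink $\lambda$ to buy a uniform lower bound on $1+\lambda Z$ (bounded away from $0$), which is exactly what is needed to apply the quadratic logarithmic bound with a controlled constant; I would make that bounded-away-from-zero estimate explicit before optimizing. Establishing the almost-sure convergence of the empirical log-growth to its expectation is routine (SLLN for i.i.d. bounded summands), so the real care is in the constant-chasing for the rate and the interval check.
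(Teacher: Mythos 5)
Your Part~1 is essentially the paper's own argument: bound $|f^*|\le 1$ via the reproducing property so that the factors are nonnegative for $\lambda \in \big[0,\tfrac{1}{2+\tau}\big]$, compute the conditional mean of the increment as $\MMD(\A(S),\A(S')) - \tau \le 0$ under $H_0$, conclude $\K_t^*(\lambda)$ is a nonnegative supermartingale, and apply Ville's inequality. Part~2 is where you genuinely diverge. The paper argues pathwise: it applies $\log(1+u) \ge u - u^2$ realization by realization, plugs in a \emph{data-dependent and $t$-dependent} maximizer $\lambda^* = \frac{\sum_i z_i}{(8+4\tau)\sum_i z_i + 2\sum_i z_i^2}$ (with $z_i = f^*(x_i)-f^*(y_i)-\tau$), and uses the SLLN only to guarantee that the empirical mean $\tfrac1t\sum_i z_i$ eventually exceeds $\Delta/2$, yielding the explicit bound $\log \K_t^*(\lambda^*) \ge t\Delta^2/768$. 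You instead apply the SLLN directly to the i.i.d.\ bounded log-increments $\log(1+\lambda Z_t)$ to identify the almost-sure growth rate with the population quantity $g(\lambda) = \Exp[\log(1+\lambda Z)]$, and then optimize the quadratic lower bound $g(\lambda) \ge \lambda\Delta - \lambda^2(2+\tau)^2$ at the population level. This is the classical Kelly/log-optimal analysis, and it buys something the paper's proof does not deliver cleanly: a genuinely fixed, deterministic $\lambda^*$, which matches the theorem's ``there exists a $\lambda^*$'' phrasing more literally than the paper's realization-dependent choice.

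There is one slip you must repair: the interval check is false as stated. The unconstrained maximizer $\lambda_{\mathrm{opt}} = \frac{\Delta}{2(2+\tau)^2}$ lies in $\big[0,\tfrac{1}{8+4\tau}\big] = \big[0,\tfrac{1}{4(2+\tau)}\big]$ if and only if $\Delta \le 1+\tau/2$, and this can fail: for $\tau$ near $0$ and $\MMD$ near $\sqrt 2$ one gets $\lambda_{\mathrm{opt}} \approx \sqrt2/8 > 1/8$, so ``$\Delta \le \sqrt 2$'' alone does not keep $\lambda_{\mathrm{opt}}$ in range. The fix is routine: take $\lambda^* = \min\big\{\lambda_{\mathrm{opt}}, \tfrac{1}{4(2+\tau)}\big\}$. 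The quadratic $\lambda \mapsto \lambda\Delta - \lambda^2(2+\tau)^2$ is concave and increasing on $[0,\lambda_{\mathrm{opt}}]$, so when clipping occurs (i.e.\ $\Delta > 1+\tau/2$) its value at $\lambda = \tfrac{1}{4(2+\tau)}$ is $\frac{\Delta}{4(2+\tau)} - \frac{1}{16} > \frac18 - \frac1{16} = \frac1{16}$, which is still $\Omega(\Delta^2)$ since $\Delta$ is bounded by an absolute constant ($\Delta \le \MMD \le 2$). Also make explicit, as you planned, that every $\lambda$ in the clipped interval satisfies $|\lambda Z| \le \frac{2+\tau}{4(2+\tau)} = \frac14$, so the bound $\log(1+u)\ge u-u^2$ applies pointwise and the log-increments are bounded (hence the SLLN applies). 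With that repair your proof is complete and correct.
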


\Cref{thm:abstract-test} suggests an algorithm that sequentially accumulates evidence to reject hypothesis $H_0$ against $H_1$. Unfortunately, this requires to know $\lambda^*$ and the witness function $f^*$. In the following subsection, we show that $\lambda^*$ and $f^*$ can be learned: $\lambda^*$ using Online Newton Step (ONS), and $f^*$ from samples using Online Gradient Ascent (OGA); see \Cref{sec:OCO_algorithms} for more details on these algorithms. 

Our approach builds upon the work in \cite{shekhar2023nonparametric2sampletesting}, and its application to our one-sided privacy auditing test requires substantial and non-trivial modifications.
Specifically, the test proposed in \cite{shekhar2023nonparametric2sampletesting} inherently addresses a two-sided test with hypothesis $H_0: \mathrm{MMD}(\mathcal{A}(S), \mathcal{A}(S')) = 0$ versus $H_1: \mathrm{MMD}(\mathcal{A}(S), \mathcal{A}(S')) > 0$. Since the MMD is non-negative by definition, this effectively corresponds to testing $H_0: \mathrm{MMD}(\mathcal{A}(S), \mathcal{A}(S')) = 0$ against the two-sided alternative $H_1: \mathrm{MMD}(\mathcal{A}(S), \mathcal{A}(S')) \neq 0$. 
In contrast, the privacy auditing setting needs a one-sided test against a specific, non-zero threshold. A natural extension would lead to testing $H_0: \mathrm{MMD}(\mathcal{A}(S), \mathcal{A}(S')) = \tau(\varepsilon, \delta)$ versus $H_1: \mathrm{MMD}(\mathcal{A}(S), \mathcal{A}(S')) \neq \tau(\varepsilon, \delta)$, which is different from the one-sided test with alternative $H_1: \mathrm{MMD}(\mathcal{A}(S), \mathcal{A}(S')) > \tau(\varepsilon, \delta)$. This distinction is critical because, unlike the zero-threshold null hypothesis, values of MMD below $\tau(\varepsilon, \delta)$ are possible and relevant to the null hypothesis. Consequently, a simple extension of the two-sided framework would result in an inappropriate two-sided alternative $H_1: \mathrm{MMD}(\mathcal{A}(S), \mathcal{A}(S')) \neq \tau(\varepsilon, \delta)$.

Designing this one-sided test introduces significant analytical challenges. Concretely, we have to constrain $\lambda$ to be nonnegative to preserve the supermartingale property of the stochastic process $\mathcal{K}_t^*$. This restriction reduces the space of admissible $\lambda$ values and changes the analysis needed to establish the existence of an optimal $\lambda^*$ that ensures that the process grows under the alternative hypothesis (part 2 of \Cref{thm:abstract-test}). We overcome this by proving the existence of such a $\lambda^*$ within the restricted domain, which required non-trivial adaptations and subtle but key modifications of the original results presented in \cite{shekhar2023nonparametric2sampletesting}. This structural modification is essential to ensure the validity and power of the one-sided sequential test. Finally, while the framework in \cite{shekhar2023nonparametric2sampletesting} is designed for a broader class of testing problems, our focused analysis of this specific two-sample MMD test allows for the derivation of clearer and more interpretable bounds. By carefully working out the specific constants for our problem, we avoid reliance on the more abstract machinery in their general framework. 

We are not the first to draw inspiration from \cite{shekhar2023nonparametric2sampletesting} to design a one-sided test. Recently, \cite{richter2024auditing} proposed a one-sided test in the context of detecting behavioral shifts in language models for the hypotheses $H_0 : D(\PP, \mathbb{Q}) \leq \tau$ versus $H_1 : D(\PP, \mathbb{Q}) > \tau$ for some $\tau > 0$, where the distance function is defined as $
D(\PP, \mathbb{Q}) = \sup_{f \in \mathcal{F}} \left| \mathbb{E}_{X \sim \PP,\, Y \sim \mathbb{Q}}[f(X) - f(Y)] \right|$ and $\mathcal{F}$ is a class of neural networks with scalar outputs in $[-1/2, 1/2]$. Their method also constructs a stochastic process that is a supermartingale under $H_0$ and can grow under $H_1$ under several assumption on the neural network class, but it differs from ours in two key respects. First, they consider a single stochastic process $\tilde\K_t$, while we define a family $\K^*_t(\lambda)$ parameterized by $\lambda$ (see \cref{eq:abstract_supermartingale}). This parameterization enables the practical test we introduce in Section~\ref{sec:practical_test} to adaptively tune $\lambda$ and approximate the optimal growth rate. Second, even when $\lambda$ is fixed, our process exhibits faster expected growth under $H_1$. Specifically, if we were to adopt their approach for our one-sided test, we would obtain:\footnote{Their process does not include the factors of $1/2$ from our expression; we include them to standardize both settings, as we work with function classes outputting in $[-1,1]$ whereas theirs output in $[-1/2,1/2]$.}
\[
\tilde\K_t = \prod_{i=1}^t \left( \frac{1+\tau/2}{e^{\tau/2}} + \frac{1}{2} \cdot \frac{f^*(X_i) - f^*(Y_i) - \tau}{e^{\tau/2}} \right),
\]
where we denote $\tau(\varepsilon, \delta)$ as $\tau$ for brevity. Under $H_1$, the expected value of this process satisfies:
\begin{align*}
\mathbb{E}_{H_1}[\tilde\K_t] 
&= \left( \frac{1+\tau/2}{e^{\tau/2}} + \frac{1}{2} \cdot \frac{\MMD - \tau}{e^{\tau/2}} \right)^t < \left( 1 + \frac{\MMD - \tau}{2 + \tau} \right)^t 
= \mathbb{E}_{H_1}\left[\K^*_t\left(\tfrac{1}{2 + \tau}\right)\right],
\end{align*}
where the inequality follows from $1 + \tau/2 < e^{\tau/2}$ for $\tau > 0$.

Finally, unlike our work, \cite{richter2024auditing} does not analyze the expected growth of $\tilde\K_t$ under $H_1$, nor do they provide guarantees on the expected stopping time of their practical test, both of which we establish in Theorems~\ref{thm:abstract-test} and~\ref{thm:statistical_properties}. We believe that these differences—namely, the design of a different family of stochastic processes and the selection of $\lambda$---are essential for deriving our theoretical guarantees.

\subsection{Practical DP auditing}\label{sec:practical_test}
While intuitive, the test in \Cref{sec:abstract_test} requires oracle access to $\lambda^*$ and $f^*$. 
We now instantiate a practical algorithm inspired by the ideas developed in the previous subsection. Let's start assuming that we know $f^*$ but want to learn $\lambda^*$. If we construct the process
\[
    \K_0^* = 1, \quad \K_{t}^* = \K^*_{t-1} \times (1 + \lambda_t[f^*(X_t) - f^*(Y_t) - \tau(\epsilon, \delta)]),
\]
it is easy to see that by restricting the range of $\lambda_t$ as in the proof of \Cref{thm:abstract-test}, $\K_{t}^*$ is a nonnegative supermartingale so long as $\lambda_t$ is predictable (i.e, choosen before observing $X_t, Y_t$), implying that, with high probability, $\K^*_t$ remains bounded over time under $H_0$. Next, we need to choose a predictable sequence $\{\lambda_t\}_{t\ge1}$ that ensures that $\K_t$ grows rapidly under $H_1$. This can be done by running ONS with losses $\ell^*_t(\lambda) = -\log(1+\lambda[f^*(X_t) - f^*(Y_t) - \tau(\epsilon, \delta)])$. The regret bound of ONS from \cite{hazan2007ons}, summarized in \Cref{thm:regret_of_ons}, implies that for any $\lambda$,
\[
\sum_{i \in [t]} \ell^*_i(\lambda) - \sum_{i\in[t]} \ell^*_i(\lambda_i) \leq O(\log(t)).\]
Noticing that $\sum_{i \in [t]} \ell^*_i(\lambda) = \log(\K^*_t(\lambda))$, where $\log(\K^*_t(\lambda))$ is the process from \Cref{eq:abstract_supermartingale} and $\sum_{i\in[t]} \ell^*_i(\lambda_i) = \log(\K^*_t)$, we have that 
\[\lim_{t \to \infty} \frac{\log(\K^*_t)}{t} = \lim_{t \to \infty} \frac{\log(\K^*_t(\lambda^*))}{t} = \Omega(\Delta^2),\]
where the last equality comes from \Cref{thm:abstract-test}, Part 2. Hence, learning $\lambda^*$ with ONS does not hurt {\it asymptotically} if we know $f^*$. Finally, $f^*$ can be learned by running OGA with losses $h_t(f) = \langle f, K(X_t, \cdot) - K(Y_t, \cdot)\rangle_{\calH}$:
\[f_1 = 0_\calH,\quad f_{t+1} = \Pi(f_t + \eta_t \nabla h_t(f_t)),\]
where $\Pi$ is an operator projecting onto to the set of functions with $\|f\|_\calH \leq 1$. Incorporating these changes, on the abstract test of the previous section, we arrive at the practical test presented in \Cref{alg:seq-dp-testing}, whose statistical properties are presented in \Cref{thm:statistical_properties}.

\begin{algorithm}[H]
\caption{Sequential DP Auditing}
\label{alg:seq-dp-testing}
\begin{algorithmic}[1]
\State \textbf{Input:} Neighboring datasets $S, S' \in \D$,
mechanism $\A$, privacy parameters $\eps, \delta$, maximum number of iterations $N_{\text{max}}$.
\State Set $\tau(\eps, \delta) = \sqrt{2}\left(1-\frac{2(1-\delta)}{1+e^\eps}\right)$
\State Initialize $\K_0 = 1, \lambda_1 = 0, f_1 = 0_\calH$
\For{$t=1, 2, ..., N_{\text{max}}$}
\State Observe $X_t \sim \A(S)$, $Y_t \sim \A(S')$
\State $\K_t = \K_{t-1} \left(1 + \lambda_t\left[\langle f_t,K(X_t, \cdot) - K(Y_t, \cdot)\rangle_\calH - \tau(\eps,\delta)\right] \right)$
\If{$\K_t \ge 1/\alpha$} 
\State Reject $H_0$
\Else
\State Send  $h_t(f_t) = \langle f_t, K(X_t, \cdot) - K(Y_t, \cdot)\rangle_{\calH}$ to OGA, receive $f_{t+1}$\label{step:OGA}
\State Send $\ell_t(\lambda_t) = -\log(1+\lambda_t \left[\langle f_t,K(X_t, \cdot) - K(Y_t, \cdot)\rangle_\calH - \tau(\eps,\delta)\right])$ to ONS, get $\lambda_{t+1}$ \label{step:ONS}
\EndIf
\EndFor
\end{algorithmic}
\end{algorithm}

\begin{theorem}[Statistical properties of Algorithm \ref{alg:seq-dp-testing}]
\label{thm:statistical_properties} 
Suppose OGA in Line \ref{step:OGA} is Algorithm \ref{alg:oga} initialized on input $\{f \in \calH: \|f\|_\calH \leq 1\}, 0_\calH$ and ONS in Line \ref{step:ONS} is Algorithm \ref{alg:ons_1d} initialized on input $\big[0, \frac{1}{4 + 2\tau(\eps,\delta)}\big]$. Let $\K_t$ be the process constructed in \Cref{alg:seq-dp-testing} and $\T = \min\{t \ge 1: \K_t \ge 1/\alpha\}$ be the stopping time of the test when $N_{\text{max}} = \infty$. Then,
    \begin{enumerate}
        \item Under $H_0$, $\PP(\T < \infty) \leq \alpha$.
        \item Under $H_1$, (i) $\hspace{1mm} \lim_{t \to \infty}\frac{\log(\K_t)}{t} = \Omega(\Delta^2)$ and (ii) $\hspace{1mm}\Exp[\T] = O\left(\frac{\log(1/\Delta)}{\Delta} + \frac{\log(1/(\alpha\Delta^2))}{\Delta^2}\right)$. 
    \end{enumerate}
\end{theorem}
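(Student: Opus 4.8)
The plan is to prove the three assertions separately, but to reuse a single deterministic-plus-martingale lower bound on $\log\K_t$ for both parts under $H_1$.

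\textbf{Type I error (claim 1).} First I would check that $\{\K_t\}$ is a nonnegative supermartingale under $H_0$ for the filtration $\F_t=\sigma(\{X_s,Y_s\}_{s\le t})$ and then apply Ville's inequality (\Cref{def:villesineq}). The structural fact is predictability: $f_t$ is produced by OGA from $h_1,\dots,h_{t-1}$ and $\lambda_t$ by ONS from $\ell_1,\dots,\ell_{t-1}$, so both are $\F_{t-1}$-measurable. Conditioning on $\F_{t-1}$ and using that $(X_t,Y_t)$ is an independent draw gives $\Exp[\langle f_t,K(X_t,\cdot)-K(Y_t,\cdot)\rangle_\calH\mid\F_{t-1}]=\Exp[f_t(X)-f_t(Y)]\le\MMD(\A(S),\A(S'))\le\tau(\eps,\delta)$, the first step because $\|f_t\|_\calH\le1$ and the second by $H_0$. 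Since $\lambda_t\ge0$, this yields $\Exp[\K_t\mid\F_{t-1}]\le\K_{t-1}$. Nonnegativity comes from the admissible range of $\lambda_t$: with $0\le K\le1$ we have $|\langle f_t,K(X_t,\cdot)-K(Y_t,\cdot)\rangle_\calH|\le\sqrt2$ and $\tau(\eps,\delta)\le\sqrt2$, so each bracket has magnitude at most $2+\tau(\eps,\delta)$, while $\lambda_t\le\frac1{4+2\tau(\eps,\delta)}$, making every factor at least $1/2>0$. Ville's inequality with $\K_0=1$ then gives $\PP(\exists t:\K_t\ge1/\alpha)\le\alpha$, i.e. $\PP(\T<\infty)\le\alpha$.

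\textbf{A master growth bound, and claim 2(i).} For both parts under $H_1$ I would peel off the two online learners. Writing $Z_i=\langle f_i,K(X_i,\cdot)-K(Y_i,\cdot)\rangle_\calH-\tau(\eps,\delta)$, the ONS regret bound for the exp-concave losses $\ell_i(\lambda)=-\log(1+\lambda Z_i)$ (\Cref{thm:regret_of_ons}) gives, for every fixed $\lambda$ in the ONS domain, $\log\K_t\ge\sum_{i\le t}\log(1+\lambda Z_i)-O(\log t)$. I then fix $\lambda^\dagger=\frac{\Delta}{2(2+\tau(\eps,\delta))^2}$, verify it lies in $[0,\frac1{4+2\tau(\eps,\delta)}]$ (using $\Delta\le\MMD\le\sqrt2<2+\tau(\eps,\delta)$), and apply $\log(1+x)\ge x-x^2$, valid since $|\lambda^\dagger Z_i|\le1/2$, to get $\log\K_t\ge\lambda^\dagger\sum_{i\le t}Z_i-(\lambda^\dagger)^2\sum_{i\le t}Z_i^2-O(\log t)$. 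For the linear term I invoke the OGA regret against the witness function $f^*$: the rewards $h_i$ are linear with gradients of norm $\le\sqrt2$, so OGA over the unit ball has regret $O(\sqrt t)$, whence $\sum_{i\le t}Z_i\ge\sum_{i\le t}(f^*(X_i)-f^*(Y_i))-O(\sqrt t)-t\tau(\eps,\delta)$. The summands $f^*(X_i)-f^*(Y_i)$ are i.i.d., bounded, with mean $\MMD(\A(S),\A(S'))$; letting $M_t$ denote their centered partial sums gives $\sum_{i\le t}Z_i\ge t\Delta+M_t-O(\sqrt t)$. Combining with $\sum_{i\le t}Z_i^2\le(2+\tau(\eps,\delta))^2t$ and the choice of $\lambda^\dagger$ yields the master inequality $\log\K_t\ge\frac{\Delta^2}{4(2+\tau(\eps,\delta))^2}\,t+\lambda^\dagger M_t-O(\Delta\sqrt t)-O(\log t)$. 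Dividing by $t$ and sending $t\to\infty$, the strong law gives $M_t/t\to0$ and the remaining negative terms are $o(1)$, so $\liminf_t\frac1t\log\K_t\ge\frac{\Delta^2}{4(2+\tau(\eps,\delta))^2}=\Omega(\Delta^2)$ almost surely, which is claim 2(i).

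\textbf{Expected stopping time (claim 2(ii)).} I would convert the master inequality into a tail bound through $\Exp[\T]=\sum_{t\ge0}\PP(\T>t)\le\sum_{t\ge0}\PP(\K_t<1/\alpha)$. On $\{\log\K_t<\log(1/\alpha)\}$ the master inequality forces $\lambda^\dagger M_t$ below $-\big(\frac{\Delta^2 t}{4(2+\tau(\eps,\delta))^2}-O(\Delta\sqrt t)-O(\log t)-\log(1/\alpha)\big)$, and Hoeffding's inequality for the bounded i.i.d. increments of $M_t$ bounds each such probability. Choosing the burn-in $t_1$ as the first $t$ at which the deterministic margin exceeds half the linear term, the per-step probabilities decay like $e^{-\Omega(\Delta^2 t)}$ for $t\ge t_1$, so $\sum_{t\ge t_1}\PP(\K_t<1/\alpha)=O(1/\Delta^2)$ and $\Exp[\T]\le t_1+O(1/\Delta^2)$. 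Solving for $t_1$ by balancing the growth $\Delta^2 t$ separately against the $O(\sqrt t)$ OGA overhead, the $O(\log t)$ ONS overhead, and the threshold $\log(1/\alpha)$ produces the two stated scales, giving the claimed order $O\!\big(\frac{\log(1/\Delta)}{\Delta}+\frac{\log(1/(\alpha\Delta^2))}{\Delta^2}\big)$.

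\textbf{Main obstacle.} I expect claim 2(ii) to be the delicate step: unlike the asymptotic statement it demands simultaneous, non-asymptotic control of two regret terms of different orders ($\log t$ from ONS, $\sqrt t$ from OGA) together with the fluctuations $M_t$ of the witness-function sums, and then a careful concentration/burn-in split so that the geometric tail sums to $O(1/\Delta^2)$ while $t_1$ contributes exactly the advertised orders. Verifying that the competitor $\lambda^\dagger=\Theta(\Delta)$ stays inside the admissible ONS interval for all relevant $\eps,\delta$ is a small but necessary check that underpins the entire argument.
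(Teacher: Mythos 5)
Your proof is correct, and for Part 1 and Part 2(i) it follows essentially the paper's own route: predictability of $(f_t,\lambda_t)$ plus the range constraint $\lambda_t\in[0,\frac{1}{4+2\tau(\eps,\delta)}]$ gives the nonnegative supermartingale property under $H_0$ and Ville's inequality; under $H_1$, the ONS regret bound reduces the analysis to a fixed comparator, $\log(1+x)\ge x-x^2$ linearizes the log-wealth, the OGA regret transfers the linear sum from the learned $f_i$ to the witness $f^*$, and the strong law finishes. The only real difference there is the comparator: you use the deterministic $\lambda^\dagger=\Delta/(2(2+\tau(\eps,\delta))^2)=\Theta(\Delta)$, while the paper plugs in the data-dependent maximizer $\lambda^*$ of the empirical quadratic lower bound (its \cref{eq:lambdastar_nonabstract}); both lie in the admissible domain and both give $\Omega(\Delta^2)$ growth, yours with a cleaner constant than the paper's $1/768$.

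Part 2(ii) is where you genuinely diverge, and your route buys something. The paper splits $\PP[\K_t<1/\alpha]$ on whether the empirical witness average deviates from $\MMD(\A(S),\A(S'))$ by more than $\Delta/2$, controls the bad event with Bernstein's inequality (this is the source of both its $\sigma^2$-dependent constants and its $\log(1/\Delta)/\Delta$ term, via a first burn-in $t_0$), and on the good event converts $\{\K_t<1/\alpha\}$ into a deterministic indicator that vanishes past a second burn-in $t_1$. You instead keep the centered fluctuation $M_t$ inside a single, surely-valid master inequality, note that $\{\K_t<1/\alpha\}$ forces $M_t\le-\Omega(\Delta t)$ past one burn-in, and apply Hoeffding to get tails $e^{-\Omega(\Delta^2 t)}$ whose geometric sum is $O(1/\Delta^2)$. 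This gives $\Exp[\T]=O\big((1+\log(1/(\alpha\Delta^2)))/\Delta^2\big)$, which implies the stated bound (for $\alpha$ bounded away from $1$ the additive $1/\Delta^2$ is absorbed) and in fact has no $\log(1/\Delta)/\Delta$ term at all: that term is an artifact of Bernstein's sub-exponential regime, which Hoeffding avoids; the price is losing the variance-adaptive $\sigma^2$ constants the paper retains. Two points to tighten: first, your claim that the balancing "produces the two stated scales" is inaccurate---your three balances give $1/\Delta^2$ (OGA), $\log(1/\Delta^2)/\Delta^2$ (ONS), and $\log(1/\alpha)/\Delta^2$ (threshold), never $\log(1/\Delta)/\Delta$---so you should say you prove a bound that \emph{implies} the stated one rather than matching its form; second, define the burn-in as a threshold beyond which the margin inequality holds for \emph{all} larger $t$ (e.g., via the three sufficient conditions separately, each eventually monotone), not merely the first time it holds, so that the geometric tail sum is legitimate.
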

\Cref{thm:statistical_properties} mimics the guarantees in \Cref{thm:abstract-test} for the practical test in \Cref{alg:seq-dp-testing}. Item (1) states that under $H_0$, the probability of rejecting $H_0$ (which occurs if the stopping time $\T$ is finite) is upper-bounded by $\alpha$, thereby controlling the Type I error. Item (2) states that under $H_1$, the process $\K_t$ grows roughly as $\exp(t\Delta^2 - o(t))$ for sufficiently large $t$. This behavior is similar to that of the abstract test (Theorem \ref{thm:abstract-test}), though practical learning aspects, such as the $o(t)$ term due to learning $\lambda^*$ with the Online Newton Step (ONS) algorithm, should be considered for the precise constants involved. Item (3) guarantees that the expected time to reject $H_0$, denoted $\Exp[\T]$ (i.e., the expected time to detect that an algorithm does not satisfy its privacy guarantee), is bounded by $1/\Delta^2$. Thus, the larger the MMD between the distributions generated by the mechanism on adjacent datasets (which contributes to a larger $\Delta$), the faster a privacy violation will be detected. Notably, our test does not need knowledge of $\Delta$, it adapts automatically. 

\subsection{An alternative approach based on e-processes}\label{subsec:alternative_test}

Algorithm \ref{alg:seq-dp-testing} is a sequential test based on the construction of a nonnegative supermartingale and an application of Ville's inequality. E-processes are a class of stochastic processes broader than nonnegative supermatingales that can also be combined with Ville's inequality to design sequential tests. Concretely, an e-process is a nonnegative stochastic process almost surely dominated by a nonnegative supermartingale. 

In concurrent work, \cite{waudby2025logopt} studied a general class of e-processes. An important sub-class are the ones of the form
\begin{equation}\label{eq:e-process}
    W_t(\beta_1^t) = \prod_{i \in [t]} (1 + \beta_i(E_i -1)) = \prod_{i\in[t]} \langle (\beta_i, 1-\beta_i), (E_i,1)\rangle,
\end{equation} 
where for all $i\ge 1$, $\beta_i \in [0,1]$ and $E_i$ is an e-value for the null, meaning that $E_i$ is almost surely nonnegative and $\Exp_{H_0}[E_i] \leq 1$. If the coefficients $\{\beta_i\}_{i\ge 1}$ are predictable, then the process above is a nonnegative supermartingale under the null. We could employ ONS as in Algorithm \ref{alg:seq-dp-testing} to approximate the fixed $\beta^*$ that makes $\log(W_t(\beta^*))$ grow the fastest under the alternative. However, this requires truncating the domain, since otherwise the gradients of $\log(1 + \beta(E_i -1))$ with respect to $\beta$ explode when $1 + \beta(E_i -1) \approx 0$, breaking the Lipschitzness property needed to obtain theoretical guarantees with ONS. \cite{waudby2025logopt} shows that choosing the sequence $\{\beta_i\}_{i\ge 1}$ according to the Universal Portfolio (UP) algorithm \cite{cover1991up} allows to optimize over the whole interval $[0,1]$ while at the same time achieving a regret bound with smaller constants than ONS. Since UP is difficult to implement in practice, they present an e-process $\tilde W_t$ that is upper bounded by the nonnegative supermartingale $W_t^\up$ which we would obtain by optimizing $W_t(\beta_1^t)$ with UP. 

For our concrete problem, we can prove that for every $t\ge 1$, $E_t := \frac{2 + f_t(X_t) - f_t(Y_t)}{2+\tau}$ is an e-value for the null if $\{f_t\}_{t\ge 1}$ are predictable. This allows to re-write the processes $\K_t$ from the previous sections in the form of \cref{eq:e-process}, and noting that the optimization of $\{\lambda_i\}_{i\ge 1}$ is equivalent to the optimization of $\{\beta_i\}_{i\ge 1}$, we are able to design a slightly different sequential DP auditing procedure, Algorithm \ref{alg:alternative-seq-dp-testing}. The details of the algorithm, its statistical properties, and an experimental comparison with Algorithm \ref{alg:seq-dp-testing} are provided in Appendix \ref{app:alternative-test}. 

\section{Experiments}
\label{sec:experiments}

Below we provide empirical validation of the proposed sequential DP auditing framework in \Cref{alg:seq-dp-testing}.\footnote{The code to replicate our experiments is publicly available: \url{https://github.com/google-research/google-research/tree/master/dp\_sequential\_test}} We evaluate both Type I error control and detection power of this approach across different real-world  DP mechanisms. First, we assess the performance of \Cref{alg:seq-dp-testing} on both DP-compliant and non-DP mechanisms for computing the mean with additive noise. Moving forward, we demonstrate how DP-SGD implementations can be efficiently evaluated for privacy guarantees in under one complete training run. In \Cref{sec:more_experiments} we provide additional empirical validation of the theoretical properties of our MMD-based tester using synthetic data with known underlying distributions, specifically Gaussian and perturbed uniform \cite{schrab2023mmd} distributions.

\subsection{Additive-Noise Mechanisms for Mean Estimation}
\label{subsec:mean-mechanisms}

We evaluate our auditing methodology on mechanisms that employ additive noise when computing the mean. We analyze approaches using both Laplace and Gaussian noise distributions. Following the setup in \cite{2024DPAuditorium}, we define the following candidate DP mechanisms:
\begin{align}
\text{DPLaplace}(X) &:= \frac{\sum_{i=1}^{n} X_i}{\tilde{n}} + \rho_1, \\
\text{NonDPLaplace1}(X) &:= \frac{\sum_{i=1}^{n} X_i}{n} + \rho_2, \\
\text{NonDPLaplace2}(X) &:= \frac{\sum_{i=1}^{n} X_i}{n} + \rho_1,
\end{align}

where $\tilde{n} = \max\{10^{-12}, n+\tau\}$, with $\tau \sim \text{Laplace}(0, 2/\varepsilon)$, $\rho_1 \sim \text{Laplace}(0, 2/[\tilde{n}\varepsilon])$, and $\rho_2 \sim \text{Laplace}(0, 2/[n\varepsilon])$. Among these mechanisms, only DPLaplace satisfies $\varepsilon$-DP. NonDPLaplace1 fails to maintain privacy guarantees by directly utilizing the private sample size $n$, while NonDPLaplace2 privatizes the number of samples when determining noise scale but calculates the mean using the non-privatized count. Additionally, we examine Gaussian noise variants---DPGaussian, NonDPGaussian1, and NonDPGaussian2---which correspond respectively to the analogous mechanisms  but use additive Gaussian noise distributions. 

We use the sequential DP tester to evaluate the proposed mechanisms for $\varepsilon \in \{ 0.01, 0.1\}$, controlling the additive noise introduced by each mechanism.  For each setting, we test the null hypothesis that the mechanism satisfies ($\epsilon, \delta$)-DP using the characterization in \Cref{def:mmd-two-sample-test} against the alternative that it does not. For this set of experiments, we fix the neighboring datasets to $S=\{0\}$ and $S' = \{0,1 \}$, although the sequential test remains agnostic of the specific choice of neighboring datasets. Moreover, we use 20 initial samples to set the bandwidth for the MMD tester using the median of the pairwise distances \cite{garreau2017large}, which are then excluded from the actual testing phase to maintain statistical validity. We repeat each experiment 20 times and report the aggregated findings to ensure robust results and account for statistical variability. We report a failure to reject the null (no violation detected) when the test reaches 2,000 observations for $\epsilon = 0.01$ and 5,000 samples for $\epsilon = 0.1$. 

Table \ref{tbl:mean_results} demonstrates the effectiveness and efficiency of the sequential DP auditing approach in identifying both compliant and non-compliant DP mechanisms across different privacy regimes. For both private mechanisms---DPGaussian and DPLaplace---we successfully control Type I error with zero rejections for both $\epsilon = 0.01$ and $\epsilon = 0.1$. In contrast, all non-DP mechanisms show significant rejection rates in high-privacy regimes ($\epsilon = 0.01$), with DPGaussian1, NonDPLaplace1, and NonDPLaplace2 achieving 100\% detection success using fewer than 350 observations on average, while NonDPGaussian2 reaches a 85\% rejection rate requiring approximately 1,150 samples. This efficiency is particularly notable when compared to the fixed-sample-size MMD-tester recently proposed in \cite{2024DPAuditorium}, which failed to identify violations in the NonDPLaplace2 and NonDPGaussian2 mechanisms even when using 500,000 observations for $\epsilon = 0.01$. The power of the sequential test decreases moderately for $\epsilon=0.1$, though it still maintains nearly perfect rejection rates for NonDPGaussian1, NonDPLaplace1 and NonDPLaplace2 mechanisms using fewer than 1,000 data points. Interestingly, NonDPGaussian2 presents the most challenging detection scenario, with rejection rates dropping to just 5\% for $\varepsilon=0.1$, suggesting its violations become more subtle as privacy constraints relax. We provide a detailed comparison with the fixed-sample-size approach from \cite{2024DPAuditorium} in \Cref{sec:dp-auditorium}.

\begin{table}[htbp]
\centering

\caption{\Cref{alg:seq-dp-testing} sequential DP auditing performance on mean mechanisms with additive Gaussian and Laplace noise across privacy regimes. Rejection rates indicate the proportion of experiments ($\pm$ standard errors over 20 independent runs) where \cref{alg:seq-dp-testing} rejects the null hypothesis of ($\epsilon, \delta$)-DP. $\bar{N}$ represents the average number of samples required to detect a violation, when it occurred ($\pm$ standard errors). Dashes (--) indicate that no violations were detected, consistent with true DP-mechanisms.}

\begin{tabular}{l c c c c}
\toprule
& \multicolumn{2}{c}{$\epsilon = 0.01$} & \multicolumn{2}{c}{$\epsilon = 0.1$} \\
\cmidrule(lr){2-3} \cmidrule(lr){4-5}
Mechanism & Rejection rate & $\bar{N}$ to reject & Rejection rate & $\bar{N}$ to reject \\
\midrule 
\addlinespace
DPGaussian     & 0.0 $\pm$ 0.0  & --  & 0.0 $\pm$ 0.0 & --   \\
NonDPGaussian1 & 1.0 $\pm$ 0.0  & 264 $\pm$ 9.3 & 1.0 $\pm$ 0.0 & 562 $\pm$ 29.2\\
NonDPGaussian2 & 0.85 $\pm$ 0.08  & 1139 $\pm$ 126.1 & 0.05 $\pm$ 0.05 & 4776 $\pm$ 219.0   \\
\addlinespace
DPLaplace      & 0.0 $\pm$ 0.0  & --  & 0.0 $\pm$ 0.0 & --   \\
NonDPLaplace1  & 1.0 $\pm$ 0.0 & 331 $\pm$ 14.5 & 1.0 $\pm$ 0.0 & 920 $\pm$ 61.6 \\
NonDPLaplace2  & 1.0 $\pm$ 0.0 & 192 $\pm$ 18.4  & 0.95 $\pm$ 0.05 & 770 $\pm$ 262.3 \\
\bottomrule
\end{tabular}
\label{tbl:mean_results}
\end{table}

\subsection{DP-SGD Auditing in Less-Than-One Training Run}
\label{sec:dpsgd-audit}
In this section, we demonstrate our sequential framework's ability to audit DP-SGD \cite{abadi2016deep, song2013stochastic} and identify privacy violations in less than one complete training run. We adopt a white-box auditing approach that grants access to intermediate gradients. This is more efficient than black-box audits, which require re-running the entire training process to generate each sample, making them computationally intensive.

The proposed sequential test not only detects violations but can also estimate a lower bound on the privacy parameter $\epsilon$ by finding the smallest $\epsilon$ that our the test fails to reject at a fixed $\delta$ and confidence level. Crucially, if the null hypothesis is never rejected, our method is equivalent to a standard one-run MMD audit on the full set of gradients, ensuring it is at least as accurate as comparable non-sequential methods while offering the potential for much faster detection.

\paragraph{Auditing methodology.}
We adopt the white-box auditing procedure introduced as Algorithm 2 in~\cite{2023tightauditingDPML}. At  each training step $t$ a batch $B_t$ of examples is sampled. The auditor computes two private gradients: $\tilde{\nabla}[t]$ from the standard gradient $B_t$ and $ \tilde{\nabla}'[t]$ where a canary gradient $g'$ was inserted with probability $q_c$. This canary is constructed to be orthogonal to all other per-example gradients within its batch in expectation. The auditor collects one-dimensional samples $x_t, y_t$, corresponding to the dot product between $g'$ and the gradients: $x_t = \langle g',\tilde{\nabla}[t]\rangle $ and $y_t = \langle g',\tilde{\nabla}'[t]\rangle$. Since the clipping norm and batch size are known by the auditor in the white-box setting, the samples are rescaled. This process yields $\{x_t\}_t$ drawn from a Gaussian distribution $P_1 = N(0,\sigma^2)$ and $\{y_t\}_t$ where the canary was present, drawn from $P_2 = N(1, \sigma^2)$. In practice, computing the canary described can be computationally intensive, so Dirac canary gradients---a
gradient with zeros everywhere except at a single index---are used instead. This leads to samples $\{x_
t\}_t , \{y_t\}_t$ that are not necessarily identically distributed over time. However, note that our algorithm can handle time-varying distributions: the supermartingale property under the null hypothesis is preserved so long as $x_t$ and $y_t$ are close in distribution at every time-step $t\ge1$. See Section V.A in \cite{shekhar2023nonparametric2sampletesting} for more details on time-varying distributions.

To find the privacy lower bound, we run parallel auditing processes for a set of candidate parameters $\{\varepsilon_i \}$. Each process tests a null hypothesis $H_0: MMD(P_1, P_2) < \tau(\epsilon_i, \delta)$. The empirical lower bound after $t$ steps, is the smallest candidate $\varepsilon_i$ that the test fails to reject at a confidence level $\alpha=0.05$. 

\paragraph{Experimental Results.}
We audit a DP-SGD mechanism with a 0.1 batch sampling rate. The estimated per-step lower-bound, $\varepsilon_{canary}$, does not account for the composition or subsampling amplification of the final model's total privacy cost ($q_c$ defined above is set to $q_c=1$ in our experiments).

\Cref{fig:dpsgd_audit} shows the results from five independent runs. For the private implementation (\Cref{fig:private_dpsgd}), our test correctly fails to reject the null hypothesis after 500 observations in all 5 runs, as expected for this confidence level, and  confirming the mechanism satisfies its expected privacy of $\varepsilon_{canary}=0.01$ (corresponding to a total $\varepsilon \approx 0.03 $).

In contrast, for the non-private implementation (\Cref{fig:nonprivate_dpsgd}), the audit successfully detects violations. It rejects the hypothesis $H_0: \varepsilon_{canary} \leq 0.01$ in an average of just 60 observations and $H_0: \varepsilon_{canary}=0.1$ in 75 observations. After 250 observations, our method establishes a privacy lower bound of $\varepsilon=0.43$ for this non-private mechanism, and $\varepsilon=0.59$ after 2,500 observations (in \Cref{sec:app:dp-sgd}, \Cref{fig:nonprivate_dpsgd_2500}).

\begin{figure}[htbp]
    \centering
    \begin{subfigure}[b]{0.46\textwidth}
        \centering
        \includegraphics[width=\textwidth]{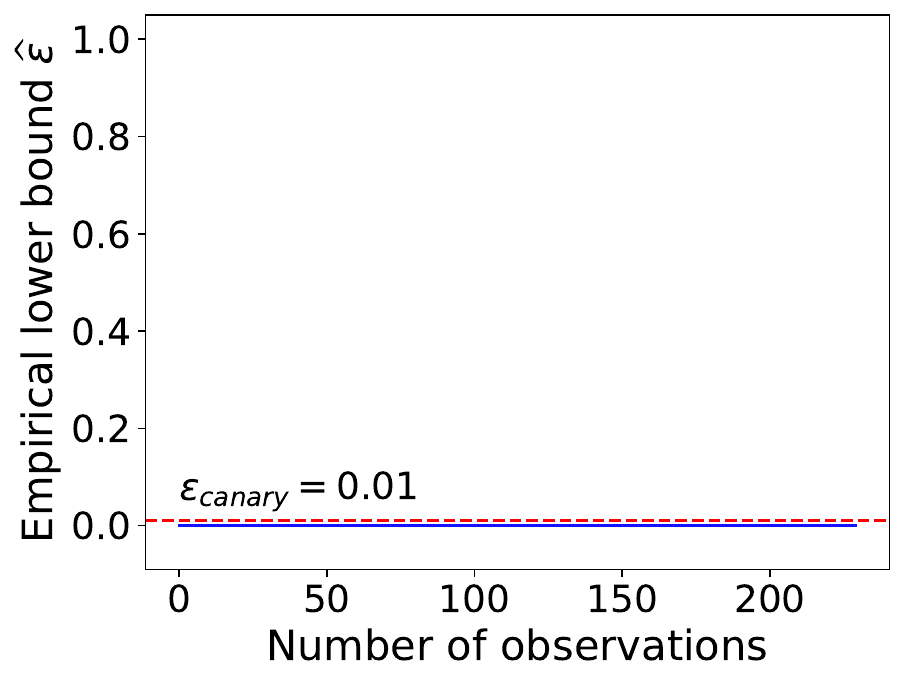}
        \caption{Private DP-SGD ($\epsilon_{canary}=0.01$).}
        \label{fig:private_dpsgd}
    \end{subfigure}
    \hfill
    \begin{subfigure}[b]{0.46\textwidth}
        \centering
        \includegraphics[width=\textwidth]{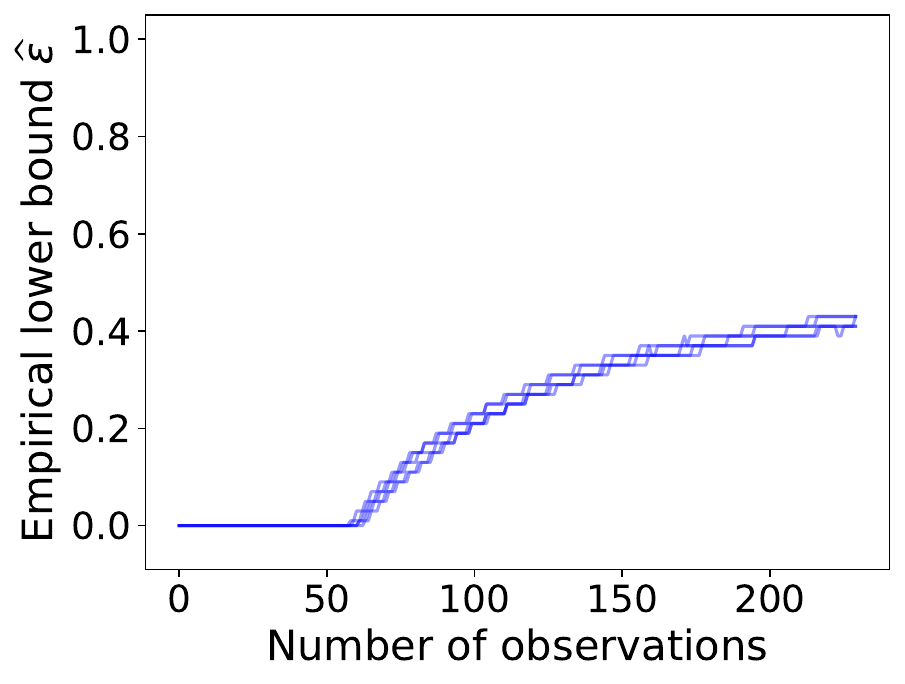}
        \caption{Non-private DP-SGD.}
        \label{fig:nonprivate_dpsgd}
    \end{subfigure}
    \caption{Sequential audit results for DP-SGD implementations during training under white-box access with canary gradient threat model over 5 independent runs. Private implementations (left) are correctly identified as satisfying the specified differential privacy guarantee, while non-private implementations (right) are successfully detected as privacy violations for non-trivial values of $\varepsilon$.}
    \label{fig:dpsgd_audit}
\end{figure}

\paragraph{Discussion.}
While effective, MMD-based tests face challenges when auditing large $\varepsilon$ values (e.g., $\varepsilon > 0.6$). In this regime, the MMD statistic and its rejection threshold $\tau$ both approach their theoretical maximum. This makes the gap, $\Delta^2$ (see \Cref{thm:statistical_properties}), extremely small, leading to a  large sample complexity, which is in the order of $1/\Delta^2$. We show in  \Cref{sec:app:dp-sgd} (\Cref{fig:private_dpsgd_eps3}) that when auditing a mechanism with a true $\varepsilon\approx3$, the empirical lower bound grows very slowly, confirming this limitation.

Despite this, our experiments validate that our sequential test can monitor and identify privacy violations in DP-SGD dynamically during training. Unlike prior methods requiring at least one full training run, our approach provides empirical privacy bounds with only a few hundred gradient iterations. This represents a significant advancement in privacy auditing efficiency, enabling practitioners to verify privacy guarantees with substantially reduced computational cost.


\section{Discussion}
Our work introduces a sequential test for auditing differential privacy guarantees that automatically adapts to the unknown hardness of the testing problem stated in \Cref{def:mmd-two-sample-test}. When compared with previous batch algorithms, our analysis and experiments show considerable 
improvements for detecting DP violations in non-private SGD and non-private mean estimation mechanisms. We also observe that auditors can efficiently characterize the privacy tradeoff function of a parameterized mechanism from a single observed stream by simultaneously testing multiple privacy levels ---similar to techniques used for estimating means of bounded random variables via betting strategies \cite{meanestimation_bybetting}. We used this idea to compute empirical lower bounds on the privacy parameters in \Cref{sec:experiments}.

A first limitation of our approach, as observed for some DP-SGD audits, is the very large sample complexity required for large $\eps$ privacy regimes. 
This is an inherent limitation of $\MMD$-based auditing that also affects previous batch tests relying on this statistic. A second limitation, shared with prior work, is the reliance on 
a fixed pair of adjacent datasets. Ideally, we would like to combine our algorithm with a procedure that adaptively finds a worst-case pair.

Extending the sequential framework to audit larger privacy parameters or other important privacy definitions, such as Rényi DP or $f$-DP, represent valuable avenues for future investigation. 

\begin{ack}
We thank Jamie Hayes for useful guidance on auditing DP-SGD. We thank the anonymous reviewer who pointed out the connection between $\MMD$ and Total Variation, which helped us improve the $\MMD$ bound in \Cref{thm:MMDimproved}. AR is supported by NSF grants DMS-2310718 and IIS-2229881. TG was partially funded to attend this conference by the CMU GSA/Provost Conference Funding. 

\end{ack}

\bibliographystyle{plain}
\bibliography{ref}

@inproceedings{abadi2016deep,
  title={Deep learning with differential privacy},
  author={Abadi, Martin and Chu, Andy and Goodfellow, Ian and McMahan, H Brendan and Mironov, Ilya and Talwar, Kunal and Zhang, Li},
  booktitle={Proceedings of the 2016 ACM SIGSAC Conference on Computer and Communications Security},
  pages={308--318},
  year={2016}
}

@article{manole2023martingale,
  title={Martingale methods for sequential estimation of convex functionals and divergences},
  author={Manole, Tudor and Ramdas, Aaditya},
  journal={IEEE Transactions on Information Theory},
  year={2023},
  publisher={IEEE}
}

@book{levin2017MarkovChainsAndMixing,
  title={Markov chains and mixing times},
  author={Levin, David A and Peres, Yuval},
  volume={107},
  year={2017},
  publisher={American Mathematical Soc.}
}

@article{orabona2023onlinelearning,
  title={A modern introduction to online learning},
  author={Orabona, Francesco},
  journal={arXiv preprint arXiv:1912.13213},
  year={2019}
}

@article{hazan2007ons,
  title={Logarithmic regret algorithms for online convex optimization},
  author={Hazan, Elad and Agarwal, Amit and Kale, Satyen},
  journal={Machine Learning},
  volume={69},
  number={2},
  pages={169--192},
  year={2007},
  publisher={Springer}
}

@book{ville1939etude,
  title={Etude critique de la notion de collectif},
  author={Ville, Jean},
  volume={3},
  year={1939},
  publisher={Gauthier-Villars Paris}
}

@inproceedings{dwork2006,
  title={Our data, ourselves: Privacy via distributed noise generation},
  author={Dwork, Cynthia and Kenthapadi, Krishnaram and McSherry, Frank and Mironov, Ilya and Naor, Moni},
  booktitle={Annual International Conference on the Theory and Applications of Cryptographic Techniques},
  pages={486--503},
  year={2006}
}

@article{shekhar2023nonparametric2sampletesting,
  title={Nonparametric two-sample testing by betting},
  author={Shekhar, Shubhanshu and Ramdas, Aaditya},
  journal={IEEE Transactions on Information Theory},
  year={2023},
  publisher={IEEE}
}

@article{testingBetting,
    author = {Shafer, Glenn},
    title = {Testing by Betting: A Strategy for Statistical and Scientific Communication},
    journal = {Journal of the Royal Statistical Society Series A: Statistics in Society},
    year = {2021},
}

@inproceedings{2024DPAuditorium,
  title={{DP-Auditorium}: A Large Scale Library for Auditing Differential Privacy},
  author={Kong, William and Medina, Andres Munoz and Ribero, Monica and Syed, Umar},
  booktitle={2024 IEEE Symposium on Security and Privacy (SP)},
  pages={219--219},
  year={2024}
}

@inproceedings{2023tightauditingDPML,
  title={Tight auditing of differentially private machine learning},
  author={Nasr, Milad and Hayes, Jamie and Steinke, Thomas and Balle, Borja and Tram{\`e}r, Florian and Jagielski, Matthew and Carlini, Nicholas and Terzis, Andreas},
  booktitle={32nd USENIX Security Symposium (USENIX Security 23)},
  pages={1631--1648},
  year={2023}
}

@article{mahloujifar24auditingFDP,
  title={Auditing $ f $-Differential Privacy in One Run},
  author={Mahloujifar, Saeed and Melis, Luca and Chaudhuri, Kamalika},
  journal={arXiv preprint arXiv:2410.22235},
  year={2024}
}

@article{schrab2023mmd,
  title={{MMD} aggregated two-sample test},
  author={Schrab, Antonin and Kim, Ilmun and Albert, M{\'e}lisande and Laurent, B{\'e}atrice and Guedj, Benjamin and Gretton, Arthur},
  journal={Journal of Machine Learning Research},
  volume={24},
  number={194},
  pages={1--81},
  year={2023}
}

@inproceedings{
AKOOMS23,
title={One-shot Empirical Privacy Estimation for Federated Learning},
author={Galen Andrew and Peter Kairouz and Sewoong Oh and Alina Oprea and Hugh Brendan McMahan and Vinith Menon Suriyakumar},
booktitle={The Twelfth International Conference on Learning Representations},
year={2024},
url={https://openreview.net/forum?id=0BqyZSWfzo}
}

@article{meanestimation_bybetting,
    author = {Waudby-Smith, Ian and Ramdas, Aaditya},
    title = {Estimating means of bounded random variables by betting},
    journal = {Journal of the Royal Statistical Society Series B: Statistical Methodology},
    volume = {86},
    number = {1},
    pages = {1-27},
    year = {2023},
    month = {02},
    abstract = {We derive confidence intervals (CIs) and confidence sequences (CSs) for the classical problem of estimating a bounded mean. Our approach generalizes and improves on the celebrated Chernoff method, yielding the best closed-form "empirical-Bernstein" CSs and CIs (converging exactly to the oracle Bernstein width) as well as non-closed-form "betting" CSs and CIs. Our method combines new composite nonnegative (super)martingales with Ville's maximal inequality, with strong connections to testing by betting and the method of mixtures. We also show how these ideas can be extended to sampling without replacement. In all cases, our bounds are adaptive to the unknown variance, and empirically vastly outperform prior approaches, establishing a new state-of-the-art for four fundamental problems: CSs and CIs for bounded means, when sampling with and without replacement.},
    issn = {1369-7412},
    doi = {10.1093/jrsssb/qkad009},
    url = {https://doi.org/10.1093/jrsssb/qkad009},
    eprint = {https://academic.oup.com/jrsssb/article-pdf/86/1/1/56961777/qkad009.pdf},
}

@article{waudby2025logopt,
  title={Universal log-optimality for general classes of e-processes and sequential hypothesis tests},
  author={Waudby-Smith, Ian and Sandoval, Ricardo and Jordan, Michael I},
  journal={arXiv preprint arXiv:2504.02818},
  year={2025}
}

@article{jagielski2020auditing,
  title={Auditing differentially private machine learning: How private is private {SGD}?},
  author={Jagielski, Matthew and Ullman, Jonathan and Oprea, Alina},
  journal={Advances in Neural Information Processing Systems},
  volume={33},
  pages={22205--22216},
  year={2020}
}

@article{BDKRW21,
  title={Variational representations and neural network estimation of {R}{\'e}nyi divergences},
  author={Birrell, Jeremiah and Dupuis, Paul and Katsoulakis, Markos A and Rey-Bellet, Luc and Wang, Jie},
  journal={SIAM Journal on Mathematics of Data Science},
  year={2021},
}

@inproceedings{kairouz2015composition_of_DP,
  title={The composition theorem for differential privacy},
  author={Kairouz, Peter and Oh, Sewoong and Viswanath, Pramod},
  booktitle={International conference on machine learning},
  pages={1376--1385},
  year={2015},
  organization={PMLR}
}

@article{DGKK22,
  title={Connect the Dots: Tighter Discrete Approximations of Privacy Loss Distributions},
  author={Doroshenko, Vadym and Ghazi, Badih and Kamath, Pritish and Kumar, Ravi and Manurangsi, Pasin},
  journal={Proceedings on Privacy Enhancing Technologies},
  year={2022}
}

@inproceedings{DJT13,
	author = {Kashyap Dixit and Madhav Jha and Sofya Raskhodnikova and Abhradeep Thakurta},
	title = {Testing the {L}ipschitz Property over Product Distributions with Applications to Data Privacy},
    year = 2013,
	booktitle = {Theory of Cryptography Conference (TCC)}
}

@article{DM22,
  title={Auditing Differential Privacy in High Dimensions with the Kernel Quantum {R}\'enyi Divergence},
  author={Domingo-Enrich, Carles and Mroueh, Youssef},
  journal={arXiv preprint arXiv:2205.13941},
  year={2022}
}

@inproceedings{DMNS06,
  title={Calibrating noise to sensitivity in private data analysis},
  author={Dwork, Cynthia and McSherry, Frank and Nissim, Kobbi and Smith, Adam},
  booktitle={Theory of Cryptography Conference (TCC)},
  year={2006}
}

@inproceedings{GM18,
  title={Property testing for differential privacy},
  author={Gilbert, Anna C and McMillan, Audra},
  booktitle={Allerton Conference on Communication, Control, and Computing},
  year={2018},
}

@inproceedings{JE19,
  title={Evaluating differentially private machine learning in practice},
  author={Jayaraman, Bargav and Evans, David},
  booktitle={USENIX Security Symposium},
  year={2019}
}

@article{NWJ10,
  author    = {XuanLong Nguyen and
               Martin J. Wainwright and
               Michael I. Jordan},
  title     = {Estimating Divergence Functionals and the Likelihood Ratio by Convex
               Risk Minimization},
  journal   = {{IEEE} Transactions on Information Theory},
  volume    = {56},
  number    = {11},
  year      = {2010},
}

@article{lu2022eureka,
  title={Eureka: a general framework for black-box differential privacy estimators},
  author={Lu, Yun and Wei, Yu and Magdon-Ismail, Malik and Zikas, Vassilis},
  journal={Cryptology ePrint Archive},
  year={2022}
}

@inproceedings{lokna2023group,
  title={Group and Attack: Auditing Differential Privacy},
  author={Lokna, Johan and Paradis, Anouk and Dimitrov, Dimitar I and Vechev, Martin},
  booktitle={Proceedings of the 2023 ACM SIGSAC Conference on Computer and Communications Security},
  pages={1905--1918},
  year={2023}
}

@article{cover2002up_sideinfo,
  title={Universal portfolios with side information},
  author={Cover, Thomas M and Ordentlich, Erik},
  journal={IEEE Transactions on Information Theory},
  volume={42},
  number={2},
  pages={348--363},
  year={2002},
  publisher={IEEE}
}

@article{cover1991up,
  title={Universal portfolios},
  author={Cover, Thomas M},
  journal={Mathematical finance},
  volume={1},
  number={1},
  pages={1--29},
  year={1991},
  publisher={Wiley Online Library}
}

@inproceedings{niu2022dp,
  title={{DP-Opt}: identify high differential privacy violation by optimization},
  author={Niu, Ben and Zhou, Zejun and Chen, Yahong and Cao, Jin and Li, Fenghua},
  booktitle={International Conference on Wireless Algorithms, Systems, and Applications},
  pages={406--416},
  year={2022},
  organization={Springer}
}

@article{gretton2012kernel,
  title={A kernel two-sample test},
  author={Gretton, Arthur and Borgwardt, Karsten M and Rasch, Malte J and Sch{\"o}lkopf, Bernhard and Smola, Alexander},
  journal={The Journal of Machine Learning Research},
  volume={13},
  number={1},
  pages={723--773},
  year={2012},
  publisher={JMLR. org}
}

@article{sriperumbudur2012empirical,
  title={On the empirical estimation of integral probability metrics},
  author={Sriperumbudur, Bharath K and Fukumizu, Kenji and Gretton, Arthur and Sch{\"o}lkopf, Bernhard and Lanckriet, Gert RG},
  journal={Electronic Journal of Statistics},
  volume={6},
  pages={1550--1599},
  year={2012},
  publisher={Institute of Mathematical Statistics and Bernoulli Society}
}

@inproceedings{statdp,
author = {Ding, Zeyu and Wang, Yuxin and Wang, Guanhong and Zhang, Danfeng and Kifer, Daniel},
title = {Detecting Violations of Differential Privacy},
year = {2018},
booktitle = {Proceedings of the 2018 ACM SIGSAC Conference on Computer and Communications Security},
pages = {475–489},
}

@inproceedings{dpsniper,
  title={{DP-Sniper}: Black-box discovery of differential privacy violations using classifiers},
  author={Bichsel, Benjamin and Steffen, Samuel and Bogunovic, Ilija and Vechev, Martin},
  booktitle={2021 IEEE Symposium on Security and Privacy (SP)},
  pages={391--409},
  year={2021},
  organization={IEEE}
}

@inproceedings{ROF21,
author = {Rahimian, Shadi and Orekondy, Tribhuvanesh and Fritz, Mario},
title = {Differential Privacy Defenses and Sampling Attacks for Membership Inference},
year = {2021},
booktitle = {ACM Workshop on Artificial Intelligence and Security},
}

@inproceedings{CYZF20,
author = {Chen, Dingfan and Yu, Ning and Zhang, Yang and Fritz, Mario},
title = {GAN-Leaks: A Taxonomy of Membership Inference Attacks against Generative Models},
year = {2020},
booktitle = {ACM SIGSAC Conference on Computer and Communications Security},
}

@inproceedings{Balle22,
  author    = {Borja Balle and
               Giovanni Cherubin and
               Jamie Hayes},
  title     = {Reconstructing Training Data with Informed Adversaries},
  booktitle = {43rd {IEEE} Symposium on Security and Privacy (SP)},
  pages     = {1138--1156},
  publisher = {{IEEE}},
  year      = {2022},
}

@inproceedings{Guo22,
  author    = {Chuan Guo and
               Brian Karrer and
               Kamalika Chaudhuri and
               Laurens van der Maaten},
  title     = {Bounding Training Data Reconstruction in Private (Deep) Learning},
  booktitle = {International Conference on Machine Learning {(ICML)}},
  year      = {2022},
}

@article{garreau2017large,
  title={Large sample analysis of the median heuristic},
  author={Garreau, Damien and Jitkrittum, Wittawat and Kanagawa, Motonobu},
  journal={arXiv preprint arXiv:1707.07269},
  year={2017}
}

@misc{USCensusBureau,
  author       = {{U.S. Census Bureau}},
  title        = {Why the Census Bureau Chose Differential Privacy},
  year         = {2021},
  howpublished = {\url{https://www2.census.gov/library/publications/decennial/2020/census-briefs/c2020br-03.pdf}},
  note         = {Accessed: 2025-05-15}
}

@misc{GoogleDP,
  author = {Miguel Guevara},
  title = {Sharing our latest differential privacy milestones and advancements},
  journal = {Google Developers Blog},
  year = {2024},
  howpublished = {\url{https://developers.googleblog.com/en/sharing-our-latest-differential-privacy-milestones-and-advancements/}},
  note = {Accessed 2025-05-15}
}

@misc{LinkedInEng,
  author       = {Ryan Rogers and Subbu Subramaniam and Lin Xu},
  title        = {Privacy-Preserving Single-Post Analytics},
  journal      = {LinkedIn Engineering Blog},
  year         = {2023},
  howpublished = {\url{https://www.linkedin.com/blog/engineering/trust-and-safety/privacy-preserving-single-post-analytics}},
  note         = {Accessed: 2025-05-15}
}

@misc{Apple,
  author       = {{Apple Inc.}},
  title        = {Differential Privacy Overview},
  year         = {2017},
  howpublished = {\url{https://www.apple.com/privacy/docs/Differential_Privacy_Overview.pdf}},
  note         = {Accessed: 2025-05-15}
}

@article{LSL17,
author = {Lyu, Min and Su, Dong and Li, Ninghui},
title = {Understanding the sparse vector technique for differential privacy},
year = {2017},
issue_date = {February 2017},
publisher = {VLDB Endowment},
volume = {10},
number = {6},
issn = {2150-8097},
url = {https://doi.org/10.14778/3055330.3055331},
doi = {10.14778/3055330.3055331},
journal = {Proc. VLDB Endow.},
pages = {637–648},
numpages = {12}
}

@article{2024e-vals-book,
  title={Hypothesis testing with e-values},
  author={Ramdas, Aaditya and Wang, Ruodu},
  journal={arXiv preprint arXiv:2410.23614},
  year={2024}
}

@article{ramdas2020admissible,
  title={Admissible anytime-valid sequential inference must rely on nonnegative martingales},
  author={Ramdas, Aaditya and Ruf, Johannes and Larsson, Martin and Koolen, Wouter},
  journal={arXiv preprint arXiv:2009.03167},
  year={2020}
}

@inproceedings{
richter2024auditing,
title={An Auditing Test to Detect Behavioral Shift in Language Models},
author={Leo Richter and Xuanli He and Pasquale Minervini and Matt Kusner},
booktitle={The Thirteenth International Conference on Learning Representations},
year={2025},
url={https://openreview.net/forum?id=h0jdAboh0o}
}

@inproceedings{waudby2021rilacs,
  title={{RiLACS}: risk limiting audits via confidence sequences},
  author={Waudby-Smith, Ian and Stark, Philip B and Ramdas, Aaditya},
  booktitle={Electronic Voting: 6th International Joint Conference},
  pages={124--139},
  year={2021},
  organization={Springer}
}

@article{chugg2023auditing,
  title={Auditing fairness by betting},
  author={Chugg, Ben and Cortes-Gomez, Santiago and Wilder, Bryan and Ramdas, Aaditya},
  journal={Advances in Neural Information Processing Systems},
  volume={36},
  pages={6070--6091},
  year={2023}
}

@inproceedings{shekhar2023risk,
  title={Risk-limiting financial audits via weighted sampling without replacement},
  author={Shekhar, Shubhanshu and Xu, Ziyu and Lipton, Zachary and Liang, Pierre and Ramdas, Aaditya},
  booktitle={Uncertainty in Artificial Intelligence},
  pages={1932--1941},
  year={2023}
}

@inproceedings{song2013stochastic,
  title={Stochastic gradient descent with differentially private updates},
  author={Song, Shuang and Chaudhuri, Kamalika and Sarwate, Anand D},
  booktitle={2013 IEEE global conference on signal and information processing},
  pages={245--248},
  year={2013},
  organization={IEEE}
}

\newpage

\appendix

\section{Improvement on Lower-Bounding Hockey-Stick Divergence with the MMD}
\label{appendix:mmd_bound}

Theorem 4.13 in \cite{2024DPAuditorium} introduces a lower bound on the MMD between two distributions that is parameterized by the order and magnitude of  corresponding Hockey-Stick divergence. This connection divergence enables DP auditing with the MMD. Theorem \ref{thm:MMDimproved} below tightens this bound
. For completeness, we present first Theorem 4.13 in \cite{2024DPAuditorium} and then a proof of \Cref{thm:MMDimproved}. 

\begin{lemma*}\label{lem:MMD} (\cite[Theorem 4.13]{2024DPAuditorium})
    Let $\mathcal{H}$ be a reproducing kernel Hilbert space with domain $\mathcal{X}$ and kernel $K(\cdot,\cdot)$. Suppose $K(x,x) \leq 1$ for all $x\in \mathcal{X}$. If mechanism $\A$ is $(\epsilon, \delta)$-DP, then for any $S \sim S'$
    \begin{equation}
       \MMD(\A(S), \A(S')) \leq e^{\epsilon} - 1 + (e^{-\epsilon} + 1)\delta.
    \end{equation}
\label{lemma:mmd_bound}
\end{lemma*}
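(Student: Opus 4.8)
The plan is to collapse the kernel machinery into a total-variation estimate and then prove a total-variation bound for $(\epsilon,\delta)$-DP distributions. Write $P:=\A(S)$ and $Q:=\A(S')$. Since every $f$ with $\|f\|_\calH\le 1$ satisfies $|f(x)|=|\langle f,K(x,\cdot)\rangle_\calH|\le \|f\|_\calH\sqrt{K(x,x)}\le 1$ for all $x$, the RKHS unit ball is contained pointwise in the $L^\infty$ unit ball. Hence
\[
\MMD(P,Q)=\sup_{\|f\|_\calH\le1}\big(\Exp_P[f]-\Exp_Q[f]\big)\le \sup_{|f|\le1}\big(\Exp_P[f]-\Exp_Q[f]\big)=2\TV(P,Q),
\]
by the variational characterization of total variation. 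It therefore suffices to show $2\TV(P,Q)\le e^\epsilon-1+(e^{-\epsilon}+1)\delta$, and the kernel plays no further role beyond $K(x,x)\le1$.

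Next I would convert the DP hypothesis into two hockey-stick inequalities. By \Cref{def:DP}, $(\epsilon,\delta)$-DP gives $D_{\eeps}(P\,\|\, Q)\le\delta$, and since the neighboring relation is symmetric, also $D_{\eeps}(Q\,\|\, P)\le\delta$. Fixing the dominating measure $\mu:=P+Q$ with densities $p,q$, recall $D_{\eeps}(P\,\|\, Q)=\int (p-e^\epsilon q)_+\,d\mu$ and symmetrically for $D_{\eeps}(Q\,\|\, P)$; in particular, the part of $P$ that is singular with respect to $Q$ (where $q=0$) is absorbed into the term $\int (p-e^\epsilon q)_+\,d\mu$, so no separate accounting of singular mass is needed.

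The core step is a region decomposition $2\TV(P,Q)=\int|p-q|\,d\mu=\int_{p\ge q}(p-q)\,d\mu+\int_{p<q}(q-p)\,d\mu$. On $\{p\ge q\}$ I split at the threshold $e^\epsilon q$: on $\{q\le p\le e^\epsilon q\}$ the integrand is at most $(e^\epsilon-1)q$, while on $\{p>e^\epsilon q\}$ I write $p-q=(p-e^\epsilon q)+(e^\epsilon-1)q$ and bound the integral of the first summand by $D_{\eeps}(P\,\|\, Q)\le\delta$. This yields $\int_{p\ge q}(p-q)\,d\mu\le (e^\epsilon-1)\int_{p\ge q}q\,d\mu+\delta$. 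On $\{p<q\}$ I split symmetrically at $e^{-\epsilon}q$: on $\{e^{-\epsilon}q\le p<q\}$ the integrand is at most $(1-e^{-\epsilon})q$, and on $\{q>e^\epsilon p\}$ I write $q-p=(1-e^{-\epsilon})q+e^{-\epsilon}(q-e^\epsilon p)$ and bound the integral of the last summand by $e^{-\epsilon}D_{\eeps}(Q\,\|\, P)\le e^{-\epsilon}\delta$, giving $\int_{p<q}(q-p)\,d\mu\le(1-e^{-\epsilon})\int_{p<q}q\,d\mu+e^{-\epsilon}\delta$.

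Summing the two regions, using $1-e^{-\epsilon}\le e^\epsilon-1$ and $\int_{p\ge q}q\,d\mu+\int_{p<q}q\,d\mu=\int q\,d\mu=1$, gives
\[
2\TV(P,Q)\le(e^\epsilon-1)+\delta+e^{-\epsilon}\delta=e^\epsilon-1+(e^{-\epsilon}+1)\delta,
\]
which completes the proof via the first-paragraph reduction. I expect the main obstacle to be the careful region bookkeeping of the third paragraph—in particular ensuring that the extreme-ratio and $Q$-singular contributions coincide exactly with the hockey-stick terms rather than being double counted, and that the factor $e^{-\epsilon}$ is correctly tracked in the lower region. The mild slack $1-e^{-\epsilon}\le e^\epsilon-1$ is precisely what produces the stated (non-tight) constant.
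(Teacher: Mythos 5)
Your proposal is correct, but note that the paper does not actually prove this lemma: it is quoted verbatim from \cite[Theorem 4.13]{2024DPAuditorium} as a point of comparison, and the only proof in the paper is of the \emph{improved} bound (\Cref{thm:MMDimproved}). So your argument should be compared to that adjacent proof. Both share the skeleton ``bound MMD by total variation, then bound TV under $(\eps,\delta)$-DP,'' but the two halves differ. For the first half, the paper uses the mean-embedding identity $\MMD(P,Q)=\|\mu_P-\mu_Q\|_\calH$ plus a coupling/Jensen argument, which yields the sharper constant $\MMD\le\sqrt{2}\,\TV(P,Q)$ but requires the extra hypothesis $0\le K(x,y)\le 1$; your route via the $L^\infty$ embedding of the RKHS unit ball gives only $\MMD\le 2\,\TV(P,Q)$ but needs just $K(x,x)\le 1$, which is exactly the lemma's (weaker) assumption—so your reduction is the one actually compatible with the statement as written. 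For the second half, the paper imports the tight bound $\TV\le 1-\frac{2(1-\delta)}{1+e^\eps}$ from Kairouz et al., whereas you prove a TV bound from scratch by the two-sided hockey-stick decomposition; your region bookkeeping is correct (the identities $p-q=(p-e^\eps q)+(e^\eps-1)q$ on $\{p>e^\eps q\}$ and $q-p=(1-e^{-\eps})q+e^{-\eps}(q-e^\eps p)$ on $\{q>e^\eps p\}$ check out, the singular part of $P$ is indeed absorbed into $(p-e^\eps q)_+$ under the dominating measure $\mu=P+Q$, and symmetry of the adjacency relation legitimately gives both hockey-stick inequalities), and the slack $1-e^{-\eps}\le e^\eps-1$ is precisely what produces the lemma's non-tight constant, as you observe. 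One remark: had you instead combined your first step with the tight TV bound, you would get $\MMD\le 2\bigl(1-\frac{2(1-\delta)}{1+e^\eps}\bigr)$, which already improves on the lemma and sits between it and \Cref{thm:MMDimproved}—a clean illustration that the lemma's looseness lives in both halves of the argument, not just the MMD-to-TV step.
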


\begin{theorem*}{(Theorem~\ref{thm:MMDimproved} in the main body).  }
    Let $\mathcal{H}$ be a reproducing kernel Hilbert space with domain $\mathcal{X}$ and kernel $K(\cdot,\cdot)$. Suppose $0\leq K(x,y) \leq 1$ for all $x,y\in \mathcal{X}$. If mechanism $\A$ is $(\epsilon, \delta)$-DP, then for any $S \sim S'$
    \begin{equation}
       \MMD(\A(S), \A(S')) \leq \sqrt{2}\left(1 - \frac{2(1-\delta)}{1 + e^\eps}\right).
    \end{equation}
\end{theorem*}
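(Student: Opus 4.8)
The plan is to route the bound through the total variation (TV) distance, exploiting that the stronger hypothesis $0 \le K(x,y) \le 1$ controls the geometry of the feature map. Writing $\phi(x) = K(x,\cdot)$ and $\mu_\PP = \Exp_{X\sim\PP}[\phi(X)]$ for the mean embeddings, I will use the standard identity $\MMD(\PP,\mathbb{Q}) = \|\mu_\PP - \mu_\mathbb{Q}\|_\calH$. The goal is the two-step chain
\[
\MMD(\A(S),\A(S')) \;\le\; \sqrt{2}\,\TV(\A(S),\A(S')) \;\le\; \sqrt{2}\Bigl(1 - \tfrac{2(1-\delta)}{1+e^\eps}\Bigr),
\]
where the first inequality is a purely kernel-geometric fact and the second is the consequence of $(\epsilon,\delta)$-DP.

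First I would prove $\MMD \le \sqrt 2\,\TV$. Let $(X,Y)$ be a maximal coupling of $\PP := \A(S)$ and $\mathbb{Q} := \A(S')$, so that $\PP[X \ne Y] = \TV(\PP,\mathbb{Q})$ while the marginals are preserved. Then $\mu_\PP - \mu_\mathbb{Q} = \Exp[\phi(X) - \phi(Y)] = \Exp[(\phi(X)-\phi(Y))\mathbbm{1}\{X\ne Y\}]$, since the integrand vanishes on $\{X=Y\}$. Applying the triangle inequality for Bochner integrals together with the pointwise estimate
\[
\|\phi(x)-\phi(y)\|_\calH^2 = K(x,x) + K(y,y) - 2K(x,y) \le 2,
\]
gives $\|\mu_\PP-\mu_\mathbb{Q}\|_\calH \le \sqrt 2\,\PP[X\ne Y] = \sqrt 2\,\TV$. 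An equivalent route avoids coupling: decompose the signed measure $\PP - \mathbb{Q}$ via Hahn--Jordan and write $\mu_\PP-\mu_\mathbb{Q} = \TV\,(u-v)$, where $u,v$ are the mean embeddings of the normalized positive and negative parts; then $\|u-v\|_\calH^2 = \|u\|_\calH^2+\|v\|_\calH^2 - 2\langle u,v\rangle_\calH \le 2$ using $\|u\|_\calH,\|v\|_\calH\le1$ and $\langle u,v\rangle_\calH \ge 0$.

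Second I would bound the TV distance under DP. Let $E^\star = \{\,dP/dQ > 1\,\}$, so $\TV = \PP(E^\star) - \mathbb{Q}(E^\star)$. Since neighboring is symmetric, $(\epsilon,\delta)$-DP holds in both directions, yielding $\PP(E^\star) \le e^\eps\mathbb{Q}(E^\star) + \delta$ and $\mathbb{Q}((E^\star)^c) \le e^\eps \PP((E^\star)^c) + \delta$. Taking the convex combination of these two inequalities with equal weights $\tfrac{1}{1+e^\eps}$ cancels the unknown masses $\PP(E^\star),\mathbb{Q}(E^\star)$ and leaves $\TV \le \frac{e^\eps - 1 + 2\delta}{1+e^\eps} = 1 - \frac{2(1-\delta)}{1+e^\eps}$. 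Composing the two steps gives the claim.

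The hard part will be the first step. The delicate points are justifying the exchange of norm and expectation (the Bochner/Jensen estimate) and making the coupling argument rigorous, but the conceptually crucial observation is that the improvement over \cite{2024DPAuditorium} comes \emph{precisely} from using $K \ge 0$ in the cross term $-2K(x,y)$: without nonnegativity one only gets $\|\phi(x)-\phi(y)\|_\calH^2 \le 4$ and hence the looser $\MMD \le 2\,\TV$. By contrast, the TV-under-DP bound is a short linear-programming-type argument that should be routine once the correct pair of inequalities and the weights $\tfrac{1}{1+e^\eps}$ are identified.
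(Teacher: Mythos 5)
Your proposal is correct and follows essentially the same route as the paper's own proof: both establish $\MMD \le \sqrt{2}\,\TV$ via a coupling argument with the pointwise kernel bound $\|K(\cdot,x)-K(\cdot,y)\|_\calH^2 = K(x,x)+K(y,y)-2K(x,y) \le 2$ (which is exactly where $0 \le K \le 1$ enters), and then apply $\TV(\A(S),\A(S')) \le 1 - \frac{2(1-\delta)}{1+e^\eps}$ under $(\eps,\delta)$-DP. The only differences are cosmetic: the paper bounds an arbitrary coupling and then takes the infimum (and cites Kairouz et al.\ for the TV-under-DP inequality), whereas you work with a maximal coupling directly and re-derive that inequality by the standard two-sided-DP argument---both are correct.
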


\begin{proof}[Proof of Theorem \ref{thm:MMDimproved}] 
Let $P = \A(S)$ and $Q = \A(S')$. It is well-known that
\[\MMD(P,Q) = \|\mu_P - \mu_Q\|_{\mathcal{H}},\]
where $\mu_P = \Exp_{X\sim P}[K(\cdot, X)]$ and $\mu_Q = \Exp_{Y\sim P}[K(\cdot, Y)]$. It follows by Jensen's inequality that for any coupling $\pi$ between $P$ and $Q$
\begin{align*}
    \MMD(P,Q) &= \|\Exp_{X\sim P}[K(\cdot, X)] - \Exp_{Y\sim Q}[K(\cdot, Y)]\|_{\mathcal{H}}\\
    &=\|\Exp_{(X, Y)\sim \pi}[K(\cdot, X) - K(\cdot, Y)]\|_{\mathcal{H}}\\
    &\leq \Exp_{(X, Y)\sim \pi}[\|K(\cdot, X) - K(\cdot, Y)\|_{\mathcal{H}}]\\
    &\leq \left( \sup_{x,y \in \X} \|K(\cdot, x) - K(\cdot, y)\|_{\mathcal{H}}\right)\Exp_{(X, Y)\sim \pi}[\mathbbm{1}_{X\neq Y}]\\
    &= \left( \sup_{x,y \in \X} \|K(\cdot, x) - K(\cdot, y)\|_{\mathcal{H}}\right) \PP_{(X, Y)\sim \pi}[X \neq Y].
\end{align*}
First, we bound the term $\sup_{x,y \in \X} \|K(\cdot, x) - K(\cdot, y)\|_{\mathcal{H}}$. Note that for any $x,y \in \X$ we have
\begin{align*}
    \|K(\cdot, x) - K(\cdot, y)\|_{\mathcal{H}}^2 &= K(x,x) + K(y,y) - 2K(x,y)\\
    &\leq 2(1-K(x,y))\\
    &\leq 2,
\end{align*}
where we have used that the assumption that $0\leq K(x,y)\leq 1$ for all $x,y \in \X$. We conclude that 
$ \sup_{x,y \in \X} \|K(\cdot, x) - K(\cdot, y)\|_{\mathcal{H}}^2 \leq 2$, which is equivalent to $\sup_{x,y \in \X} \|K(\cdot, x) - K(\cdot, y)\|_{\mathcal{H}} \leq \sqrt{2}$. Pluggin this into the MMD bound that we obtained above, we get  that for any coupling $\pi$ between $P$ and $Q$
\[\MMD(P,Q) \leq \sqrt{2}\PP_{(X, Y)\sim \pi}[X \neq Y].\]
Minimizing the right hand side over all couplings, we obtain
\[\MMD(P,Q) \leq \sqrt{2}\TV(P,Q),\]
since $\inf_{\pi \in \Pi(P,Q)} \PP_{(X, Y)\sim \pi}[X \neq Y] = \TV(P,Q)$, where $\Pi(P,Q)$ denotes the set of all couplings between $P$ and $Q$ (see, e.g., \cite[Proposition 4.7]{levin2017MarkovChainsAndMixing}). Finally, we use that
\[\TV(P,Q) \leq 1 - \frac{2(1-\delta)}{1+e^\eps},\]
as stated in \cite[Remark A.1.b]{kairouz2015composition_of_DP}.  

\begin{remark}
    The $\MMD$ bound that we provide in \Cref{thm:MMDimproved} strictly improves over the one from \cite[Theorem 4.13]{2024DPAuditorium} for all privacy parameters $\eps >0, \delta \in (0,1)$: 
    \begin{align*}
        \sqrt{2}\left(1 - \frac{2(1-\delta)}{1 + e^\eps}\right) &= \frac{\sqrt{2}}{1+e^\eps} (1+e^\eps - 2 + \delta)\\
        &\leq e^\eps - 1 + \delta \\
        &< e^\eps - 1 + \delta + \delta e^{-\eps}\\
        &= e^\eps - 1 + \delta(1+e^{-\eps}).
    \end{align*}
\end{remark}

\end{proof}

\section{Proofs in \texorpdfstring{\Cref{sec:seq_aud}}{}}
\begin{theorem*}[\Cref{thm:abstract-test} in the main body]

Let $f^*$ be the witness function for $MMD(\A(S)||\A(S'))$ (as defined in \cref{def:MMD}). Given samples $\{X_t, Y_t\}_{t\ge 1} \overset{iid}{\sim} \A(S) \times \A(S')$ and a fixed hyperparameter $\lambda>0$, define the stochastic process $\{\K_t^*(\lambda)\}_{t\ge 0}$ as follows:
\begin{align*}
    \K_0^* &= 1\\
    \K_{t}^*(\lambda) &= \K^*_{t-1}(\lambda) \times (1 + \lambda[f^*(X_t) - f^*(Y_t) - \tau(\epsilon, \delta)]),
\end{align*}
where $\tau(\epsilon, \delta)$, defined in \cref{def:mmd-two-sample-test}, represents the expected upper bound on MMD under the null hypothesis. Then it holds that:
\begin{enumerate}
    \item Under the null hypothesis ($H_0:\MMD(\A(S), \A(S')) \leq \tau(\eps, \delta)$), for any $\lambda \in [0,\nicefrac{1}{(2 + \tau(\eps,\delta))}]$ and any $\alpha \in (0,1)$, we have $\PP[ \sup_{t\ge 1} \K_t^*(\lambda) \ge 1/\alpha] \leq \alpha$.
    \item Under the alternative ($H_1:\MMD(\A(S), \A(S')) > \tau(\eps, \delta)$), there exists a $\lambda^* \in [0, \frac{1}{8 + 4\tau(\eps,\delta)}]$ such that $\lim_{t\to \infty} \frac{\log \K_t^*(\lambda^*)}{t} = \Omega(\Delta^2)$ almost surely, where $\Delta =\MMD(\A(S), \A(S')) - \tau(\eps, \delta)$. 
\end{enumerate}
\end{theorem*}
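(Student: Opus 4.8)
The plan is to handle the two parts by separate mechanisms: a supermartingale-plus-Ville argument for Part 1, and an almost-sure log-growth computation via the strong law of large numbers (SLLN) for Part 2. Abbreviate $\tau = \tau(\eps,\delta)$ throughout.

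For Part 1, I would first record that the witness function is bounded pointwise: by the reproducing property and $K(x,x)\le 1$, $|f^*(x)| = |\inner{f^*, K(\cdot,x)}_{\calH}| \le \norm{f^*}_{\calH}\sqrt{K(x,x)} \le 1$, so the increment $f^*(X_t) - f^*(Y_t) - \tau$ lies in $[-(2+\tau),\, 2-\tau]$. Each factor $1 + \lambda[f^*(X_t)-f^*(Y_t)-\tau]$ is therefore nonnegative exactly when $\lambda(2+\tau)\le 1$, i.e. on the stated interval $[0, 1/(2+\tau)]$, which gives $\K_t^*(\lambda)\ge 0$; the process is also bounded hence integrable. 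Next I would compute the one-step conditional expectation: since $(X_t,Y_t)$ is independent of $\mathcal{F}_{t-1}$ and $f^*$ attains the supremum in \Cref{def:MMD}, $\Exp[f^*(X_t)-f^*(Y_t)] = \MMD(\A(S),\A(S'))$, so $\Exp[\K_t^*(\lambda)\mid\mathcal{F}_{t-1}] = \K_{t-1}^*(\lambda)\,(1+\lambda(\MMD-\tau))$. Under $H_0$ we have $\MMD\le\tau$ and $\lambda\ge 0$, hence the factor is $\le 1$ and $\K_t^*(\lambda)$ is a nonnegative supermartingale with $\K_0^*=1$; Ville's inequality (\Cref{def:villesineq}) then yields $\PP[\sup_{t\ge1}\K_t^*(\lambda)\ge 1/\alpha]\le\alpha$.

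For Part 2, I would write $\log\K_t^*(\lambda)=\sum_{i=1}^t\log(1+\lambda Z_i)$ with $Z_i := f^*(X_i)-f^*(Y_i)-\tau$ i.i.d., $\Exp[Z_i]=\Delta=\MMD-\tau>0$, and $|Z_i|\le 2+\tau$. Restricting to $\lambda\le 1/(8+4\tau)=1/(4(2+\tau))$ keeps $1+\lambda Z_i\ge 3/4$, so $\log(1+\lambda Z_i)$ is bounded and SLLN gives $\lim_t t^{-1}\log\K_t^*(\lambda)=g(\lambda):=\Exp[\log(1+\lambda Z)]$ almost surely. Applying the elementary bound $\log(1+x)\ge x-x^2$ for $x\ge-1/2$ (valid since $\lambda|Z|\le 1/4$) yields $g(\lambda)\ge\lambda\Delta-\lambda^2\Exp[Z^2]\ge\lambda\Delta-\lambda^2(2+\tau)^2$. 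It then remains to exhibit an admissible $\lambda^*$ making this $\Omega(\Delta^2)$.

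The crux—and where the one-sidedness bites—is that I cannot simply use the unconstrained maximizer $\bar\lambda=\Delta/(2(2+\tau)^2)$ of this quadratic, since the sign restriction $\lambda\ge 0$ together with the shrunken interval may exclude it. I would take $\lambda^*=\min\{\bar\lambda,\, 1/(4(2+\tau))\}$ and split into two cases. When $\bar\lambda$ is admissible ($\Delta\le(2+\tau)/2$), substituting gives $g(\lambda^*)\ge\Delta^2/(4(2+\tau)^2)$. When instead $\lambda^*=1/(4(2+\tau))$ is forced (by $\Delta>(2+\tau)/2$), the lower bound gives $g(\lambda^*)\ge\frac{\Delta}{4(2+\tau)}-\frac{1}{16}>\frac{1}{16}$, and since $\Delta\le\sqrt{2}$ by \Cref{thm:MMDimproved} this is still $\Omega(\Delta^2)$. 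Because $\tau$ is a fixed constant bounded by $\sqrt{2}$, both cases give $g(\lambda^*)=\Omega(\Delta^2)$, establishing the claimed growth rate. The main obstacle is precisely this constrained optimization within $[0,1/(8+4\tau)]$: the inability to use a symmetric interval or the unconstrained optimum (as in the two-sided analysis of \cite{shekhar2023nonparametric2sampletesting}) is exactly what necessitates the case split above.
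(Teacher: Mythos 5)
Your proof is correct, and while Part 1 matches the paper's argument exactly (pointwise bound $|f^*|\le 1$ via the reproducing property, supermartingale property from $\Exp[f^*(X_t)-f^*(Y_t)]=\MMD\le\tau$ under $H_0$, nonnegativity on $[0,1/(2+\tau)]$, then Ville), your Part 2 takes a genuinely different route. The paper argues pathwise: it lower-bounds $\log \K_t^*(\lambda)$ by the empirical quadratic $\sum_i \lambda Z_i - \sum_i(\lambda Z_i)^2$, plugs in a \emph{data-dependent} choice $\lambda^* = S_t/\big((8+4\tau)S_t + 2Q_t\big)$ (with $S_t=\sum_i Z_i$, $Q_t=\sum_i Z_i^2$), and shows this gives $t\Delta^2/768$ on the event that the empirical mean converges, which the SLLN makes almost sure. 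You instead apply the SLLN directly to the bounded i.i.d.\ increments $\log(1+\lambda Z_i)$, identify the almost-sure growth rate with the population quantity $g(\lambda)=\Exp[\log(1+\lambda Z)]$, and then optimize the deterministic lower bound $g(\lambda)\ge \lambda\Delta-\lambda^2(2+\tau)^2$ over the constrained interval, splitting on whether the unconstrained maximizer $\Delta/(2(2+\tau)^2)$ lies inside $[0,1/(4(2+\tau))]$. What your route buys: your $\lambda^*$ is a genuinely fixed, sample-independent constant, which matches the theorem's literal claim (``there exists a $\lambda^*$'') more faithfully than the paper's $t$-dependent plug-in, and your constant $\Delta^2/(4(2+\tau)^2)$ is much better than $\Delta^2/768$. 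What the paper's pathwise machinery buys: it transfers essentially unchanged to \Cref{thm:statistical_properties}, where the increments involve the learned functions $f_t$, are no longer i.i.d., and one needs finite-$t$ deterministic bounds (combined with OGA/ONS regret) to control $\Exp[\T]$; a population-level SLLN argument like yours does not extend to that setting. One small repair: in your second case you invoke \Cref{thm:MMDimproved} to get $\Delta\le\sqrt{2}$, but that theorem presupposes the mechanism is $(\eps,\delta)$-DP, which fails under $H_1$; use instead the kernel bound $K(x,x)\le 1 \Rightarrow \MMD\le 2$, so $\Delta\le 2$ and your constant lower bound $1/16$ is still $\Omega(\Delta^2)$.
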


\begin{proof} We prove each item separately.

 \textit{Part 1:} To start, is easy to see that under $H_0$, $\K_t^*$ is a supermartingale for any  $\lambda \ge 0$; letting $\F_{t} = \sigma(X_1,Y_1,...,X_t,Y_t)$ be the $\sigma$-algebra generated by the samples observed up to time $t$, 
\begin{align*}
    \Exp[\K_t^*(\lambda) \mid \F_{t-1}] &= \K_{t-1}^*(\lambda) (1 + \lambda [\Exp[f^*(X_t) - f^*(Y_t)]-\tau(\eps, \delta)])\\
    &= \K_{t-1}^*(\lambda) (1 + \lambda [MMD(\A(S), \A(S'))-\tau(\eps, \delta)]) \leq \K_{t-1}^*(\lambda),
\end{align*}
since $\lambda [MMD(\A(S), \A(S'))-\tau(\eps, \delta)]<0$ under the null, and thus $(1 + \lambda [MMD(\A(S), \A(S'))-\tau(\eps, \delta)])<1$.

Furthermore, by definition $f^*$ is bounded $|f^*(X_t)|\leq 1$
, implying that 
$$f^*(X_t) - f^*(Y_t) - \tau(\epsilon, \delta) \in [-2 - \tau(\eps, \delta), 2 - \tau(\eps,\delta)].$$
Hence, $\K_t$ is nonnegative as long as $0 \leq \lambda \leq \frac{1}{2 + \tau(\eps,\delta)}$.
Consequently, Ville's inequality (\cref{def:villesineq}) indicates that for any $\alpha \in (0,1), \lambda \in \big[0,\frac{1}{2 + \tau(\eps,\delta)}\big]$,
$\K_t(\lambda)$ remains bounded by $1/\alpha$ over time with probability at least $1-\alpha$. 

\textit{Part 2:} For this part we need to prove that under $H_1:\Delta = \MMD(\A(S), \A(S')) - \tau(\eps, \delta) > 0$, almost surely, there exists
$\lambda^* \in [0, \frac{1}{8 + 4\tau(\eps,\delta)}]$ such that for sufficiently large $t$,
$\K_t(\lambda^*) \ge \exp(\Theta(t\Delta^2))$. 

To see this, first note that for any $\lambda \in [0, \frac{1}{4\alpha + 2\tau(\eps,\delta)}]$ and any realization 
$\{x_t, y_t\}_{t\ge1}$ of $\{X_t, Y_t\}_{t\ge1}$, we can deterministically lower bound $\log(\K_t)$ as follows
\begin{align*}
     \log(\K_t^*(\lambda)) & = \log\left(\K_0^* \prod_{i\in [t]}(1+\lambda[f^*(x_i)-f^*(y_i)-\tau(\epsilon, \delta)])\right)\\
     &=0 +  \sum_{i\in [t]} \log(1+\lambda[f^*(x_i) - f^*(y_i)-\tau(\eps, \delta)])\\
    &\ge \sum_{i\in[t]} \lambda[f^*(x_i) - f^*(y_i)-\tau(\eps, \delta)] - \sum_{i\in[t]} (\lambda[f^*(x_i) - f^*(y_i)-\tau(\eps, \delta)])^2,
\end{align*}
where the first equality follows from the definition of $\K^*_t$ and the second one from the product rule for logarithms, and the last inequality uses the bound $\log(1+x) \ge x - x^2$
for $x\in (-1/2,1/2)$---which can be verified finding the critical points of $g(x) = \log(1+x) - x - x^2$---combined with the fact that for $\lambda \in [0, \frac{1}{4 + 2\tau(\eps,\delta)}] $,  $ \lambda[f^*(x_i) - f^*(y_i)-\tau(\eps, \delta)] \in [-1/2,1/2]$ for all $i \ge 1$.  

Now, define 
\begin{equation}\label{eq:large_numbers_assumption}
   r(t) = \frac{1}{t} \sum_{i \in [t]} [f^*(x_i) - f^*(y_i)] -\MMD(\A(S), \A(S'))
\end{equation}

and suppose that $r(t) \to 0 \text{ as } t\to \infty.$

It follows that
\begin{equation}
    \label{eq:lmbda_star}
    \lambda^* := \frac{\sum_{i \in [t]} [f^*(x_i) - f^*(y_i) - \tau(\eps,\delta)]}{(8 + 4\tau(\eps,\delta))\sum_{i \in [t]} [f^*(x_i) - f^*(y_i) - \tau(\eps,\delta)] + 2\sum_{i \in [t]} [f^*(x_i) - f^*(y_i) - \tau(\eps,\delta)]^2}
\end{equation}

satisfies 
$\lambda^* \in [0, \frac{1}{8 + 4\tau(\eps,\delta)}]$ whenever $\sum_{i \in [t]} [f^*(x_i) - f^*(y_i) - \tau(\eps,\delta)] \ge 0$ since it has the form $\frac{c}{(8 + 4\tau(\eps,\delta))c + d}$, for $c,d > 0$. 

But for any $t$,
\[\frac{1}{t}\sum_{i \in [t]} [f^*(x_i) - f^*(y_i) - \tau(\eps,\delta)] = r(t) +\MMD(\A(S), \A(S')) - \tau(\eps,\delta) = r(t) + \Delta.\]
Since we assumed $r(t) \to 0$, there exists $t_0$ such that for all $t\ge t_0$, $|r(t)| \leq \Delta/2$, 
implying that $\frac{1}{t}\sum_{i \in [t]} [f^*(x_i) - f^*(y_i) - \tau(\eps,\delta)] = r(t) + \Delta \ge \Delta/2$. 

We conclude that if 
$r(t) \to 0$ as $t\to \infty$, then for sufficiently large $t$, $$\frac{1}{t}\sum_{i \in [t]} [f^*(x_i) - f^*(y_i) - \tau(\eps,\delta)] \ge \Delta/2,$$ 
which in particular implies that $\sum_{i \in [t]} [f^*(x_i) - f^*(y_i) - \tau(\eps,\delta)] > 0$, 
and hence $\lambda^* \in [0, \frac{1}{8 + 4\tau(\eps,\delta)}]$. 
It follows for any $t \ge t_0$, plugging this value into the lower bound for $\K_t^*(\lambda^*)$  gives that $\log(\K_t^*(\lambda^*))$ can be lower bounded by
\begin{align}
&\frac{\left(\sum_{i \in [t]} [f^*(x_i) - f^*(y_i) - \tau(\eps,\delta)]\right)^2}{4\left(2\sum_{i \in [t]} [f^*(x_i) - f^*(y_i) - \tau(\eps,\delta)]^2 + (8 + 4\tau(\eps, \delta))\sum_{i \in [t]} [f^*(x_i) - f^*(y_i) - \tau(\eps,\delta)]\right)}\nonumber\\
&\ge \frac{(t\Delta/2)^2}{4\left(2\sum_{i \in [t]} [f^*(x_i) - f^*(y_i) - \tau(\eps,\delta)]^2 + (8 + 4\tau(\eps, \delta))\sum_{i \in [t]} [f^*(x_i) - f^*(y_i) - \tau(\eps,\delta)]\right)}\nonumber\\
&\ge \frac{(t\Delta/2)^2}{4\left(32t + 16 t\right)} = \frac{t\Delta^2}{768} = \Theta(t\Delta^2).\label{eq:lower_bound_log_wealth}
\end{align}
Our reasoning allows to conclude that
\[\lim_{t \to \infty} \frac{\log(\K_t^*(\lambda^*))}{t} = \Omega(\Delta^2)\]
whenever $r(t) \to 0$ as $t \to \infty$. 
Finally, the strong law of large numbers implies that the realizations of $\{X_t, Y_t\}_{t\ge 1}$ that do not satisfy $r(t) \to 0$ as $t \to \infty$, where $r(t)$ is defined in
Equation \eqref{eq:large_numbers_assumption}, have a probability measure of $0$, and hence our conclusion is valid almost surely. 
\end{proof}

\begin{theorem*}{(\Cref{thm:statistical_properties} in the main body)}
Suppose OGA in Line \ref{step:OGA} is Algorithm \ref{alg:oga} initialized on input $\{f \in \calH: \|f\|_\calH \leq 1\}, 0_\calH$ and ONS in Line \ref{step:ONS} is Algorithm \ref{alg:ons_1d} initialized on input $\big[0, \frac{1}{4 + 2\tau(\eps,\delta)}\big]$. Let $\K_t$ be the process constructed in \cref{alg:seq-dp-testing} and $\T = \min\{t \ge 1: \K_t \ge 1/\alpha\}$ be the stopping time of the test when $N_{\text{max}} = \infty$. Then,
    \begin{enumerate}
        \item Under $H_0$, $\PP(\T < \infty) \leq \alpha$.
        \item Under $H_1$, (i) $\hspace{1mm} \lim_{t \to \infty}\frac{\log(\K_t)}{t} = \Omega(\Delta^2)$ and (ii) $\hspace{1mm}\Exp[\T] = O\left(\frac{\log(1/\Delta)}{\Delta} + \frac{\log(1/(\alpha\Delta^2))}{\Delta^2}\right)$. 
    \end{enumerate}
\end{theorem*}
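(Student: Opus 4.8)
The plan is to show that $\{\K_t\}_{t\ge 0}$ is a nonnegative supermartingale under $H_0$ and then invoke Ville's inequality (\Cref{def:villesineq}). Nonnegativity is automatic on the prescribed ONS domain: for $\lambda_t \in [0,\tfrac{1}{4+2\tau(\eps,\delta)}]$ and $\langle f_t, K(X_t,\cdot)-K(Y_t,\cdot)\rangle_\calH - \tau(\eps,\delta) \in [-2-\tau(\eps,\delta),\,2-\tau(\eps,\delta)]$ (since $\|f_t\|_\calH\le 1$), each factor $1+\lambda_t[\cdots]$ lies in $[\tfrac12,\tfrac32]$. The supermartingale property is the substantive point: because $f_t,\lambda_t$ are predictable (OGA/ONS use only data through time $t-1$), conditioning on $\F_{t-1}$ gives $\Exp[\langle f_t, K(X_t,\cdot)-K(Y_t,\cdot)\rangle_\calH \mid \F_{t-1}] = \langle f_t, \mu_P-\mu_Q\rangle_\calH \le \|f_t\|_\calH\,\MMD(\A(S),\A(S')) \le \tau(\eps,\delta)$ under $H_0$, by Cauchy--Schwarz. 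Hence $\Exp[\K_t\mid\F_{t-1}]\le\K_{t-1}$, and Ville's inequality yields $\PP[\exists t: \K_t\ge 1/\alpha]\le\alpha$, i.e. $\PP(\T<\infty)\le\alpha$. This step mirrors Part 1 of \Cref{thm:abstract-test} and presents no difficulty.

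\textbf{Item (2)(i), growth rate.} Here I would replace the oracle $f^*,\lambda^*$ of \Cref{thm:abstract-test} by their learned counterparts and pay for the substitution with two regret terms. Writing $g_i=\langle f_i, K(X_i,\cdot)-K(Y_i,\cdot)\rangle_\calH$, the ONS regret bound (\Cref{thm:regret_of_ons}) gives $\log\K_t = -\sum_i \ell_i(\lambda_i) \ge \sup_\lambda \sum_i \log(1+\lambda[g_i-\tau(\eps,\delta)]) - O(\log t)$, so it suffices to lower bound the log-wealth of the best \emph{fixed} $\lambda$ built from the learned signal. Choosing the concrete comparator $\lambda=\Theta(\Delta)$ and applying $\log(1+x)\ge x-x^2$ on $(-\tfrac12,\tfrac12)$ exactly as in \Cref{thm:abstract-test}, the bound reduces to controlling $\sum_i[g_i-\tau(\eps,\delta)]$ and $\sum_i[g_i-\tau(\eps,\delta)]^2$. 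The second sum is $O(t)$ by boundedness; for the first, the OGA regret against the comparator $f^*$ gives $\sum_i g_i \ge \sum_i[f^*(X_i)-f^*(Y_i)] - O(\sqrt t)$, and the strong law of large numbers gives $\tfrac1t\sum_i[f^*(X_i)-f^*(Y_i)]\to\MMD(\A(S),\A(S'))$ a.s. Dividing by $t$ and sending $t\to\infty$, all three penalties ($O(\log t)$ from ONS, $O(\sqrt t)$ from OGA, and the vanishing SLLN error) are $o(t)$, leaving $\lim_t \tfrac1t\log\K_t \ge \Omega(\Delta^2)$ almost surely.

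\textbf{Item (2)(ii), expected stopping time.} This is the crux. The plan is to convert Item (2)(i) into a non-asymptotic high-probability bound and integrate $\Exp[\T]=\sum_{t\ge0}\PP[\T>t]\le\sum_{t\ge0}\PP[\K_t<1/\alpha]$, valid since $\{\T>t\}\subseteq\{\K_t<1/\alpha\}$ for a first-crossing time. Keeping all terms explicit, with $\lambda=\Theta(\Delta)$ and $S_t:=\sum_i[f^*(X_i)-f^*(Y_i)]$ I would obtain a bound of the shape $\log\K_t \ge \lambda(S_t - t\,\MMD) + \Theta(t\Delta^2) - \Theta(\Delta\sqrt t) - O(\log t)$, where the first term is a mean-zero sum of bounded increments controlled by Hoeffding's inequality. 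Two regimes then emerge: a warm-up regime dominated by the OGA penalty $\Theta(\Delta\sqrt t)$ together with the requirement that the empirical drift turn positive, contributing the lower-order $\tfrac{\log(1/\Delta)}{\Delta}$ term; and a growth regime in which $\Theta(t\Delta^2)$ outpaces the threshold $\log(1/\alpha)$. Bounding $\sum_{t\ge0}\PP[\K_t<1/\alpha]$ by the crossing time of the deterministic part (the main $\tfrac{\log(1/(\alpha\Delta^2))}{\Delta^2}$ term, where the $\log(1/\Delta^2)$ inside reflects the $O(\log t)$ ONS regret and the concentration union bound) plus a geometrically decaying tail $\sum_t \exp(-\Theta(t\Delta^2))$ yields the stated bound.

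\textbf{Main obstacle.} The difficulty is concentrated in Item (2)(ii). Unlike the asymptotic statement (2)(i), where every learning penalty is simply absorbed as $o(t)$, the expected-stopping-time bound requires simultaneously and non-asymptotically controlling three coupled error sources---the $O(\sqrt t)$ OGA regret for $f^*$, the $O(\log t)$ ONS regret for $\lambda^*$, and the $O(\sqrt t)$ sampling fluctuations of $S_t$---and proving that their combined effect contributes only the lower-order term $\tfrac{\log(1/\Delta)}{\Delta}$ while the geometric growth delivers the leading $\tfrac{\log(1/(\alpha\Delta^2))}{\Delta^2}$. Pinning the logarithmic factors inside both terms exactly (rather than leaking extra powers of $1/\Delta$) is the delicate step, and is precisely where the fixed comparator $\lambda=\Theta(\Delta)$ and the explicit regret constants from \Cref{alg:oga,alg:ons_1d} must be tracked with care.
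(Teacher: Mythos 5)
Your proposal is correct and follows the same skeleton as the paper's proof: predictability of $(f_t,\lambda_t)$ plus Ville's inequality for Item (1); the bound $\log(1+x)\ge x-x^2$, OGA regret against the comparator $f^*$, ONS regret against a fixed comparator $\lambda$, and the strong law of large numbers for Item (2)(i); and the tail-sum formula $\Exp[\T]\le\sum_t\PP[\K_t<1/\alpha]$, split into a warm-up phase (probabilities bounded by $1$) and a concentration phase, for Item (2)(ii). Two localized differences are worth noting. First, your comparator is a fixed $\lambda=\Theta(\Delta)$, whereas the paper plugs in the data-dependent maximizer $\lambda^*$ of the quadratic lower bound (\cref{eq:lambdastar_nonabstract}); these are interchangeable up to constants, since both are transferred to the learned $\lambda_t$ through the same ONS regret bound. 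Second, in Item (2)(ii) the paper splits on whether the empirical deviation $r(t)$ of $\tfrac1t\sum_i[f^*(X_i)-f^*(Y_i)]$ from $\MMD(\A(S),\A(S'))$ is small, controls the bad event with Bernstein's inequality at deviation $\max\bigl\{2\sigma\sqrt{2\log t/t},\,16\log t/(3t)\bigr\}$ (giving $1/t^2$ tails that sum to a constant, plus a warm-up time $t_0$ bounded via Lemma 3 of \cite{shekhar2023nonparametric2sampletesting}), after which the wealth bound is deterministic and yields a crossing time $t_1$. You instead keep the fluctuation $\lambda(S_t-t\,\MMD)$ explicit inside the wealth bound and apply Hoeffding at the linear deviation $\Theta(t\Delta)$ beyond the deterministic crossing time, so the tail sums geometrically to $O(1/\Delta^2)$. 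Both routes deliver the stated bound; yours is slightly more direct and avoids variance terms, while the paper's Bernstein route is variance-adaptive (its $t_0$ scales with $\sigma^2$) and is what actually generates the $\log(1/\Delta)/\Delta$ term in the theorem statement. In your argument that term never arises---your warm-up is $O(1/\Delta^2)$, coming from the requirement that the OGA penalty $\Theta(\Delta\sqrt t)$ be dominated by $\Theta(t\Delta^2)$---so your attribution of it to the OGA/drift warm-up is inaccurate, though harmless, since the bound you obtain is at least as strong as the one claimed.
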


\begin{proof} We prove each item separately.

\textit{Part 1:} The proof follows directly from the arguments used in part $1$ of \cref{thm:abstract-test}, with the difference that  $\lambda_t$ is no longer fixed. To handle this, we simply use the fact that since $\lambda_t$ is the output of ONS after observing losses $\ell_1,...,\ell_{t-1}$, which only depend on $\{X_i, Y_i\}_{i \in [t-1]}$, then $\lambda_t$ is $\F_{t-1}$-measurable. 
    
\textit{Part 2:}
    \begin{itemize}
    \item[(i)] We proceed similarly to the proof of Theorem \ref{thm:abstract-test}, but now $\lambda_t$ and $f_t$ vary with $t$. 
    The process $\K_{t}$ constructed by running \cref{alg:seq-dp-testing} can be written as
    \begin{align*}
            \K_0 &= 1,\\
            \K_{t} &= \K_{t-1} \times (1 + \lambda_t[f_t(X_t) - f_t(Y_t) - \tau(\epsilon, \delta)]).
    \end{align*}
    For any $\lambda \ge 0$, define the process $\K_t(\lambda)$ as follows
    \begin{align}
            \K_0(\lambda) &= 1,\nonumber\\
            \K_{t}(\lambda) &= \K_{t-1}(\lambda) \times (1 + \lambda[f_t(X_t) - f_t(Y_t) - \tau(\epsilon, \delta)]). \label{eq:wealth-Kt(alpha)}
    \end{align}
    This is the process that we would obtain by running \cref{alg:seq-dp-testing} with a fixed $\lambda$, instead of computing $\lambda_1,\lambda_2,..., \lambda_t$ via ONS. As before, we can prove that for any $\lambda \in [0, \frac{1}{4 + 2\tau(\eps,\delta)}]$ and any realization 
    $\{x_t, y_t\}_{t\ge1}$ of $\{X_t, Y_t\}_{t\ge1}$,  $\log(\K_t(\lambda))$  can be deterministically lower bounded as follows:
    \begin{align*}
     \log(\K_t(\lambda)) & = \log\left(\K_0 \prod_{i\in [t]}(1+\lambda[f_i(x_i)-f_i(y_i)-\tau(\epsilon, \delta)])\right)\\
     &=0 +  \sum_{i\in [t]} \log(1+\lambda[f_i(x_i) - f_i(y_i)-\tau(\eps, \delta)])\\
    &\ge \sum_{i\in[t]} \lambda[f_i(x_i) - f_i(y_i)-\tau(\eps, \delta)] - \sum_{i\in[t]} (\lambda[f_i(x_i) - f_i(y_i)-\tau(\eps, \delta)])^2.
    \end{align*}
    Similarly to \cref{eq:lmbda_star}, define $\lambda^*$ as 
    \begin{equation}
        \lambda^*=\frac{\sum_{i \in [t]} [f_i(x_i) - f_i(y_i) - \tau(\eps,\delta)]}{(8 + 4\tau(\eps,\delta))\sum_{i \in [t]} [f_i(x_i) - f_i(y_i) - \tau(\eps,\delta)] + 2\sum_{i \in [t]} [f_i(x_i) - f_i(y_i) - \tau(\eps,\delta)]^2}.\label{eq:lambdastar_nonabstract}
    \end{equation}
    It follows that  
    $\lambda^* \in [0, \frac{1}{8 + 4\tau(\eps,\delta)}]$ whenever $\sum_{i \in [t]} [f_i(x_i) - f_i(y_i) - \tau(\eps,\delta)] \ge 0$. 
    
    Furthermore, by the regret guarantees of OGA (Theorem \ref{thm:regret_of_oga}), 
    \[\sup_{f : \|f\|_{\cal H} \leq 1} \sum_{i\in[t]}\langle f - f_i, K(x_i, \cdot) - K(y_i, \cdot)\rangle_{\calH} \leq 6 \sqrt{t},\]
    since the functions ${h_t(f) = \langle f, K(x_t, \cdot) - K(y_t, \cdot)\rangle_{\calH}}$ are $2$-Lipschitz, since $K(x,x) \leq 1$ implies that $\|K(x_t, \cdot) - K(y_t, \cdot)\|_\calH \leq 2$, and the diameter of the optimization domain $\{f : \|f\|_{\cal H} \leq 1\}$ is $\sup_{f_1,f_2: \|f_1\|_{\cal H}, \|f_2\|_{\cal H} \leq 1 }\|f_1-f_2\|_{\cal H}  = 2$. 
    Applying the reproducing property of $K$ implies, in particular, the following inequality for the witness function 
    \begin{equation}
    \label{eq:inequality-regret}
        \frac{1}{t}\sum_{i\in[t]} [f^*(x_i) - f^*(y_i) - \tau(\eps,\delta)] - \frac{6}{\sqrt{t}} \leq \frac{1}{t}\sum_{i\in[t]} [f_i(x_i) - f_i(y_i)- \tau(\eps,\delta)].
    \end{equation}
    As in Equation \eqref{eq:large_numbers_assumption}, let $r(t)$ be defined as  \[r(t) = \frac{1}{t} \sum_{i \in [t]} [f^*(x_i) - f^*(y_i)] -\MMD(\A(S), \A(S')),\] so inequality~\eqref{eq:inequality-regret} can be re-written as 
    \begin{equation}\label{eq:lower_bound_Delta/2}
        r(t) + \Delta - \frac{6}{\sqrt{t}} \leq \frac{1}{t}\sum_{i\in[t]} [f_i(x_i) - f(y_i)- \tau(\eps,\delta)].
    \end{equation}
    
    Assume $r(t) \to 0$ as $t \to \infty$. This also implies  $r(t) - \frac{6}{\sqrt{t}} \to 0$ as $t \to \infty$. Consequently, for sufficiently large $t$ we obtain that $\frac{1}{t}\sum_{i\in[t]} [f_i(x_i) - f(y_i)- \tau(\eps,\delta)] \ge \Delta/2$. We then lower bound the wealth for this $t$ by substituting the value of $\lambda^*$ previously defined \cref{eq:lambdastar_nonabstract}. 
    
    Analogously to the proof of Theorem \ref{thm:abstract-test} (see \cref{eq:lower_bound_log_wealth}), we obtain 

    \begin{equation}
    \label{eq:ineq-low-bound}
        \log(\K_t(\lambda^*)) \ge \frac{t\Delta^2}{768}.
    \end{equation}
    
    Finally, we note that 
    since $\lambda_1,...,\lambda_t$ are chosen by running ONS with losses ${\ell_t(\lambda) = -\log(1+\lambda[f_t(x_t) - f_t(y_t) - \tau(\epsilon, \delta)])}$, the regret bound of ONS (see \cref{thm:regret_of_ons}) implies that for any $\lambda \in [0,1/(4+2\tau(\epsilon, \delta))]$ 
    \begin{equation}\label{eq:ons_regret_convergence_of_alg_1}
        \sum_{i \in [t]} \ell_i(\lambda) - \sum_{i\in[t]} \ell_i(\lambda_i) \leq 10\log(t),
    \end{equation}
    since $\ell_t(\lambda)$ is $1$-exp-concave, $4+2\tau(\epsilon, \delta)$-Lipschitz:
    \[|\ell'_t(\lambda)| = \left|\frac{f_t(x_t) - f_t(y_t) - \tau(\epsilon, \delta)}{1+\lambda[f_t(x_t) - f_t(y_t) - \tau(\epsilon, \delta)]}\right| \leq \frac{2+ \tau(\eps,\delta)}{1/2},\]
    and the optimization domain is an interval of length $1/(4+2\tau(\epsilon, \delta))$. 
    Noticing that $\sum_{i \in [t]} \ell_i(\lambda) = \log(\K_t(\lambda))$ and $\sum_{i\in[t]} \ell_i(\lambda_i) = \log(\K_t)$, we have that, for sufficiently large $t$,
    \begin{equation*}
        \frac{t\Delta^2}{768} - \log(\K_t) \leq \log(\K_t(\lambda^*))  - \log(\K_t) \leq 10\log(t),
    \end{equation*}
    where the first inequality follows by subtracting $\log(\K_t)$ to both sides in equation \eqref{eq:ineq-low-bound}. 
    
    This in turn implies that $\lim_{t \to \infty} \frac{\log(\K^t)}{t} = \Omega(\Delta^2)$. 
    
    Recall that we used the fact that $r(t)$ converges to $0$ as $t \to \infty$, but this occurs almost surely by the Law of Large Numbers, so our conclusion also holds almost surely. This finishes the proof. 

    \item[(ii)] The main difference between the previous result and this one is that to bound $\Exp[\T]$ under $H_1$ we need to carefully quantify how quickly the term $r(t)$ from Equation \eqref{eq:large_numbers_assumption} converges to $0$. 
    
    In this proof, we consider $r(t) = \frac{1}{t} \sum_{i \in [t]} [f^*(X_i) - f^*(Y_i)] -\MMD(\A(S), \A(S'))$. That is, $r(t)$ is a random variable, as opposed to Equation \eqref{eq:large_numbers_assumption}, where it was defined for a specific realization of $\{(X_i,Y_i)\}_{i\ge1}$. We also define $\tilde r(t) = r(t) - \frac{6}{\sqrt{t}}$. 
    Using the tail-sum formula for expectation, we can expand the expectation as 
    \begin{align*}
    \Exp[\T] &= \sum_{t \ge 1} \PP[\T \ge t] \leq \sum_{t \ge 1} \PP[\K_t < 1/\alpha]\\ 
    &\leq \sum_{t \ge 1} \PP[\K_t < 1/\alpha \text{ and } \tilde r(t)\leq \Delta/2] + \PP[\Delta/2 \leq \tilde r(t)]\\
    &\leq \sum_{t \ge 1} \PP[\K_t < 1/\alpha \text{ and } \tilde r(t)\leq \Delta/2] + \PP[\Delta/2 \leq r(t)].
    \end{align*}
    First, we bound the term $\PP[\Delta/2 \leq r(t)]$. Bernstein inequality implies that  
    \[\PP\left[r(t) \ge \max\left\{2\sigma \sqrt{\frac{2\log(t)}{t}} , \frac{16\log(t)}{3t}\right\}\right] \leq \frac{1}{t^2},\]
    where $\sigma^2 = \Exp[f^*(X_1) - f^*(Y_1) -\MMD(\A(S), \A(S'))]^2$, since $r(t)$ is the average of the i.i.d terms $Z_1,...,Z_t$, where $Z_i = f^*(X_i) - f^*(Y_i) -\MMD(\A(S), \A(S'))$ is centered ($\Exp[Z_i] = 0$) and bounded ($|Z_i| \leq 4$). 

    Let $$t_0 = \min\left\{t \in \mathbb{N}: t \ge 3, \quad\max\left\{2\sigma \sqrt{\frac{2\log(t)}{t}} , \frac{16\log(t)}{3t}\right\} \leq \Delta/2\right\}.$$ 
    Since $t_0 \ge 3$, then $\max\left\{2\sigma \sqrt{\frac{2\log(t)}{t}} , \frac{16\log(t)}{3t}\right\}$ is decresing in $t$ for all $t \ge t_0$, and by definition of $t_0$, $\max\left\{2\sigma \sqrt{\frac{2\log(t_0)}{t_0}} , \frac{16\log(t_0)}{3t_0}\right\} \leq \Delta/2$. This implies that for any $t\ge t_0$ 
    \[\PP[r(t) \ge \Delta/2] \leq \PP\left[r(t) \ge \max\left\{2\sigma \sqrt{\frac{2\log(t)}{t}} , \frac{16\log(t)}{3t}\right\}\right] \leq \frac{1}{t^2}.\]
    
    For $t \leq t_0$ we trivially upper bound $\PP[r(t) \ge \Delta/2] \leq 1$. Combining these bounds, we obtain that 
    \[\sum_{t \ge 1} \PP[\Delta/2 \leq r(t)] \leq t_0 + \sum_{t\ge t_0} 1/t^2 \leq t_0 + \pi^2/6.\]
    Lemma 3 in \cite{shekhar2023nonparametric2sampletesting} states that for any $a>0$ and $b \in (0,4]$,
    \[\min\left\{n \in \mathbb{N}: \frac{\log(bn)}{n} \leq a\right\} \leq 1 + \max\left\{20, \frac{2\log(2/a)}{a}\right\}.\]

   which implies the following inequality: 
    \[\min\left\{n \in \mathbb{N}: n\ge 3, \frac{\log(bn)}{n} \leq a\right\} \leq 1 + \max\left\{20, \frac{2\log(2/a)}{a}\right\}.\]
    This is true if the condition $\frac{\log(bn)}{n} \leq a$ is met for $n<3$, since the left side is $n=3$ and the right-hand side is at least $21$; in the case where the minimum is achieved at $n\ge 3$, the condition $n\ge 3$ can be removed and the inequality follows directly from the original lemma.
    
    Hence, we obtain that
    $t_0 = O\bigg(\frac{\log(4/\Delta)}{\Delta} + \frac{\sigma^2\log(4\sigma^2/\Delta^2)}{\Delta^2}\bigg)$. It follows that, 
    \[\sum_{t \ge 1} \PP[\Delta/2 \leq r(t)] = O\bigg(\frac{\log(4/\Delta)}{\Delta} + \frac{\sigma^2\log(4\sigma^2/\Delta^2)}{\Delta^2}\bigg).\]
    Next, let's bound $\PP[\K_t \leq 1/\alpha \text{ and } \tilde r(t)\leq \Delta/2]$. Equation \eqref{eq:lower_bound_Delta/2} implies that under the event $\tilde r(t) \leq \Delta/2$, $\frac{1}{t}\sum_{i\in[t]} [f_i(X_i) - f(Y_i)- \tau(\eps,\delta)] \ge \Delta/2$ and hence the lower bound for $\log(\K_t(\lambda^*)) \ge t\Delta^2/768$ from Equation \eqref{eq:lower_bound_log_wealth} is valid for any realization of $\{(X_i,Y_i)_{i\ge 1}\}$ under this event, and so is the lower bound $\log(\K_t) \ge t\Delta^2/768 - 10\log(t)$, by the regret guarantees of ONS. From these implications, it follows that
    \begin{align*}
        &\sum_{t \ge 1} \PP[\K_t < 1/\alpha \text{ and } \tilde r(t)\leq \Delta/2]\\ 
        &\leq \sum_{t \ge 1} \PP\left[\K_t < 1/\alpha \text{ and } \frac{1}{t}\sum_{i\in[t]} [f_i(X_i) - f_i(Y_i)- \tau(\eps,\delta)] \ge \Delta/2\right]\\
        &\leq \sum_{t \ge 1} \PP\left[\K_t < 1/\alpha \text{ and } \log(\K_t(\lambda^*)) \ge t\Delta^2/768\right]\\
        &\leq \sum_{t \ge 1} \PP\left[\K_t < 1/\alpha \text{ and } \K_t \ge \exp\left(t\Delta^2/768 - 10\log(t)\right)\right]\\
        &\leq \sum_{t \ge 1} \PP\left[\exp\left(t\Delta^2/768 - 10\log(t)\right) < 1/\alpha\right]\\
        &= \sum_{t \ge 1} \mathbbm{1}_{\left\{\exp\left(\frac{t\Delta^2}{768} - 10\log(t)\right)< 1/\alpha\right\}} = t_1,
    \end{align*}
    where $t_1= \min\{n\in\mathbb{N}:\exp\left(\frac{t\Delta^2}{768} - 10\log(t)\right)\ge 1/\alpha\}$, since $\frac{t\Delta^2}{768} - 10\log(t)$ is increasing in $t$. 
    Finally, if $\log(t)/t \leq \frac{\Delta^2/2}{7680}$, then $\frac{t\Delta^2}{768} - 10\log(t) \ge \frac{t\Delta^2/2}{768}$. Lemma 3 from \cite{shekhar2023nonparametric2sampletesting} implies that $\log(t)/t \leq \frac{\Delta^2/2}{7680}$ for $t = \Omega(\frac{\log(1/\Delta^2)}{\Delta^2})$. In addition, it is easy to see that $\frac{t\Delta^2/2}{768} \ge \log(1/\alpha)$ for $t = \Omega \big(\frac{\log(1/\alpha)}{\Delta^2}\big)$. We conclude that $t = \Omega\big(\frac{\log(1/\Delta^2) + \log(1/\alpha)}{\Delta^2}\big)$ suffices to obtain
    \[\exp\left(\frac{t\Delta^2}{768} - 10\log(t)\right) \ge \exp\left(\frac{t\Delta^2/2}{768}\right) = \exp(\log(1/\alpha)) = 1/\alpha.\]
    This implies that $\mathbbm{1}_{\left\{\exp\left(\frac{t\Delta^2}{768} - 10\log(t)\right)< 1/\alpha\right\}} = 0$ for $t = \Omega\big(\frac{\log(1/\Delta^2) + \log(1/\alpha)}{\Delta^2}\big)$, thus $t_1 = O\big(\frac{\log(1/\Delta^2) + \log(1/\alpha)}{\Delta^2}\big)$. We conclude that 
    \[\Exp[\T] \leq O(t_0 + t_1) = O\left(\frac{\log(4/\Delta)}{\Delta} + \frac{\sigma^2\log(4\sigma^2/\Delta^2)}{\Delta^2} + \frac{\log(1/\Delta^2) + \log(1/\alpha)}{\Delta^2}\right).\]
    \end{itemize}
This completes the proof.
\end{proof}

\section{Detailed Explanation of Alternative Sequential Test based on E-processes}\label{app:alternative-test}

\subsection{Derivation of the alternative test} 

As mentioned in \Cref{subsec:alternative_test}, \cite{waudby2025logopt} studied a general class of e-processes of the form $W_t(\beta_1^t) = \prod_{i \in [t]} (1 + \beta_i(E_i -1))$, described in \cref{eq:e-process}. Recall that for every $i\ge1$, $\beta_i \in [0,1]$  and $E_i$ is an e-value for the null, meaning that $E_i$ is nonnegative and $\Exp_{H_0}[E_i] \leq 1$.

For our concrete problem, we can prove that $E_t := \frac{2 + f_t(X_t) - f_t(Y_t)}{2+\tau}$ is an e-value for the null if $\{f_t\}_{t\ge 1}$ are predictable. Indeed, note that since $f_t(X_t) - f_t(Y_t) \ge -2$, then $E_t \ge 0$. Moreover, under the null 
\begin{align*}
    \Exp[E_t] &= \Exp\bigg[\frac{2 + f_t(X_t) - f_t(Y_t)}{2+\tau}\bigg]\\
    &= \frac{2 + \Exp\big[\Exp[f_t(X_t) - f_t(Y_t) \mid \{(X_i, Y_i)\}_{i \in [t-1]}]\big]}{2+\tau}\\
    &\leq \frac{2 + \Exp[\sup_{f\in \calH}\Exp[f(X_t) - f(Y_t) \mid \{(X_i, Y_i)\}_{i \in [t-1]}]]}{2+\tau}\\
    &= \frac{2 + \sup_{f\in \calH}\Exp[f(X_t) - f(Y_t)]}{2+\tau}\\
    &= \frac{2 + \MMD}{2+\tau} \leq \frac{2 + \tau}{2+\tau} = 1.
\end{align*}
Hence, the process $W_t^\up$ that results from sequentially choosing $\{\beta_t\}_{t\ge0}$ with the Universal Portfolio (UP) algorithm \cite{cover1991up} is a nonnegative supermartingale under the null. The regret guarantees of UP (\Cref{thm:regret_of_up}) give 
\[\max_{\beta \in [0,1]} \log (W_t(\beta)) - \log(W_t^\up) \leq \log(t+1)/2 + \log(2),\]
where $W_t(\beta) = \prod_{i \in [t]} (1 + \beta(E_i -1))$. 
From the equation above, it follows that 
\begin{equation}\label{eq:e-process_bound}
    \tilde W_t = \exp\left(\max_{\beta \in [0,1]} \log(W_t(\beta)) - \log(t+1)/2 + \log(2)\right) \leq W_t^\up.
\end{equation}
Hence, $\tilde W_t$ is an e-process, but not a supermartingale because it maximizes over $\beta$ after observing $(X_t, Y_t)$. The reason to consider $\tilde W_t$ instead of $W_t^\up$ is that running UP is computationally involved and $\tilde W_t$ is a reasonable lower bound that only requires solving a simple one-dimensional optimization problem in order to compute it. This reasoning gives rise to the following algorithm. Note that the witness function $f^*$ is still learned with OGA, as in \Cref{alg:seq-dp-testing}.  

\begin{algorithm}[H]
\caption{Sequential DP Auditing with an E-process}
\label{alg:alternative-seq-dp-testing}
\begin{algorithmic}[1]
\State \textbf{Input:} Neighboring datasets $S, S' \in \D$,
mechanism $\A$, privacy parameters $\eps, \delta$, maximum number of iterations $N_{\text{max}}$.
\State Set $\tau(\eps, \delta) = \sqrt{2}\left(1-\frac{2(1-\delta)}{1+e^\eps}\right)$
\State Initialize $W_0(\beta) = 1, f_1 = 0_\calH$
\For{$t=1, 2, ..., N_{\text{max}}$}
\State Observe $X_t \sim \A(S)$, $Y_t \sim \A(S')$
\State $W_t(\beta) = W_{t-1}(\beta) \bigg(1 + \beta\bigg[\frac{2+f_t(X_t) - f_t(Y_t)}{2+\tau(\eps,\delta)} - 1 \bigg]\bigg)$
\State $\tilde W_t = \exp\left(\max_{\beta \in [0,1]} \log(W_t(\beta)) - \log(t+1)/2 - \log(2)\right)$
\If{$\tilde W_t \ge 1/\alpha$} 
\State Reject $H_0$
\Else
\State Send  $h_t(f) = \langle f, K(X_t, \cdot) - K(Y_t, \cdot)\rangle_{\calH}$ to OGA, receive $f_{t+1}$
\EndIf
\EndFor
\end{algorithmic}
\end{algorithm}

\subsection{Statiscal properties of the alternative test}

The theoretical guarantees of \Cref{alg:alternative-seq-dp-testing} are similar to the ones stated about \Cref{alg:seq-dp-testing} in \Cref{thm:statistical_properties}, but we expect \Cref{alg:alternative-seq-dp-testing} to perform better in practice due to the advantages of UP against ONS -- in particular, the facts that UP allows to optimize over the interval $[0,1]$ instead of $[0,1/2]$ and that its regret bound has smaller constants. See \Cref{sec:alg_comparison} for experiments on the practical improvements that arise from using \Cref{alg:alternative-seq-dp-testing} instead of \Cref{alg:seq-dp-testing}. 

\begin{theorem}
Suppose OGA in Line \ref{step:OGA} is Algorithm \ref{alg:oga} initialized on input $\{f \in \calH: \|f\|_\calH \leq 1\}, 0_\calH$. Let $\tilde W_t$ be the process constructed in \Cref{alg:alternative-seq-dp-testing} and $\T = \min\{t \ge 1: \tilde W_t \ge 1/\alpha\}$ be the stopping time of the test when $N_{\text{max}} = \infty$. Then,
    \begin{enumerate}
        \item Under $H_0$, $\PP(\T < \infty) \leq \alpha$.
        \item Under $H_1$, (i) $\hspace{1mm} \lim_{t \to \infty}\frac{\log(\tilde W_t)}{t} = \Omega(\Delta^2)$ and (ii) $\hspace{1mm}\Exp[\T] = O\left(\frac{\log(1/\Delta)}{\Delta} + \frac{\log(1/(\alpha\Delta^2))}{\Delta^2}\right)$,
        where $\Delta = \MMD(\A(S), \A(S')) - \tau(\eps, \delta)$. 
    \end{enumerate}
\end{theorem}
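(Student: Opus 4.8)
The plan is to exploit the reparameterization that identifies the e-process construction of \Cref{alg:alternative-seq-dp-testing} with the supermartingale family analyzed in \Cref{thm:statistical_properties}, so that almost all of that proof can be recycled. Expanding a single factor of $W_t(\beta)$ and using $E_i = \frac{2+f_i(X_i)-f_i(Y_i)}{2+\tau}$,
\[
1 + \beta(E_i-1) = 1 + \beta\left(\frac{2+f_i(X_i)-f_i(Y_i)}{2+\tau}-1\right) = 1 + \frac{\beta}{2+\tau}\big[f_i(X_i)-f_i(Y_i)-\tau\big],
\]
so $W_t(\beta) = \K_t(\lambda)$ with $\lambda = \beta/(2+\tau)$. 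Hence $\beta\in[0,1]$ corresponds exactly to $\lambda\in[0,1/(2+\tau)]$, and the maximizer $\lambda^*$ of \cref{eq:lambdastar_nonabstract}, which satisfies $\lambda^*\in[0,1/(8+4\tau)]$, corresponds to $\beta^* = (2+\tau)\lambda^* \in [0,1/4]\subseteq[0,1]$. This containment is what makes the fixed comparator $\beta^*$ admissible in the maximization defining $\tilde W_t$.

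For Part 1, I would first invoke the computation in \Cref{app:alternative-test}, which shows $\Exp[E_t\mid\F_{t-1}]\leq 1$ under $H_0$ (the inner conditional expectation of $f_t(X_t)-f_t(Y_t)$ is at most $\MMD\leq\tau$ since $f_t$ is predictable). Because the Universal Portfolio weights $\beta_t$ are predictable and nonnegative, $W_t^\up$ is a nonnegative supermartingale started at $W_0^\up = 1$, so Ville's inequality (\Cref{def:villesineq}) gives $\PP[\exists t:W_t^\up\geq 1/\alpha]\leq\alpha$. Since $\tilde W_t\leq W_t^\up$ for every $t$ by \cref{eq:e-process_bound}, the event $\{\T<\infty\}=\{\exists t:\tilde W_t\geq 1/\alpha\}$ inherits probability at most $\alpha$.

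For Part 2, I would use $\log\tilde W_t = \max_{\beta\in[0,1]}\log W_t(\beta) - \tfrac12\log(t+1)-\log 2 \geq \log W_t(\beta^*) - o(t)$, noting $\log W_t(\beta^*)=\log\K_t(\lambda^*)$. The deterministic lower bound $\log\K_t(\lambda^*)\geq t\Delta^2/768$ from the proof of \Cref{thm:statistical_properties}, valid for large $t$ on the almost-sure event $r(t)\to 0$ via the OGA regret bound \cref{eq:inequality-regret}, then carries over verbatim, yielding $\liminf_t \frac{\log\tilde W_t}{t}\geq\Delta^2/768=\Omega(\Delta^2)$ a.s.; here the explicit maximum over $\beta$ with its $O(\log t)$ penalty replaces the ONS regret argument, so no separate online-learning bound for the betting fraction is required. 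For the stopping time I would reproduce the decomposition $\Exp[\T]\leq\sum_{t\geq1}\PP[\tilde W_t<1/\alpha]$ and split on $\tilde r(t):=r(t)-6/\sqrt t\leq\Delta/2$. The contribution of $\sum_t\PP[r(t)>\Delta/2]$ is controlled by Bernstein's inequality exactly as before, giving $t_0 = O\big(\tfrac{\log(1/\Delta)}{\Delta}+\tfrac{\sigma^2\log(\sigma^2/\Delta^2)}{\Delta^2}\big)$; on the complementary event \cref{eq:lower_bound_Delta/2} forces $\tfrac1t\sum_i[f_i(X_i)-f_i(Y_i)-\tau]\geq\Delta/2$, hence $\tilde W_t\geq\exp(t\Delta^2/768-\tfrac12\log(t+1)-\log 2)$, which exceeds $1/\alpha$ once $t=\Omega\big(\tfrac{\log(1/\Delta^2)+\log(1/\alpha)}{\Delta^2}\big)$ by Lemma 3 of \cite{shekhar2023nonparametric2sampletesting}, bounding that portion by $t_1=O\big(\tfrac{\log(1/(\alpha\Delta^2))}{\Delta^2}\big)$. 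Summing $t_0+t_1$ gives the claim.

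The main obstacle is conceptual rather than computational: spotting the $\lambda=\beta/(2+\tau)$ identification and checking that the comparator $\beta^*$ lands in $[0,1]$, after which the substantive content is inherited directly from \Cref{thm:statistical_properties}. The only genuinely new bookkeeping is substituting the UP/e-process penalty $\tfrac12\log(t+1)+\log 2$ for the ONS $10\log t$ term throughout the growth and stopping-time estimates; because this penalty is smaller, it cannot worsen any rate (and slightly sharpens constants), so I anticipate no essential difficulty there.
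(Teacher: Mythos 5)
Your proposal is correct and follows essentially the same route as the paper's proof: the reparameterization $\beta_i = (2+\tau)\lambda_i$ identifying $W_t(\beta_1^t)$ with $\K_t(\lambda_1^t)$, Part 1 via the e-value property, the supermartingale $W_t^{\text{UP}}$ dominating $\tilde W_t$, and Ville's inequality, and Part 2 by recycling the proof of \Cref{thm:statistical_properties} with the explicit $\tfrac12\log(t+1)+\log 2$ penalty replacing the ONS regret term. Your additional observations (e.g., that $\beta^* = (2+\tau)\lambda^* \in [0,1/4] \subseteq [0,1]$, making the comparator admissible) are details the paper leaves implicit but are consistent with its argument.
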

\begin{proof}  We prove each item separately. For simplicity we drop the dependence of $\tau$ in $(\eps, \delta)$. 

\textit{Part 1.} From \cref{eq:e-process_bound} we obtain that under $H_0$
\begin{align*}
    \PP(\T < \infty) = \PP(\exists t: \tilde W_t > 1/\alpha) \leq \PP(\exists t: W_t^\up > 1/\alpha)\leq \alpha,
\end{align*}
where the last inequality follows from the fact that $W_t^\up$ is a nonnegative supermartingale under $H_0$ and Ville's inequality (\Cref{def:villesineq}). 

\textit{Part 2.}  We formally prove that there is a one-to-one map between the processes $\K_t(\lambda_t)$ with $\lambda_i \in [0,1/(2+\tau)]$ for all $i\ge 1$ and $W_t(\beta_1^t)$ with $\beta_i \in [0,1]$ for all $i\ge 1$. This follows from choosing $\beta_i = (2+\tau)\lambda_i$ for all $i\ge 1$:
    \begin{align*}
        \K_t &= \prod_{i \in [t]} \big(1+\lambda_i (f_t(X_t) - f_t(Y_t) - \tau)\big)\\
        &= \prod_{i \in [t]} \bigg(1+ \lambda_i (2+\tau) \frac{f_t(X_t) - f_t(Y_t) - \tau}{2+\tau}\bigg)\\
        &= \prod_{i \in [t]} \bigg(1 - \lambda_i (2+\tau) + \lambda_i (2+\tau)\bigg[ \frac{f_t(X_t) - f_t(Y_t) - \tau}{2+\tau} - 1\bigg]\bigg)\\
        &= \prod_{i \in [t]} \bigg(1 - \lambda_i (2+\tau) + \lambda_i (2+\tau)\bigg[ \frac{2 + f_t(X_t) - f_t(Y_t)}{2+\tau}\bigg]\bigg)\\
        &= \prod_{i \in [t]} \bigg(1 + \lambda_i (2+\tau) (E_i - 1)\bigg)\\
        &= W_t(\beta_1^t).
    \end{align*}

    This equivalence allows us to reproduce exactly the same proof that we used for \Cref{thm:statistical_properties}, where the only differences that arise from replacing ONS by UP are
    \begin{itemize}
        \item We are now able to optimize $\K_t$ over $\lambda_i \in [0,1/(2+\tau)]$ for all $i\ge 1$. Recall that in order to use ONS we  constrained to $\lambda_i \in [0,1/(4+2\tau)]$ so that the losses $\ell_t$ from \Cref{alg:seq-dp-testing} were Lipschitz. 
        \item Even though we are optimizing over a larger domain, the regret guarantees that we obtain are better than those of ONS. Indeed, using ONS we obtained 
        \[\max_{\lambda \in [0,1/(4+2\tau)]} \log(\K_t(\lambda)) - \log(\K_t) \leq 10\log(t)\]
        in \eqref{eq:ons_regret_convergence_of_alg_1}, while the process $\tilde W_t$ by construction satisfies 
        \begin{align*}
            \max_{\lambda \in [0,1/(2+\tau)]} \log(\K_t(\lambda)) - \log(\K_t) &= \max_{\beta \in [0,1]} \log(W_t(\beta)) - \log(\tilde W_t) \\
            &= \log(t+1)/2 + \log(2). 
        \end{align*}
    \end{itemize}
    Hence, the process $\tilde W_t$ from \Cref{alg:alternative-seq-dp-testing} has two advantages over the process $\K_t$ from \Cref{alg:seq-dp-testing}: it maximizes over a wider domain and has smaller regret. These two properties, combined with the fact that both algorithms learn the witness function with OGA, allow us to prove the same properties that we obtained for \Cref{alg:seq-dp-testing} in \Cref{thm:statistical_properties}, up to (slighlty better) constants. 

    
\end{proof}

\subsection{Experiments and comparison to \Cref{alg:seq-dp-testing}}
\label{sec:alg_comparison}

In the following, we replicate the experiments from \Cref{subsec:mean-mechanisms} using our alternative sequential test based on e-processes. \Cref{tbl:alternative_mean_results} presents the performance of \Cref{alg:alternative-seq-dp-testing} when auditing additive-noise
mechanisms for mean estimation. Our results show improved performance over \Cref{alg:seq-dp-testing} in effectively and efficiently identifying both compliant and non-compliant DP mechanisms across the different privacy regimes studied. Specifically, we observe improved rejection rates for NonDPGaussian2 ($\epsilon = 0.01$), and NonDPGaussian2, NonDPLaplace1 and NonDPLaplace2 ($\epsilon = 0.1$), as well as a decrease in the average number of samples required to detect a violation for all the studied mechanisms, with the exception of NonDPGaussian2 when $\epsilon = 0.1$.

\begin{table}[h]
\centering

\caption{\Cref{alg:alternative-seq-dp-testing} sequential DP auditing based on e-processes performance on mean mechanisms with additive Gaussian and Laplace noise. Rejection rates indicate the proportion of experiments ($\pm$ standard errors over 20 independent runs) where \Cref{alg:alternative-seq-dp-testing} rejects the null hypothesis of ($\epsilon, \delta$)-DP. $\bar{N}$ represents the average number of samples required to detect a violation, when it occurred ($\pm$ standard errors). Dashes (--) indicate that no violations were detected, consistent with true DP-mechanisms. Values in bold represent improvements over the performance of \Cref{alg:seq-dp-testing}.}

\begin{tabular}{l c c c c}
\toprule
& \multicolumn{2}{c}{$\epsilon = 0.01$} & \multicolumn{2}{c}{$\epsilon = 0.1$} \\
\cmidrule(lr){2-3} \cmidrule(lr){4-5}
Mechanism & Rejection rate & $\bar{N}$ to reject & Rejection rate & $\bar{N}$ to reject \\
\midrule 
\addlinespace
DPGaussian     & 0.0 $\pm$ 0.0  & --  & 0.0 $\pm$ 0.0 & --   \\
NonDPGaussian1 & 1.0 $\pm$ 0.0  & \textbf{92} $\pm$ 6.72 & 1.0 $\pm$ 0.0 & \textbf{187} $\pm$ 16.8\\
NonDPGaussian2 & \textbf{0.9} $\pm$ 0.06  & \textbf{728} $\pm$ 139.8 & \textbf{0.15} $\pm$ 0.08 & 4475 $\pm$ 307.4   \\
\addlinespace
DPLaplace      & 0.0 $\pm$ 0.0  & --  & 0.0 $\pm$ 0.0 & --   \\
NonDPLaplace1  & 1.0 $\pm$ 0.0 & \textbf{106} $\pm$ 9.8 & \textbf{1.0} $\pm$ 0.0 & \textbf{340} $\pm$ 42.0 \\
NonDPLaplace2  & 1.0 $\pm$ 0.0 & \textbf{54} $\pm$ 4.9  & \textbf{1.0} $\pm$ 0.0 & \textbf{253} $\pm$ 119.8 \\
\bottomrule
\end{tabular}

\label{tbl:alternative_mean_results}

\end{table}

We also apply this sequential test based on e-processes to audit the private and non-private implementations of DP-SGD from \Cref{sec:dpsgd-audit}. \Cref{fig:up_dpsgd_audit} shows the empirical performance of \Cref{alg:alternative-seq-dp-testing} with faster detection rates in non-private regimes (\Cref{fig:up_nonprivate_dpsgd}), while successfully identifying private implementations (\Cref{fig:up_private_dpsgd}). In detail, \Cref{alg:alternative-seq-dp-testing} rejects the hypothesis $H_0:\epsilon_{canary}=0.1$ in just 18 observations on average over 5 runs, and establishes an empirical lower bound of $\epsilon = 0.79$ using 250 observations in the white-box auditing regime detailed in \Cref{sec:dpsgd-audit}.

\begin{figure}[htbp]
    \centering
    \begin{subfigure}[b]{0.46\textwidth}
        \centering
        \includegraphics[width=\textwidth]{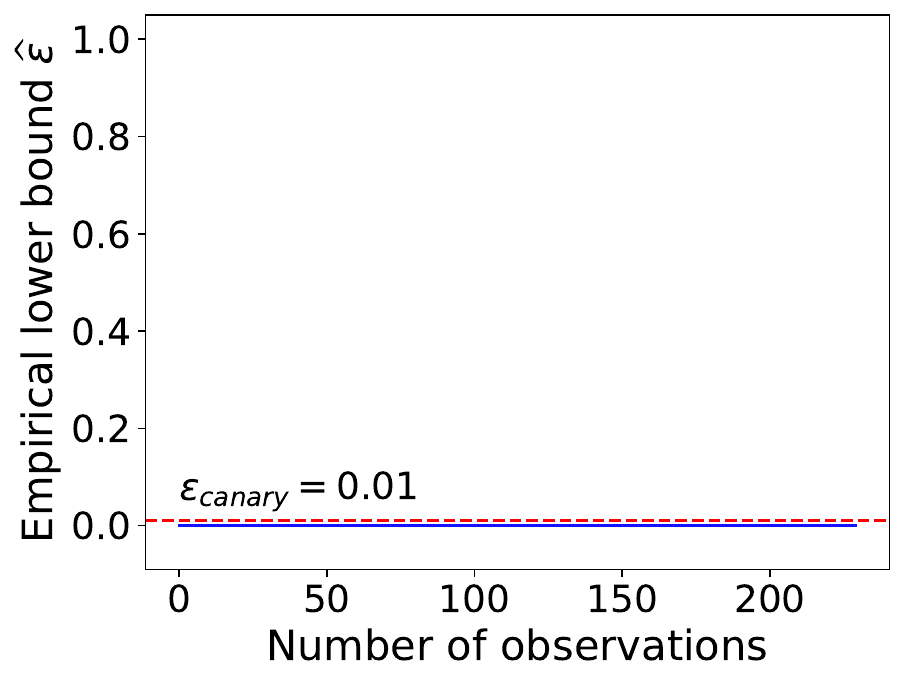}
        \caption{Private DP-SGD ($\epsilon_{canary}=0.01$).}
        \label{fig:up_private_dpsgd}
    \end{subfigure}
    \hfill
    \begin{subfigure}[b]{0.46\textwidth}
        \centering
        \includegraphics[width=\textwidth]{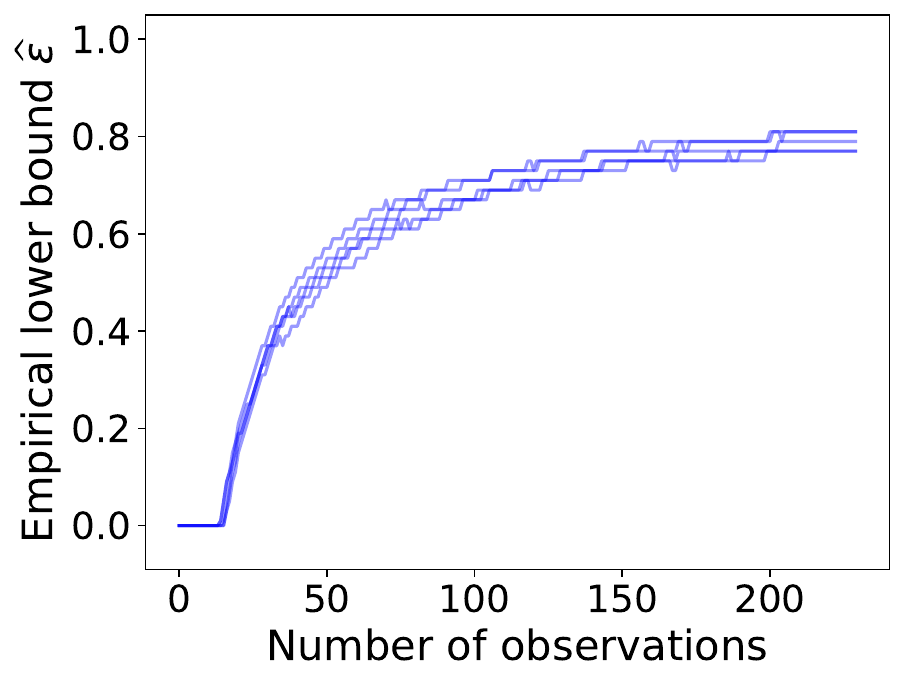}
        \caption{Non-private DP-SGD.}
        \label{fig:up_nonprivate_dpsgd}
    \end{subfigure}
    \caption{Sequential audit results using e-process based testing (\Cref{alg:alternative-seq-dp-testing}) for DP-SGD implementations during training. White-box access with canary gradient threat model over 5 independent runs as in \ref{sec:dpsgd-audit}. The e-process approach demonstrates faster detection rates compared to \Cref{alg:seq-dp-testing} while maintaining accurate identification of privacy-preserving mechanisms.}
    \label{fig:up_dpsgd_audit}
\end{figure}

\section{Comparison to DP-Auditorium}
\label{sec:dp-auditorium}

In \Cref{subsec:mean-mechanisms}, we reported the performance of sequential tests at detecting privacy violations of algorithms for private mean estimation (\Cref{tbl:mean_results}) and commented on how these results compare to the ones reported in DP-Auditorium \cite{2024DPAuditorium}. The presented sequential test differs from DP-Auditorium in many aspects, so in this section we decouple which of these aspects give the most improvement.

To make this comparison, we set $(\eps,\delta) = (0.01, 10^{-5})$ and report the rejection rates in \Cref{tbl:decoupling_improvements} over $10$ runs of different auditing algorithms that gradually move from our sequential algorithm to the $\MMD$-tester from \cite{2024DPAuditorium}. Recall that the $\MMD$-tester constructs a confidence interval $[L,U]$ for $\MMD^2$ from a fixed batch of samples and then rejects if $L$ is greater than the square of the upper bound on the $\MMD$. Below, we list key differences between this algorithms and our sequential procedure.  
\begin{itemize}
    \item Our algorithms use an improved bound on the $\MMD$ (see  \Cref{thm:MMDimproved}), which is tighter than the one used in \cite{2024DPAuditorium}. 
    \item Both \cite{2024DPAuditorium} and us use the RBF kernel. However, we use the median heuristic (MH) \cite{garreau2017large} to estimate the bandwidth using the first $20$ samples, which are then not used in the auditing procedure, while in DP-Auditorium the authors set the bandwidth to a constant. For \Cref{tbl:decoupling_improvements}, we consider this constant to be $1$.
    \item Our algorithms are sequential, which means that they can potentially detect privacy violations with smaller number of samples. Furthermore, we estimate the witness function using OGA while the MMD-tester just uses an empirical average. 
\end{itemize}

In \Cref{tbl:decoupling_improvements}, we evaluate the following auditing procedures. The algorithm `Batch' is the $\MMD$-tester from \cite{2024DPAuditorium}: it uses a fixed number of samples, the previous $\MMD$ bound from \cite{2024DPAuditorium} and it fixes the bandwidth to $1$. The algorithm `Batch + new $\MMD$ bound' replaces the $\MMD$ bound by the one that we obtained in \Cref{thm:MMDimproved}. The algorithm `Batch + MH' modifies the $\MMD$-tester by implementing the median heuristic. Finally, `Batch + new $\MMD$ bound + MH' implements the $\MMD$-tester along with the improved $\MMD$ bound and the median heuristic. Analogously, we define the sequential algorithms: `Sequential' is an implementation of \Cref{alg:seq-dp-testing} that uses the $\MMD$ bound from \cite{2024DPAuditorium} and does not use the median heurstic, and so on. 

From \Cref{tbl:decoupling_improvements}, we observe that some improvement
comes from using a sequential algorithm\footnote{Note that this improvement is not necessarily achieved by every sequential test. For example, one could construct a confidence sequence $([L_t, U_t])$---a set of intervals such that with high probability, $L_t \leq \MMD^2 \leq U_t$ for all $t\ge1$--- and reject if $L_t$ is larger than the $\MMD^2$ upper bound \cite{manole2023martingale}. This procedure would typically not improve on the batch test. }, while some of it also comes from using the median
heuristic. We note that the improvement on the MMD bound is not very significant, since for
the privacy parameters that we chose both bounds are actually close; our bound is better when $\eps$ is large (e.g., more than $1$).

\begin{table}[htbp]
\centering
\caption{We decouple the improvements of  \Cref{alg:seq-dp-testing} over the $\MMD$-Tester from \cite{2024DPAuditorium} by reporting detection rate of privacy violations with different auditing algorithms. We audit the privacy of three mechanisms defined in \Cref{sec:experiments}: a $(0.01, 10^{-5})$-DP mechanism (DPGaussian) and  two non-DP mechanisms (NonDPGaussian1, NonDPGaussian2); we omit `Gaussian' from their name in the table. We observe that some improvement comes from using our sequential algorithm, while some of it also comes from using the Median Heuristic (MH). We note that the improvement on the MMD bound is not very significant, since for the privacy parameters that we chose both bounds are actually close.} 
\begin{tabular}{l c c c}
\toprule
Auditing Algorithm & DP & NonDP1 & NonDP2\\
\midrule 
\addlinespace
Sequential + new $\MMD$ bound + MH (our \Cref{alg:seq-dp-testing}) & 0 & 1 &1\\ 
Sequential + MH & 0 & 1 &1\\ 
Sequential + new $\MMD$ bound & 0 & 0 &1\\ 
Sequential & 0 & 0 &1\\ 
Batch + new $\MMD$ bound + MH & 0 & 1 &0.1\\ 
Batch + MH  & 0 & 1 &0.1\\ 
Batch + new $\MMD$ bound & 0 & 0 &0.5\\ 
Batch ($\MMD$-Tester \cite{2024DPAuditorium}) & 0 & 0 &0.5\\ 
\bottomrule
\end{tabular}
\label{tbl:decoupling_improvements} 
\end{table}





\section{Technical Details}\label{sec:OCO_algorithms}

\subsection{Online Newton Step}\label{sec:ons}

Below, we present the Online Newton Step algorithm. In \Cref{alg:seq-dp-testing}, we run ONS with losses $\ell_t(\lambda) = -\log(1+\lambda \left[\langle f_t,K(X_t, \cdot) - K(Y_t, \cdot)\rangle_\calH - \tau(\eps,\delta)\right])$. 

\begin{algorithm}[H]
\caption{Online Newton Step in 1D}
\label{alg:ons_1d}
\begin{algorithmic}[1]
\State \textbf{Input: } Interval \([a, b] \subseteq \mathbb{R} \)
\State Set $\lambda_1 = 0, \beta = \frac{1}{2}\min\{1/(4L(b-a)),1\}, A_0 = \frac{1}{\beta^2(b-a)^2}$
\For{\( t = 1 ...\)}
    \State Play \( \lambda_t \) and observe loss function \( \ell_t \)
    \State Compute gradient \( g_t = \nabla \ell_t(\lambda_t) \)
    \State Update scalar \( A_t = A_{t-1} + g_t^2 \)
    \State $\lambda_{t+1} = \min\{b,\max\{a, \lambda_t - \frac{1}{\beta} \frac{g_t}{A_t}\}\}$
\EndFor
\end{algorithmic}
\end{algorithm}

\begin{theorem}[Regret of ONS \cite{hazan2007ons}]\label{thm:regret_of_ons}
    Let $\lambda_1, \lambda_2,...$ be the ONS iterates according to \Cref{alg:ons_1d}. Suppose the functions $\ell_1(\cdot),\ell_2(\cdot),...$ are $L$-Lipschitz and $1$-exp-concave. Then, for every $t\ge 1$
    \[\max_{\lambda\in[a,b]} \sum_{i\in [t]} \ell_t(\lambda) - \ell_t(\lambda_t) \leq  10 L (b-a)\log(t).\]
\end{theorem}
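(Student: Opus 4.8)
The plan is to reproduce the classical Online Newton Step analysis of \cite{hazan2007ons}, specialized to the one-dimensional domain $[a,b]$ where the preconditioner $A_t$ is a scalar. Write $D = b-a$, note $|g_i| = |\nabla\ell_i(\lambda_i)| \le L$ by the Lipschitz hypothesis, and fix an arbitrary comparator $\lambda^\star \in [a,b]$; the goal is to bound the regret $\sum_{i\in[t]}[\ell_i(\lambda_i) - \ell_i(\lambda^\star)]$ by $10LD\log(t)$. The engine of the whole argument is the quadratic surrogate afforded by exp-concavity. First I would prove the standard lemma that for a $1$-exp-concave, $L$-Lipschitz loss on a domain of diameter $D$, with $\beta = \tfrac12\min\{1/(4LD),1\}$ as chosen in \Cref{alg:ons_1d},
\[ \ell_i(\lambda^\star) \ge \ell_i(\lambda_i) + g_i(\lambda^\star - \lambda_i) + \tfrac{\beta}{2}\, g_i^2 (\lambda^\star - \lambda_i)^2. \]
This holds because $2\beta \le 1$ makes $e^{-2\beta\ell_i}$ concave; applying the gradient inequality for this concave function at $\lambda_i,\lambda^\star$ and taking logarithms reduces the claim to the scalar estimate $-\log(1-z) \ge z + z^2/4$, valid on $|z| \le 1/4$, and the choice $\beta \le 1/(8LD)$ guarantees $|z| = 2\beta|g_i(\lambda^\star-\lambda_i)| \le 1/4$. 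Summing the rearranged inequality turns the regret into the surrogate bound
\[ \sum_{i\in[t]} \big[\ell_i(\lambda_i) - \ell_i(\lambda^\star)\big] \le \sum_{i\in[t]}\Big[ g_i(\lambda_i - \lambda^\star) - \tfrac{\beta}{2} g_i^2 (\lambda_i - \lambda^\star)^2\Big]. \]

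Next I would analyze the Newton update as a telescoping potential argument. Writing the pre-projection iterate $\lambda_{i+1}' = \lambda_i - \tfrac{1}{\beta}\,g_i/A_i$ and expanding $(\lambda_{i+1}'-\lambda^\star)^2$ yields the exact identity
\[ g_i(\lambda_i - \lambda^\star) = \tfrac{\beta A_i}{2}\big[(\lambda_i-\lambda^\star)^2 - (\lambda_{i+1}'-\lambda^\star)^2\big] + \tfrac{1}{2\beta}\cdot\tfrac{g_i^2}{A_i}. \]
Since $\lambda^\star \in [a,b]$ and Euclidean projection onto $[a,b]$ is non-expansive, $(\lambda_{i+1}-\lambda^\star)^2 \le (\lambda_{i+1}'-\lambda^\star)^2$, so I may replace $\lambda_{i+1}'$ by $\lambda_{i+1}$. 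Substituting into the surrogate bound and using $A_i = A_{i-1}+g_i^2$ lets the term $\tfrac{\beta A_i}{2}(\lambda_i-\lambda^\star)^2$ absorb the penalty $-\tfrac{\beta}{2}g_i^2(\lambda_i-\lambda^\star)^2$, leaving the telescoping sum
\[ \tfrac{\beta}{2}\sum_{i\in[t]} \big[A_{i-1}(\lambda_i-\lambda^\star)^2 - A_i(\lambda_{i+1}-\lambda^\star)^2\big] \le \tfrac{\beta}{2} A_0 D^2 = \tfrac{1}{2\beta}, \]
where the last equality is exactly the reason $A_0 = 1/(\beta^2 D^2)$ was chosen. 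The remaining scalar log-volume term I would control via $\tfrac{g_i^2}{A_i} = \tfrac{A_i-A_{i-1}}{A_i} \le \log(A_i/A_{i-1})$, which telescopes:
\[ \sum_{i\in[t]}\frac{g_i^2}{A_i} \le \log\frac{A_t}{A_0} = \log\Big(1 + \tfrac{\sum_i g_i^2}{A_0}\Big) \le \log\big(1 + tL^2\beta^2D^2\big). \]

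Collecting the two pieces gives
\[ \max_{\lambda^\star\in[a,b]}\sum_{i\in[t]}\big[\ell_i(\lambda_i)-\ell_i(\lambda^\star)\big] \le \tfrac{1}{2\beta}\Big[1 + \log\big(1 + tL^2\beta^2D^2\big)\Big], \]
and plugging $\beta = \tfrac12\min\{1/(4LD),1\}$ (so in the principal regime $\tfrac{1}{2\beta}=4LD$ and $L^2\beta^2D^2 = 1/64$) collapses the bracket to $4LD[1+\log(1+t/64)]$, which is at most $10LD\log(t)$ after absorbing the additive constant into the logarithm for $t\ge 2$ (and handling $t=1$ trivially). I expect the main obstacle to be the exp-concavity lemma of the first step: deriving the quadratic lower bound with the correct $\beta$ requires combining the concavity of $e^{-2\beta\ell_i}$ with the scalar logarithmic inequality on the exact interval $|z|\le 1/4$, and it is precisely this curvature---absent for merely convex losses---that produces the logarithmic rather than $\sqrt{t}$ rate. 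The other delicate point is the bookkeeping needed to pin down the explicit constant $10$ rather than a generic $O(LD\log t)$.
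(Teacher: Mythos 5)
The paper never actually proves this statement---it is imported verbatim from \cite{hazan2007ons}---so the right benchmark is the classical ONS analysis that the citation points to, and your reconstruction follows it faithfully: the exp-concavity lemma (concavity of $e^{-2\beta\ell_i}$ plus $-\log(1-z)\ge z+z^2/4$ on $|z|\le 1/4$, enabled by $\beta\le 1/(8LD)$), the Newton-step identity combined with non-expansiveness of the projection, the telescoping of $A_{i-1}(\lambda_i-\lambda^\star)^2-A_i(\lambda_{i+1}-\lambda^\star)^2$ against the initialization $A_0=1/(\beta^2D^2)$, and the potential bound $\sum_i g_i^2/A_i\le\log(A_t/A_0)$ are exactly the cited argument. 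Your algebra in the regime $\beta=1/(8LD)$ is correct, and the closing numerical claim checks out: $4+4\log(1+t/64)\le 10\log t$ holds for all $t\ge 2$ (the difference is positive at $t=2$ and increasing).

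There are, however, two concrete gaps at the final ``plug in $\beta$'' step, and they are worth flagging because they expose that the statement as written is broader than what the citation supports. First, you only treat the case $4LD\ge 1$; when $4LD<1$ the algorithm sets $\beta=1/2$, your derived bound becomes $1+\log(1+tL^2D^2/4)$, and this is \emph{not} dominated by $10LD\log t$: with $LD=10^{-2}$ and $t=100$ the derived bound exceeds $1$ while $10LD\log t\approx 0.46$. No bookkeeping repairs this, because the bound in \cite{hazan2007ons} is of the form $5(1/\alpha+GD)\log T$, i.e.\ $5(1+LD)\log T$ here, which carries an additive term that does not vanish as $LD\to 0$; the form $10L(b-a)\log t$ is only comparable to it when $L(b-a)\ge 1$. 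Second, ``handling $t=1$ trivially'' is not possible: the right-hand side is $10LD\log 1=0$, while the first-round regret $\ell_1(\lambda_1)-\min_{\lambda\in[a,b]}\ell_1(\lambda)$ is generically positive (up to $LD$), so the inequality is simply false at $t=1$. Both defects are inherited from the paper's over-general phrasing rather than from your argument: in the only place the theorem is invoked (the proof of \Cref{thm:statistical_properties}), $L=4+2\tau(\eps,\delta)$ and $b-a=1/(4+2\tau(\eps,\delta))$, so $L(b-a)=1$, the principal regime applies, and the relevant asymptotics concern large $t$---your proof covers everything actually needed there. A proof of the theorem exactly as stated would instead require adding the hypotheses $L(b-a)\ge 1$ and $t\ge 2$, or restating the conclusion in the additive form $5(1+L(b-a))\log(t+1)$.
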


\subsection{Online Gradient Ascent}\label{sec:oga}

Below we present online gradient ascent. In \Cref{alg:seq-dp-testing}, we run OGA with losses ${h_t(f) = \langle f, K(X_t, \cdot) - K(Y_t, \cdot)\rangle_{\calH}}$. 

\begin{algorithm}[H]
\caption{Online Gradient Ascent in RHKS}
\label{alg:oga}
\begin{algorithmic}
\State \textbf{Input: } Feasible set $\cal G$, $f_1 \in {\cal G}$ 
\For{$t = 1, 2, \dots$}
    \State Play $f_t \in {\cal G}$, receive loss function $h_t$
    \State Compute gradient $g_t = \nabla h_t(f_t)$ 
    \State Update $f_{t+1} = \Pi_{\mathcal{G}}\left(f_t + \eta_t g_t\right)$
\EndFor
\end{algorithmic}
\end{algorithm}

\begin{theorem}[Regret of OGA \cite{orabona2023onlinelearning}]\label{thm:regret_of_oga}
    Denote $D = \sup_{f,g \in {\cal G}}\|f-g\|_{\calH}$ and let $f_1, f_2,...$ be the OGA iterates according to \Cref{alg:oga}. Suppose that the functions $h_1(\cdot),h_2(\cdot),...$ are convex and $L$-Lipschitz, and $\eta_t = D/\sqrt{\sum_{i \in [t]} \|\nabla h_i(f_i)\|_{\calH}^2}$ for every $t\ge1$. Then, for every $t\ge 1$
    \[\max_{f \in {\cal G}} \sum_{i \in [t]} h_t(f) - h_t(f_t) \leq  \frac{3D}{2}\sqrt{\sum_{i\in[t]} \|\nabla h_i(f_i)\|_{\calH}^2} \leq \frac{3DL}{2}\sqrt{t}.\]
\end{theorem}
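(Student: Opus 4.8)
The plan is to prove this as the classical potential-function (telescoping) regret bound for online convex optimization, specialized to the Hilbert space $\calH$ and to the adaptive step size $\eta_t$. I would fix an arbitrary comparator $f \in \mathcal{G}$ and track the potential $\|f_i - f\|_{\calH}^2$; the regret against $f$ falls out of a per-step inequality, and since the resulting upper bound does not depend on $f$, taking the supremum over $f \in \mathcal{G}$ at the very end is immediate. The first reduction is to pass from function gaps to inner products: because the losses fed to OGA in \Cref{alg:seq-dp-testing} are the linear functionals $h_i(f) = \langle f, K(X_i,\cdot) - K(Y_i,\cdot)\rangle_{\calH}$, the first-order expansion $h_i(f) - h_i(f_i) = \langle g_i, f - f_i\rangle_{\calH}$ holds with equality (more generally, convexity gives the first-order bound in the direction needed once the losses are linearized). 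Hence it suffices to bound $\sum_{i\in[t]}\langle g_i, f - f_i\rangle_{\calH}$.

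For the per-step bound I would use the non-expansiveness of the metric projection $\Pi_{\mathcal{G}}$ onto the convex set $\mathcal{G}$: since $f \in \mathcal{G}$,
\[\|f_{i+1} - f\|_{\calH}^2 = \|\Pi_{\mathcal{G}}(f_i + \eta_i g_i) - f\|_{\calH}^2 \le \|f_i + \eta_i g_i - f\|_{\calH}^2 .\]
Expanding the right-hand side as $\|f_i - f\|_{\calH}^2 + 2\eta_i\langle g_i, f_i - f\rangle_{\calH} + \eta_i^2\|g_i\|_{\calH}^2$ and rearranging yields the standard inequality
\[\langle g_i, f - f_i\rangle_{\calH} \le \frac{\|f_i - f\|_{\calH}^2 - \|f_{i+1} - f\|_{\calH}^2}{2\eta_i} + \frac{\eta_i}{2}\|g_i\|_{\calH}^2 .\]
Summing over $i \in [t]$ then splits the regret into a telescoping term and a gradient-sum term.

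The heart of the argument is controlling these two sums with the adaptive choice $\eta_i = D/\sqrt{\sum_{j\in[i]}\|g_j\|_{\calH}^2}$. For the telescoping term I would apply Abel summation: writing $c_i = 1/(2\eta_i)$, which is nondecreasing because $\sum_{j\in[i]}\|g_j\|_{\calH}^2$ is, the cross terms collapse and, using $\|f_i - f\|_{\calH}^2 \le D^2$ for every $i$, the whole term is bounded by $c_t D^2 = D^2/(2\eta_t) = \tfrac{D}{2}\sqrt{\sum_{i\in[t]}\|g_i\|_{\calH}^2}$. For the gradient-sum term I would invoke the elementary inequality $\sum_{i\in[t]} a_i/\sqrt{\sum_{j\in[i]} a_j} \le 2\sqrt{\sum_{i\in[t]} a_i}$ with $a_i = \|g_i\|_{\calH}^2$, which gives $\sum_{i\in[t]} \tfrac{\eta_i}{2}\|g_i\|_{\calH}^2 \le D\sqrt{\sum_{i\in[t]}\|g_i\|_{\calH}^2}$. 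Adding the two contributions produces exactly $\tfrac{3D}{2}\sqrt{\sum_{i\in[t]}\|g_i\|_{\calH}^2}$, and the Lipschitz bound $\|g_i\|_{\calH} \le L$ then yields $\tfrac{3DL}{2}\sqrt{t}$. I expect the main obstacle to be the adaptive-step-size bookkeeping: recovering the exact constant $\tfrac32$ requires the careful Abel-summation estimate $D^2/(2\eta_t)$ paired with the factor-$2$ prefix-sum inequality, rather than a crude constant-step analysis. The degenerate case $\sum_{j\in[i]}\|g_j\|_{\calH}^2 = 0$ is handled by the convention that the regret is trivially zero when all observed gradients vanish.
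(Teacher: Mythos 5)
Your proof is correct and matches the intended argument: the paper states this theorem without proof, importing it from the cited reference (Orabona's online learning notes), and your projection non-expansiveness step, Abel-summation handling of the time-varying $1/(2\eta_i)$ weights (bounded by $D^2/(2\eta_t)$), and the prefix-sum inequality $\sum_{i\in[t]} a_i/\sqrt{\sum_{j\in[i]} a_j} \le 2\sqrt{\sum_{i\in[t]} a_i}$ constitute exactly the standard adaptive-step-size analysis there, recovering the constant $\tfrac{3}{2}$. One point worth making explicit: for the maximization regret $\max_{f\in\mathcal{G}}\sum_{i\in[t]} h_i(f)-h_i(f_i)$, the linearization step requires $h_i$ concave (the hypothesis ``convex'' gives the gradient inequality in the wrong direction for an upper bound), so your reading of the statement through the linear losses $h_i(f)=\langle f, K(X_i,\cdot)-K(Y_i,\cdot)\rangle_{\mathcal{H}}$ actually used in Algorithm 1, where the linearization holds with equality, is the correct one.
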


\subsection{OGA implementation}

Note that OGA is not directly implementable. In particular the functions $f_t$ can't be stored in a computer. The only reason we need the functions $f_t$ in Algorithm \ref{alg:seq-dp-testing} is to calculate $\langle f_t,K(X_t, \cdot) - K(Y_t, \cdot)\rangle_\calH := v_t$. Below, we show how to do this. First, we find an expression for $f_t$. Note that $\Pi_{\cal G}(f) = \min\left\{\frac{1}{\|f\|_\calH},1\right\}f$. Unrolling the recursion given by running OGA with losses $h_t(f) = \langle f, K(X_t, \cdot) - K(Y_t, \cdot)\rangle_{\calH}$ and denoting $M_t = \sum_{i \in [t]} \|K(X_i,\cdot) - K(Y_i,\cdot)\|_\calH^2$ we obtain
\begin{align*}
    f_{t+1} &= \underbrace{\min\left\{1, \frac{1}{\left\|f_t + \frac{2[K(X_t, \cdot) - K(Y_t, \cdot)]}{\sqrt{M_t}}\right\|_\calH}\right\}}_{:= \gamma_t}\left(f_t + \frac{2[K(X_t, \cdot) - K(Y_t, \cdot)]}{\sqrt{M_t}}\right)\\
    & = \gamma_t \left(f_t + \frac{2[K(X_t, \cdot) - K(Y_t, \cdot)]}{\sqrt{M_t}}\right)\\
    &= \gamma_t \left(\gamma_{t-1} \left(f_{t-1} + \frac{2[K(X_{t-1}, \cdot) - K(Y_{t-1}, \cdot)]}{\sqrt{M_{t-1}}}\right) + \frac{2[K(X_t, \cdot) - K(Y_t, \cdot)]}{\sqrt{M_t}}\right)\\
    &\vdots \\
    &= \sum_{i\in[t]} \left(\frac{2[K(X_i, \cdot) - K(Y_i, \cdot)]}{\sqrt{M_i}} \prod_{j = i}^{t} \gamma_j\right).
\end{align*}
Hence, we obtain the following expression for $v_t$:
\[v_t= \langle f_t,K(X_t, \cdot) - K(Y_t, \cdot)\rangle_\calH = 2\sum_{i\in[t-1]} \left(\frac{K(X_i, X_t) - K(X_i, Y_t) - K(Y_i, X_t) + K(Y_i, Y_t)}{\sqrt{M_i}} \prod_{j = i}^{t-1} \gamma_i\right).\]
Since $M_t = \sum_{i \in [t]} \|K(X_i,\cdot) - K(Y_i,\cdot)\|_\calH^2 = \sum_{i \in [t]} K(X_i,X_i)^2 - 2K(X_i,Y_i) + K(Y_i,Y_i)^2$ for every $t$, all the terms above are Kernel computations, except $\prod_{j = i}^{t} \gamma_i$. We find a computable expression for these terms. 

Recall that $\gamma_t = \min\left\{1, \frac{1}{\left\|f_t + \frac{2[K(X_t, \cdot) - K(Y_t, \cdot)]}{\sqrt{M_t}}\right\|_\calH}\right\}$, so we only need to compute the norm of the second term in the minimum. It follows that 
\begin{align*}
    &\left\|f_t + \frac{2[K(X_t, \cdot) - K(Y_t, \cdot)]}{\sqrt{M_t}}\right\|_\calH^2 = \|f_t\|_\calH^2 + 4v_t/\sqrt{M_t} + 4(M_t - M_{t-1})/M_t\\
    &= \gamma_{t-1}^2\left\|f_{t-1} + \frac{2[K(X_{t-1}, \cdot) - K(Y_{t-1}, \cdot)]}{\sqrt{M_{t-1}}}\right\|_\calH^2 + 4v_t/\sqrt{M_t} + 4(M_t - M_{t-1})/M_t\\
    &\vdots \\
    &= 4v_t/\sqrt{M_t} + 4(M_t - M_{t-1})/M_t + 4\sum_{i \in [t-1]} \left(\frac{v_i}{\sqrt{M_i}} + \frac{M_i - M{i-1}}{M_i}\right)\left(\prod_{j = i}^{t-1}\gamma_j\right)^2.
\end{align*}
All of the terms $v_1,...,v_t, M_1,...,M_t, \gamma_1,...,\gamma_{t-1}$ have already been computed by the time we need to compute $\gamma_t$. Hence, the above equality is a computable expression for $\gamma_t$. 

\subsection{Universal Portfolio}
The setup for the algorithm is the following. Consider two assets. At each time step $t = 1, 2, \dots, T$, the market reveals a \emph{price relative vector}
    \[
    x_t = (x_{t,1}, x_{t,2}) \in \mathbb{R}_{>0}^2,
    \]
where $x_{t,i}$ is the ratio of the price of asset $i$ at time $t$ to its price at time $t-1$. A portfolio is a number $\beta \in [0,1]$, where $\beta$ represents the fraction of wealth invested in asset 1 and $(1 - \beta)$ is invested in asset 2. The portfolio is \emph{rebalanced} to the same allocation at each round. 

The cumulative wealth achieved by a fixed portfolio $\beta \in [0,1]$ after $t$ time steps is given by
    \[
    W_t(\beta) := \prod_{s=1}^{t} \left( \beta x_{s,1} + (1 - \beta) x_{s,2} \right),
    \]
starting from initial wealth $W_0(\beta) = 1$.  At time $t$, the universal portfolio strategy \cite{cover1991up} defines a probability distribution over portfolios $\beta \in [0,1]$, weighted by their wealth up to time $t-1$. Specifically, under a prior density $\pi(\beta)$ on $[0,1]$, the time-$t$ portfolio is chosen as
\[
\beta_t := \frac{\int_0^1 \beta \cdot W_{t-1}(\beta) \, \pi(\beta) \, d\beta}{\int_0^1 W_{t-1}(\beta) \, \pi(\beta) \, d\beta}.
\]
In our case, $\pi(\beta)$ is the \textbf{Beta}$(1/2, 1/2)$ density:
\[
\pi(\beta) = \frac{1}{\pi} \cdot \frac{1}{\sqrt{\beta(1 - \beta)}} \quad \text{for } \beta \in (0,1).
\]
The optimal constant-rebalanced portfolio is
\[
\beta^* \in \arg\max_{\beta \in [0,1]} W_T(\beta).
\]
The cumulative wealth of the universal portfolio strategy is
\[
W_T^{\text{UP}} := \prod_{t=1}^T \left( \beta_t x_{t,1} + (1 - \beta_t) x_{t,2} \right).
\]

\begin{theorem}[Regret of UP (Theorem 3 in \cite{cover2002up_sideinfo})]\label{thm:regret_of_up}
Let $\beta^* \in [0,1]$ be the best fixed constant-rebalanced portfolio in hindsight:
\[
\beta^* \in \arg\max_{\beta \in [0,1]} W_t(\beta).
\]
Then the universal portfolio satisfies the regret bound:
\[
\log W_t(\beta^*) - \log W_t^{\text{UP}} \le \frac{1}{2} \log(t+1) + \log 2.
\]

\noindent This holds deterministically for any sequence of market vectors $x_1, \dots, x_t \in \mathbb{R}_{>0}^2$.
\end{theorem}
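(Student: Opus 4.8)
The plan is to convert the regret bound into a lower bound on the Bayesian mixture of constant--rebalanced wealths, and then to estimate that mixture using log--concavity. Write $M_t := \int_0^1 W_t(\beta)\,\pi(\beta)\,d\beta$ for the mixture of the wealths $W_t(\beta)=\prod_{s=1}^t\big(\beta x_{s,1}+(1-\beta)x_{s,2}\big)$ under the Beta$(1/2,1/2)$ prior $\pi$. The first step, which I would carry out in full, is the exact identity $W_t^{\up}=M_t$. This follows by telescoping: substituting $\beta_t=M_{t-1}^{-1}\int_0^1\beta\,W_{t-1}(\beta)\pi(\beta)\,d\beta$ into the one--step multiplier and using $W_t(\beta)=W_{t-1}(\beta)\big(\beta x_{t,1}+(1-\beta)x_{t,2}\big)$ shows
\[
\beta_t x_{t,1}+(1-\beta_t)x_{t,2}=\frac{\int_0^1\big(\beta x_{t,1}+(1-\beta)x_{t,2}\big)W_{t-1}(\beta)\pi(\beta)\,d\beta}{M_{t-1}}=\frac{M_t}{M_{t-1}},
\]
so the product over $s\le t$ collapses to $M_t/M_0=M_t$, since $M_0=\int_0^1\pi=1$. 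Hence it suffices to prove $\log M_t\ge \log W_t(\beta^*)-\tfrac12\log(t+1)-\log 2$.

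The second ingredient is that $\beta\mapsto\log W_t(\beta)=\sum_{s=1}^t\log\big(\beta x_{s,1}+(1-\beta)x_{s,2}\big)$ is concave on $[0,1]$, being a sum of logarithms of affine functions; thus $W_t$ is log--concave and peaks at $\beta^*$. A crude shrinkage toward $\beta^*$ (writing $\beta=(1-\rho)u+\rho\beta^*$ and using $\beta x_{s,1}+(1-\beta)x_{s,2}\ge\rho\,(\beta^* x_{s,1}+(1-\beta^*)x_{s,2})$ to get $W_t(\beta)\ge\rho^{\,t}W_t(\beta^*)$ on a width--$(1-\rho)$ interval around $\beta^*$) already yields $M_t\gtrsim W_t(\beta^*)/(t+1)$ after optimizing $\rho$. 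However, this only produces the rate $\log(t+1)$, not the sharp $\tfrac12\log(t+1)$.

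Obtaining the correct exponent $\tfrac12$ and the constant $\log 2$ is the \emph{main obstacle}, and it requires exploiting the quadratic curvature of the log--concave function $\log W_t$ near $\beta^*$ matched to the Jeffreys prior $\pi(\beta)\propto1/\sqrt{\beta(1-\beta)}$, which is precisely the (asymptotically) minimax prior for this one--parameter problem. Concretely, I would reduce to the extremal ``horse--race'' sequences, obtained by driving the price ratios $x_{s,1}/x_{s,2}$ to $0$ or $\infty$, so that $W_t(\beta)=\beta^{a}(1-\beta)^{b}$ with $a+b=t$. For these the mixture is computed in closed form,
\[
M_t=\frac1\pi\int_0^1\beta^{a-1/2}(1-\beta)^{b-1/2}\,d\beta=\frac{B(a+\tfrac12,\,b+\tfrac12)}{B(\tfrac12,\tfrac12)},
\]
while the peak is $W_t(\beta^*)=(a/t)^{a}(b/t)^{b}$; a Stirling estimate of this ratio, uniform in $(a,b)$ including the endpoints $a\in\{0,t\}$, yields exactly $\log W_t(\beta^*)-\log M_t\le\tfrac12\log(t+1)+\log 2$. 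The two pieces I expect to be delicate are (i) justifying that these extremal configurations are worst case for $W_t(\beta^*)/M_t$ over all positive market vectors (a convexity/extreme--point argument in the ratios $x_{s,1}/x_{s,2}$), and (ii) making the Stirling bound explicit with the stated constant rather than an $O(1)$ term. This finite--time refinement is exactly the $m=2$ specialization of Cover and Ordentlich's analysis, which is why the statement is cited rather than reproved here.
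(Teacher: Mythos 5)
The paper never proves this statement: it is imported as-is from Cover and Ordentlich (the cited Theorem 3), so there is no internal proof to compare your attempt against. Judged on its own terms, your sketch correctly reconstructs the architecture of the cited proof. The first step --- the identity $W_t^{\text{UP}} = M_t := \int_0^1 W_t(\beta)\,\pi(\beta)\,d\beta$ via the telescoping computation $\beta_t x_{t,1} + (1-\beta_t)x_{t,2} = M_t/M_{t-1}$ --- is complete and correct, and it is the conceptual heart of the result (the UP wealth is a Bayes mixture of constant-rebalanced wealths). Your reduction of the remaining inequality $W_t(\beta^*)/M_t \le 2\sqrt{t+1}$ to horse-race sequences, together with the closed forms $M_t = B\bigl(a+\tfrac12,\,b+\tfrac12\bigr)/B\bigl(\tfrac12,\tfrac12\bigr)$ and $W_t(\beta^*) = (a/t)^a (b/t)^b$, is likewise the right path and is correctly matched to the Jeffreys prior; your observation that the crude shrinkage argument only yields rate $\log(t+1)$ shows you have located exactly where the factor $\tfrac12$ must come from.

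That said, the two steps you defer are the genuine mathematical content, not routine bookkeeping. For (i), the standard resolution is to first commute the suprema, $\sup_{x_1,\dots,x_t}\max_{\beta} W_t(\beta)/M_t = \max_{\beta}\sup_{x_1,\dots,x_t} W_t(\beta)/M_t$, so that the dependence of $\beta^*$ on the market disappears; then, using scale-invariance of the ratio to normalize each $x_s$ onto the simplex, both numerator and denominator are affine in each $x_s$ separately, so the ratio is maximized coordinate-by-coordinate at vertices, i.e.\ at horse-race vectors (as a supremum over the closure, since the theorem's vectors are strictly positive). For (ii), the uniform bound $\pi\,(a/t)^a(b/t)^b / B\bigl(a+\tfrac12,b+\tfrac12\bigr) \le 2\sqrt{t+1}$, including the endpoint cases $a \in \{0,t\}$, requires an explicit Gamma-function estimate and is precisely the finite-time lemma in Cover--Ordentlich. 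Since the paper itself defers to the citation for exactly these points, your proposal is a fair and structurally accurate account, but as written it is a proof outline with the hard steps acknowledged rather than executed.
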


\section{Analysis of \Cref{alg:seq-dp-testing} on Synthetic Data with Known Distributions}
\label{sec:more_experiments}

All the experiments presented in the main text and in the following subsections were conducted using Google Colab's standard CPU runtime environment (12.7 GB RAM) with Python 3.

\subsection{Perturbed uniform distributions}

We begin by empirically demonstrating the behavior of our sequential test under two scenarios: when the null hypothesis of equal distributions is  true and when it is false. For the former, we compare two 2-dimensional uniform distributions on the unit cube $[0,1]^2$ where the auditing process $\mathcal{K}_t$ from \Cref{alg:seq-dp-testing} remains bounded below the rejection threshold $1/\alpha$ for a statistical confidence level $\alpha = 0.05$. Figure \ref{fig:same_uniforms} confirms that, as expected, we successfully control the Type I error at the desired statistical level when $H_0$ holds, across 100 simulations. For the latter scenario, we apply our test to compare a uniform distribution on $[0,1]^2$ against a perturbed uniform distribution following the construction in \cite{schrab2023mmd} with one perturbation, across 100 simulations. In line with our theoretical predictions, the auditing process $\mathcal{K}_t$, which measures the cumulative evidence against the null hypothesis, grows exponentially under this alternative hypothesis, as illustrated in Figure \ref{fig:perturbed_unif}. The test successfully rejects $H_0$ after, on average, collecting only 108 observations.

\begin{figure}[htbp]
    \centering

    \begin{minipage}{0.48\textwidth}
        \centering
        \includegraphics[width=\textwidth]{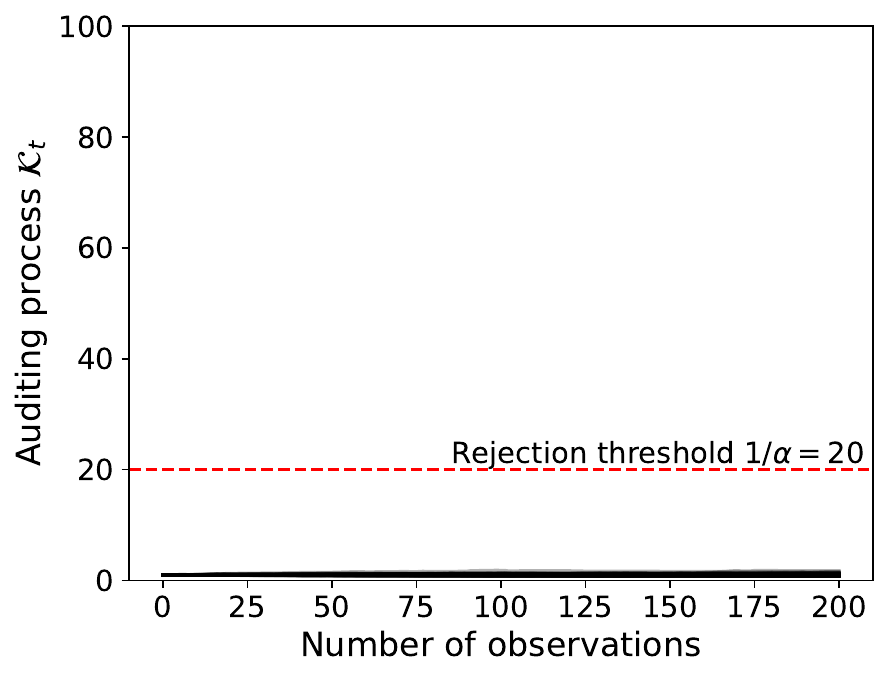}
        \caption{Type I error control in sequential testing. The auditing process for 100 simulations comparing two identical 2-dimensional uniform distributions on $[0,1]^2$. All trajectories remain below the rejection threshold (horizontal dashed line at $1/\alpha$), confirming proper Type I error control at the $\alpha = 0.05$ significance level.}
        \label{fig:same_uniforms}
    \end{minipage}
    \hfill 
    \begin{minipage}{0.48\textwidth}
        \centering
        \includegraphics[width=\textwidth]{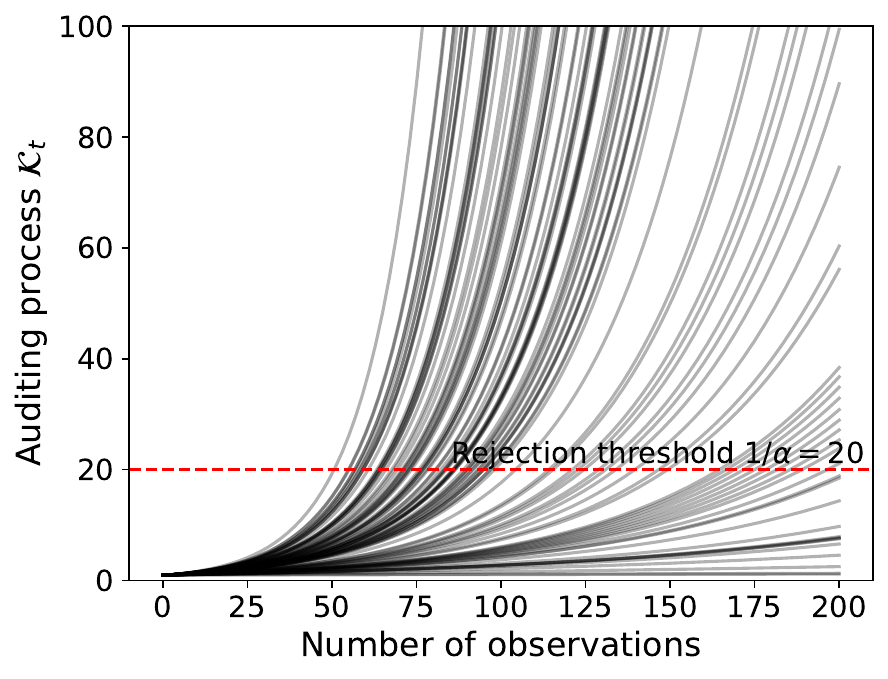}
        \caption{Detection power under alternative hypothesis. We calculate the auditing process for 100 simulations comparing a uniform distribution on $[0,1]^2$ against a perturbed uniform distribution. The exponential growth of the auditing process leads to rejection of the null hypothesis after 108 observations on average.}
        \label{fig:perturbed_unif}
    \end{minipage}
\end{figure}

\subsection{Gaussian distributions}

We then test our MMD-based sequential approach by comparing two Gaussian distributions while varying the dimensionality and separation between them. We fix one distribution as a $d$-dimensional standard normal distribution $\mathbb{P}_1 \sim \mathcal{N}(0,I_d)$, where $I_d$ is the $d$-dimensional identity matrix, varying $d$ from 1 to 5. Our test evaluates $H_0: \MMD(\mathbb{P}_1, \mathbb{P}_2) = 0$ against $H_1: \MMD(\mathbb{P}_1, \mathbb{P}_2) > 0$, with $\mathbb{P}_2 \sim \mathcal{N}(\mu, I_d)$ where we vary the center $\mu$ of the contrast distribution. We set the norm $||\mu||$ to different values between $0$ to $1$ to illustrate scenarios where the two distributions have different degrees of separation, making it progressively harder for the test to find evidence against the null as $||\mu||$ is closer to 0. Note that when $||\mu|| = 0$, the null hypothesis is true.

Figure \ref{fig:power_prop} shows the proportion of tests that reject the null hypothesis across these varying scenarios, based  on 20 simulations for each setting. For a separation level of $||\mu|| \geq 0.5$, the test consistently rejects $H_0$ correctly, even in high-dimensional settings ($d=5$). With minimal separation ($||\mu|| = 0.25$), the rejection rate is $62\%$ when $d=1$, degrading its power to $10\%$ in higher dimensions. We also successfully control the Type I error when $H_0$ holds ($||\mu|| = 0$), rejecting in fewer than $\alpha = 0.05$ of the tests. Figure \ref{fig:n_to_reject} illustrates the average number of samples needed to reject, further demonstrating the advantages of our proposed approach. Our sequential test correctly rejects $H_0$ with fewer than $\sim$700 observations, even in high dimensions, when the Gaussian distributions are well-separated ($||\mu|| \geq 0.5$), and with fewer than $1000$ observations in low-separation settings.

\begin{figure}[htbp]
    \centering

    \begin{minipage}{0.48\textwidth}
        \centering
        \includegraphics[width=\textwidth]{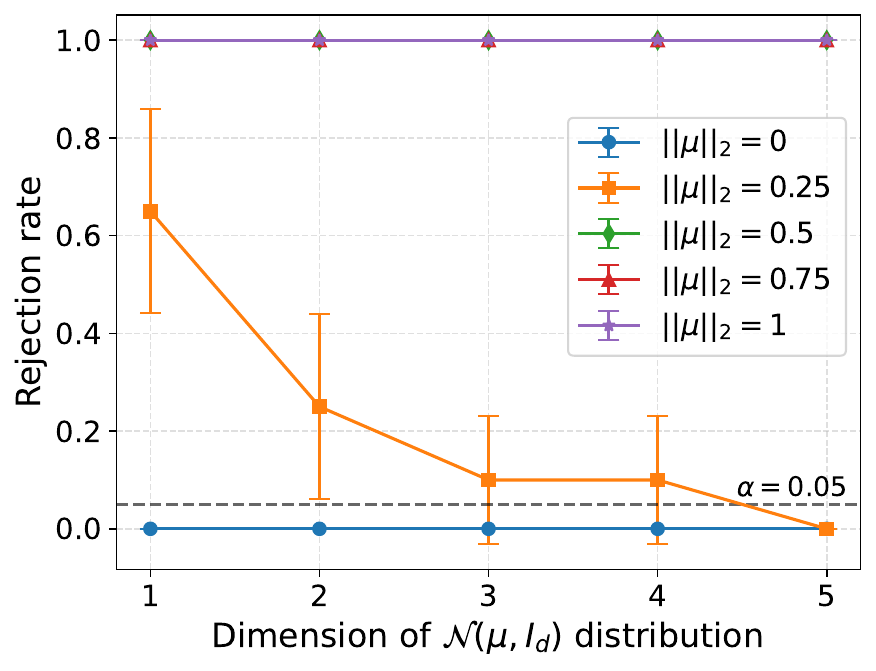}
        \caption{Power analysis of sequential MMD test across dimensions and separation levels. The plot shows rejection rates over 20 simulations comparing $\mathcal{N}(0,I_d)$ against $\mathcal{N}(\mu,I_d)$ with varying dimensionality $d$ and mean separation $\mu$. Type I error is controlled when $||\mu||=0$.}
        \label{fig:power_prop}
    \end{minipage}
    \hfill 
    \begin{minipage}{0.48\textwidth}
        \centering
        \includegraphics[width=\textwidth]{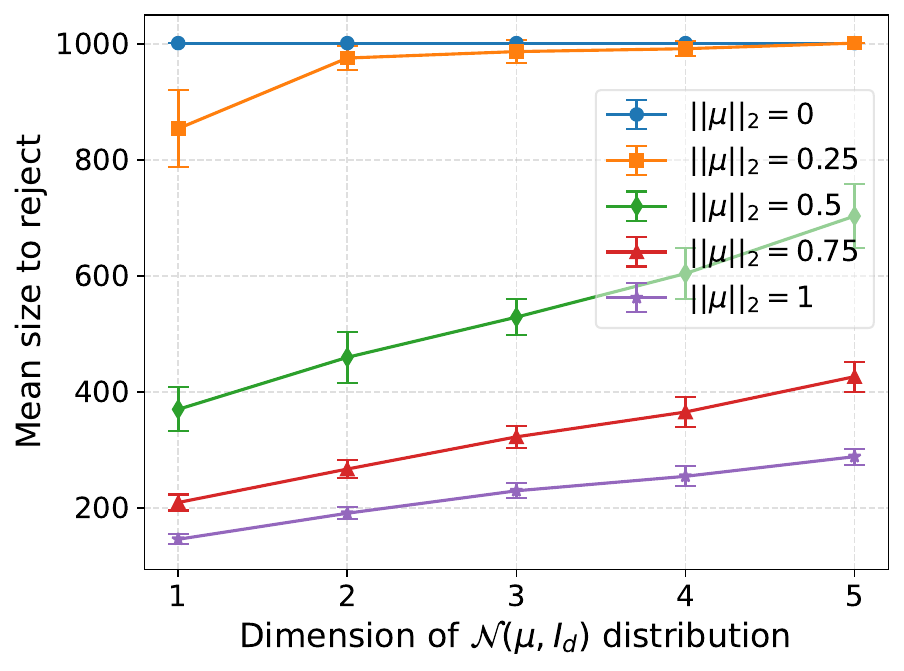}
        \caption{Sample efficiency of sequential MMD test. Average number of observations required for test rejection across different dimensionality ($d$) and separation ($\mu$) settings. Lower sample sizes demonstrate the test's efficiency at detecting distributional differences.}
        \label{fig:n_to_reject}
    \end{minipage}
\end{figure}

\section{Auditing DP-SGD for Certain Privacy Regimes}
\label{sec:app:dp-sgd}

Below we provide additional empirical analysis of our sequential privacy auditing framework applied to DP-SGD implementations across different privacy regimes. We examine two key scenarios that complement our main results. First, \Cref{fig:nonprivate_dpsgd_2500} audits non-private DP-SGD mechanisms over extended observation periods to demonstrate the evolution of our privacy lower bound estimates. Second, \Cref{fig:private_dpsgd_eps3} study the inherent limitations of MMD-based approaches when auditing mechanisms with large privacy parameters ($\epsilon \approx 3$). These experiments illustrate both the strengths and practical limitations of our sequential testing methodology.

\begin{figure}[htbp]
    \centering
    \includegraphics[width=0.46\textwidth]{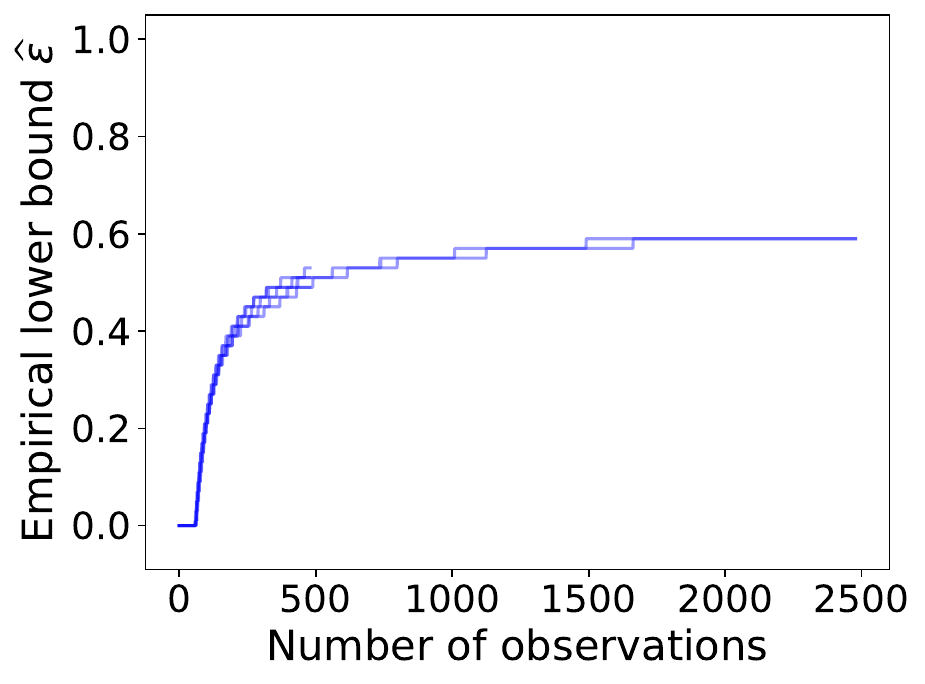}
    \caption{Extended sequential audit results of non-private DP-SGD implementations over 2,500 observations. The audit successfully rejects privacy hypotheses and establishes increasingly tight lower bounds on the privacy parameter.}
    \label{fig:nonprivate_dpsgd_2500} 
\end{figure}

\begin{figure}[htbp]
    \centering
    \includegraphics[width=0.46\textwidth]{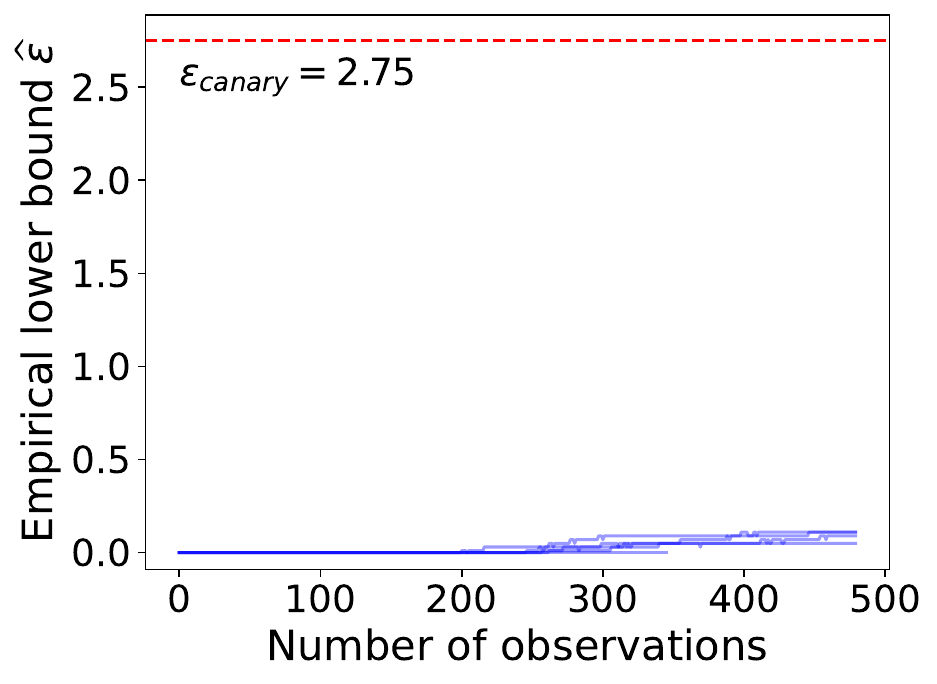}
    \caption{Sequential audit results for private DP-SGD implementations with large privacy parameters ($\epsilon_{canary}=2.75$, $\delta = 10^{-5}$). The empirical lower bound grows slowly due to the small gap $\Delta^2$ between the MMD and the rejection threshold $\tau(\epsilon, \delta)$, demonstrating the sample complexity challenges in this regime.}
    \label{fig:private_dpsgd_eps3}
\end{figure}

\section{Detailed Comparison with Related Work}
\label{sec:more-related-work}
Verifying that a randomized mechanism satisfies a DP guarantee involves checking that a specific divergence (related to the privacy parameters $\epsilon$ and $\delta$) remains bounded over all possible pairs of neighboring datasets. This  task is often  divided into two sub-problems: (1) identifying a "worst-case" or dominating pair of neighboring datasets, that maximizes the privacy loss, and (2) accurately estimating or bounding the divergence induced by this dominating pair. While our work focuses on the second sub-problem, we start by briefly review approaches for tackling the former. Although the concept of sequential privacy auditing appears new to our knowledge, we then provide a review of different methodologies within the field that are relevant to the second problem of divergence estimation and bounding. We finish by summarizing relevant work in the broader area of sequential hypothesis testing. 

Identifying worst-case datasets has been tackled through methods like grid search over datasets \cite{StatDP, dpsniper}, explicit construction under strong parametric assumptions on the mechanism \cite{DGKK22}, black-box optimization techniques that iteratively search for high-privacy-loss inputs \cite{2024DPAuditorium}, or the use of predefined canaries \cite{2023tightauditingDPML}. More recent approaches showed efficient testing relying on randomized canaries for specific mechanisms like the Gaussian mechanism in high dimensions \cite{AKOOMS23}. Black-box optimization approaches treat divergence maximization as an objective but often lack formal guarantees on finding the true worst case \cite{2024DPAuditorium}. Our proposed auditing method can be paired with any technique for identifying candidate worst-case dataset pairs.

Given a candidate pair of neighboring datasets, the core challenge is to estimate the resulting privacy loss or divergence. There is substantial prior work on this, primarily situated within the batch setting, where a fixed number of samples are drawn a priori.

Estimating distances or divergences between distributions is a classical statistics problem \cite{NWJ10, sriperumbudur2012empirical, BDKRW21}. Some methods employ optimization over function spaces, such as neural networks, to estimate $f$-divergences or Rényi divergences. While powerful, the finite-sample guarantees provided by some of these methods can depend  on estimator properties (e.g., neural network architecture) and may become vacuous for DP auditing purposes \cite{BDKRW21}. Furthermore, many provide only asymptotic guarantees, whereas DP requires strict finite-sample bounds. Our work, in contrast, leverages sequential analysis to provide anytime-valid results with potentially much lower sample complexity.

Our approach relies on sequential testing with kernel methods, specifically the Maximum Mean Discrepancy (MMD)\cite{gretton2012kernel, shekhar2023nonparametric2sampletesting}, to construct test statistics.  Prior work on privacy auditing has also used kernel methods, such as estimating regularized kernel Rényi divergence \cite{DM22}, but often requires strong assumptions (e.g., knowledge of covariance matrices) impractical in black-box settings or for mechanisms beyond Gaussian or Laplace.

 Some auditing techniques require strong assumptions, like a discrete output space \cite{GM18} or access to the mechanism's internal randomness or probability density/mass functions \cite{DJT13}. Others need access to the cdf of the privacy loss random variable \cite{DGKK22}. StatDP \cite{StatDP} requires semi-black-box access (e.g., running the mechanism without noise) \cite{StatDP}. While our method can incorporate white-box information to improve power, it fundamentally operates in the black-box setting. The applicability of these methods that require strong assumptions is limited in the blackbox setting. 

For the general black-box setting, tools like DP-Sniper \cite{dpsniper}, DP-Opt \cite{niu2022dp}, Delta-Siege \cite{lokna2023group}, and Eureka \cite{lu2022eureka} perform black-box testing by searching for an output event maximizing the probability difference between neighboring inputs. However, DP-Sniper is specific to pure $\epsilon$-DP, Delta-Siege for ``$\rho-$ordered mechanisms" and all  suffer from high sample complexity (millions of samples) in the batch setting. 

 A significant body of work focuses  on auditing machine learning models, particularly those trained with DP-SGD. This includes membership inference attacks (MIA) \cite{JE19, ROF21, CYZF20, jagielski2020auditing} and data reconstruction attacks \cite{Guo22, Balle22}. These methods often require white-box access to the model and sometimes large portions of the training data, but more importantly, work in the batch setting, requiring significant compute power. In the best case, they require training one full model end-to-end with canaries inserted every other step. For some easy to detect failures this might be wasting resources, and when the test fails, one has to start from scratch and previous samples cannot be reused for significant testing.

\end{document}